\documentclass[11pt]{article}
\usepackage[titletoc,title]{appendix}
\usepackage{amsmath,amsfonts,amssymb,bm,amsthm}
\usepackage[colorlinks,citecolor=blue,linkcolor=blue,pagebackref]{hyperref}
\usepackage{tikz}
\usetikzlibrary{shapes,arrows}
\usetikzlibrary{intersections,shapes.arrows}
\usetikzlibrary{decorations.pathreplacing}
\usetikzlibrary{arrows.meta}
\usepackage{xcolor}
\definecolor{bostonred}{rgb}{0.8, 0.0, 0.0}
\usepackage{hyperref}
\hypersetup{
    colorlinks=true,
    linkcolor=bostonred,
    urlcolor=bostonred,
    citecolor=bostonred
}
\numberwithin{equation}{section}
\newtheorem{theorem}{Theorem}[section]
\newtheorem{lemma}[theorem]{Lemma}
\newtheorem{proposition}[theorem]{Proposition}

\theoremstyle{definition}
\newtheorem{example}[theorem]{Example}
\newtheorem{definition}[theorem]{Definition}

\theoremstyle{remark}
\newtheorem{remark}[theorem]{Remark}
\newtheorem{notation}[theorem]{Notation}

\DeclareFontFamily{U}{mathx}{}
\DeclareFontShape{U}{mathx}{m}{n}{<-> mathx10}{}
\DeclareSymbolFont{mathx}{U}{mathx}{m}{n}
\DeclareMathAccent{\widecheck}{0}{mathx}{"71}

\usepackage{multicol}
\usepackage{longtable}
\usepackage{slashed}
\usepackage{accents}
\usepackage[T1]{fontenc}
\usepackage{pifont}
\usepackage{tikz}
\usepackage{simpler-wick}
\usetikzlibrary{shapes,arrows}
\usetikzlibrary{intersections,shapes.arrows}
\usetikzlibrary{decorations.pathreplacing}
\usetikzlibrary{arrows.meta}
\usepackage{xcolor}

\makeatletter
\newcommand{\rc}{%
  \begingroup
  \tikzset{every path/.append style={draw=red}}%
  \aftergroup\endgroup
  \c
}
\makeatother

\usepackage[top=1in, bottom=1in, left=1in, right=1in]{geometry}
\usepackage{fancyhdr}
\definecolor{amber}{rgb}{1.0, 0.49, 0.0}

\def\mcM{{\mathcal{M}}}

\def\bPsi{{\bar{\Psi}}}
\usepackage{enumerate}
\usepackage{xcolor,cancel}
\usepackage{relsize}
\usepackage{slashed}

\renewcommand{\tilde}{\widetilde}

\usepackage{orcidlink}

\allowdisplaybreaks

\usepackage{pifont}
\renewcommand*{\backrefalt}[4]{%
\ifcase #1 %
No citations%
\or
\ding{43}~p.~#2%
\else
\ding{43}~pp.~#2%
\fi}
\usepackage{soul}

\begin{document}

\title{A perturbative approach to the Wetterich equation for Bosonic and Fermionic interacting fields
}
\author{
Beatrice Costeri\,\orcidlink{0009-0004-6594-5926}\thanks{
Dipartimento di Fisica ``Alessandro Volta'',
Universit\`a degli Studi di Pavia \& INFN, Sezione di Pavia \& INdAM, Sezione di Pavia, 
Via Bassi 6, 
I-27100 Pavia, 
Italia;
beatrice.costeri01@universitadipavia.it.
}
}


\date{}

\maketitle

\vspace{-.6cm}

\begin{abstract}
We study the Lorentzian Wetterich Renormalization Group (RG) flow equation for interacting quantum fields on curved backgrounds within the framework of perturbative Algebraic Quantum Field Theory (pAQFT). Specifically, we consider two classes of models: \textbf{two mutually interacting scalar fields} on globally hyperbolic spacetimes without boundary and, under the further assumption that the underlying background is spin, \textbf{self-interacting Dirac fields}. In both cases, we derive the corresponding RG flow equations within a Local Potential Approximation and compute the beta functions for the relevant couplings. For the scalar model, we also discuss an asymmetric interaction potential which is formally reminiscent of the Martin-Siggia-Rose description of a stochastic dynamics, thereby indicating a possible connection between Lorentzian algebraic RG methods \cite{DAngeloDragoPinamontiRejzner2024} and stochastic field-theoretic models, \cite{Duch_2025}. In addition, we address the local well-posedness of the resulting flow equations. Adapting the strategy detailed in \cite{Dangelo_23} and based on the the Nash-Moser theorem, we prove local existence and uniqueness of solutions for both the scalar and the Dirac models.

\

{\bf Keywords:} Perturbative Algebraic Quantum Field Theory, Wetterich equation

\

{\bf 2020 MSC classes:} 81T05 

\end{abstract}

\tableofcontents

\allowdisplaybreaks

\section{Introduction}\label{Sec: Introduction} 

In this paper, we discuss the local well-posedness of Lorentzian Wetterich-type flow equations for interacting quantum fields on  $d-$dimensional globally hyperbolic spacetimes with empty boundary. More precisely, we consider two classes of models: two mutually interacting scalar fields and self-interacting Dirac fields.

The natural framework for addressing these issues is \textbf{perturbative Algebraic Quantum Field Theory (pAQFT)}, which provides a local and covariant construction of quantum field theories on Lorentzian backgrounds, both in the free and in the interacting case, \textit{cf.} \cite{Brunetti_2009, Rejzner}. In this setting, one starts from a suitable algebra of classical observables, modelled as functionals on the configuration space of the theory, and then deforms its pointwise product into a non-commutative quantum product. For Bosonic fields, the deformation is governed by the causal propagator, namely, the retarded-minus-advanced fundamental solution of the operator ruling the dynamics, and it encodes the canonical commutation relations [CCR]. For Fermionic fields, the construction is performed in the corresponding graded setting:  the deformation is induced by the causal fundamental solution of the Dirac operator, together with the spinorial pairing, and it encodes the canonical anticommutation relations [CAR]. States are then introduced as positive, normalized linear functionals on the quantum algebra. In applications, one usually restricts attention to quasi-free Hadamard states, whose two-point functions have a prescribed microlocal singular structure. This condition is essential in order to define physically relevant observables, such as the stress-energy tensor \cite{Moretti1997, Moretti2003, CosteriDappiaggiGoi2026}, both in the free theory and in interacting models. Interactions are incorporated perturbatively through the Bogoliubov map, which associates to a classical observable its interacting quantum counterpart as a formal power series in the coupling constants. This perturbative framework is particularly natural from the viewpoint of physical applications on Lorentzian spacetimes: it preserves locality and covariance, and, at the same time, it allows for an algorithmic control of the renormalization freedoms of the underlying theory. These features make pAQFT especially well suited for deriving Lorentzian Renormalization Group flow equations, where the r\^ole of the inverse propagator entering the Wetterich equation must be replaced by state-dependent, microlocally well-behaved propagators.

\textbf{Renormalization Group (RG) flow equations} capture in a systematic way how the effective description of a quantum field theory changes with respect to a scale parameter. In Wilson's approach to the Renormalization Group \cite{WilsonKogut1974, Freidel:2025uws}, this flow encodes the progressive transition from microscopic degrees of freedom to macroscopic observables, yielding a coarse-grained picture of the system at low energies. In the functional formulation, the RG flow is implemented at the level of a scale-dependent effective average action $\Gamma_k$, \textit{cf}., \cite{Dupuis_2021, BergesTetradisWetterich2002}. The latter interpolates between the microscopic action at the ultraviolet scale and the full quantum effective action in the infrared regime, and its evolution is governed, in the Euclidean setting, by the \textbf{Wetterich equation} \cite{Wetterich1991, Wetterich1993}. Although the Wetterich equation has the algebraic form of a one-loop trace, the propagator entering its right-hand side is the full scale-dependent propagator, namely the inverse of $\Gamma_k^{(2)}+Q_k$, where $Q_k$ is an infrared regulator, rather than a fixed free propagator. The equation is therefore non-linear in the average effective action $\Gamma_k$ and encodes non-perturbative information. This feature makes the Functional Renormalization Group particularly suited to approximation schemes such as the derivative expansion and the \textbf{Local Potential Approximation} \cite{Dangelo_23, DAngeloDragoPinamontiRejzner2024}, while at the same time allowing for the dynamical generation of interaction terms along the flow \cite{Morris1994,BergesTetradisWetterich2002}. In Lorentzian signature, however, the formulation of RG flow equations requires additional care, since the hyperbolic operator admits canonical retarded and advanced Green's operators, but no distinguished Feynman inverse or vacuum two-point function. Furthermore, the infrared regulator $Q_k$ has to be compatible with covariance, causality and the microlocal structure of quantum states. 

In this paper, following the algebraic approach to quantum field theory \cite{Brunetti_2009} and the Lorentzian formulation of Wetterich-type equations developed in \cite{Dangelo_23, DAngeloDragoPinamontiRejzner2024}, we derive RG flow equations for two classes of interacting theories: two mutually interacting scalar fields on globally hyperbolic, Lorentzian spacetimes $(\mcM, g_{\mcM})$ without boundary and -- under additional requirements on the underlying spin structure of $\mcM$ -- self-interacting Dirac fields. In both scenarios, the resulting flow governs the scale dependence of the corresponding effective interaction, and, within the chosen approximation scheme, yields the beta functions for the relevant couplings. The beta functions obtained in this way are consistent with recent results in the physics literature \cite{JuarezWysozka_2021, BorgesShapiro_2025}.
Beyond the derivation of the beta functions, we also address the local well-posedness of the resulting RG flow equations: adapting the strategy of \cite{Dangelo_23}, we prove the existence and uniqueness of local solutions in both cases by means of the Nash-Moser theorem, \textit{cf.}, \cite{Moser1966a, Moser1966b, Nash1956, Hamilton_1982}.

As an application of the results derived in this work, we succinctly discuss a model of two mutually interacting scalar fields via a potential asymmetric in the fields. At the formal level, this model closely resembles the field-theoretic description of stochastic dynamics obtained through the Martin-Siggia-Rose formalism \cite{Bonicelli_2025}, although it is here considered in a Lorentzian setting. This passage is delicate. Indeed, stochastic equations driven by additive Gaussian white noise are naturally formulated in Euclidean or parabolic frameworks, where the relevant elliptic or parabolic operator selects, under suitable assumptions, a distinguished Green kernel. By contrast, in Lorentzian signature the underlying wave operator admits no unique inverse: retarded, advanced and Feynman-type inverses encode different causal or state-dependent prescriptions. This lack of a preferred inverse is precisely one of the structural reasons why the Lorentzian formulation of RG flow equations requires the tools of perturbative algebraic quantum field theory, where the choice of a Hadamard state and the associated microlocal control replace the Euclidean momentum-space construction. Conversely, even in the Euclidean algebraic approach, the construction of Wick powers, Euclidean products and interacting observables is not automatic, but relies on suitable parametrices and on the control of the associated renormalization ambiguities, as discussed in \cite{Dappiaggi_2020}. 

Singular Euclidean stochastic scalar field theories have recently been analysed via complementary methods, notably paracontrolled calculus and related flow-equation techniques \cite{GubinelliImkellerPerkowski2015,CatellierChouk2018,GubinelliKochOh2024,Duch_2025,DuchGubinelliRinaldi2023}. In particular, in the flow-equation approach one introduces scale-dependent effective force coefficients, whose relevant components play a r\^ole analogous to running couplings in the functional RG picture. However, the precise relation between this stochastic, Polchinski-type description and the Wetterich-type equations obtained within the Lorentzian pAQFT framework remains unclear. In this sense, the possible correspondence between perturbative, algebraic and non-perturbative RG descriptions of stochastic field theories is still largely unexplored.

A further natural direction, not pursued here, is the extension of the present analysis to spacetimes with timelike boundaries. In this setting, boundary conditions affect both the causal propagators and the microlocal structure of admissible states. The analysis of the Lorentzian Wetterich equation in the presence of Dirichlet and Neumann boundary conditions in \cite{DappiaggiNavaSinibaldi2025}, together with the construction of Hadamard states for Robin boundary conditions in \cite{CosteriDappiaggiJuarezAubrySingh2026}, provides a natural starting point for understanding the interplay between boundary conditions and RG flow. This would clarify whether the corresponding Wetterich-type equation admits a local and covariant interpretation compatible with the renormalization freedoms already present in the bulk theory.

\paragraph*{Synopsis --} 
The outline of the paper is the following.

In Section~\ref{Sec: pAQFT for multiple interacting fields}, we review key notions of perturbative algebraic quantum field theory (pAQFT), first in the general setting of arbitrary vector bundles and then specialized to the cases under consideration. Section~\ref{Sec: two scalar fields} is devoted to the construction of both the free and the covariant quantum algebra of local observables for two mutually interacting scalar fields on globally hyperbolic spacetimes without boundary. In addition, in Section~\ref{Sec: Wetterich for two scalar fields}, we derive the Renormalization Group flow equations within the Local Potential Approximation framework from the Wetterich equation. As a relevant application of the formalism developed in this section, we consider an asymmetric interaction potential reminiscent of stochastic fields in the Martin-Siggia-Rose formalism, \textit{cf}., Section \ref{Sec: MSR fields}. 
Section~\ref{Sec: Thirring model} is devoted to the analysis of the algebraic framework for self-interacting Dirac fields, where anticommutativity is encoded in the properties of the deformation map. In this setting, we also derive the Wetterich equation and determine the scaling behaviour of the relevant couplings in three dimensions, \textit{cf.}, Section \ref{Sec: Wetterich equation for the Thirring model}.

In Section~\ref{Sec: Existence of local solutions to the Renormalization Group flow equations}, we establish the local existence of solutions to the Wetterich equation via the Nash-Moser Theorem~\ref{Thm: Nash-Moser}, both for two mutually interacting scalar fields, \textit{cf.}, Section \ref{Sec: (A) Two mutually interacting scalar fields} and for Dirac fields, \textit{cf.}, Section \ref{Sec: (B) Self-interacting Dirac fields}.

Appendix~\ref{Sec: Appendix A} collects the necessary background material for the application of the Nash-Moser theorem to the Renormalization Group flow equations, \textit{e.g.}, the notion of tame maps between graded Fréchet spaces, see Definition \ref{Def: tame map}.

\section{A pAQFT approach to flow equations for interacting fields}
\label{Sec: pAQFT for multiple interacting fields}
The aim of this section is to derive the Lorentzian Wetterich-type RG equation for two quantum field theoretic models: two mutually interacting scalar fields and self-interacting Dirac fields, discussed respectively in Sections~\ref{Sec: two scalar fields} and~\ref{Sec: Thirring model}. To this avail, we first fix the geometric setting and recall the general structures underlying perturbative Algebraic Quantum Field Theory (pAQFT), \cite{Brunetti_2009, brunetti2, Rejzner}.

Let $(\mcM, g_\mcM)$ be a $d-$dimensional globally hyperbolic spacetime, $d \ge 2$, with $g$ a Lorentzian metric of signature $(1, d-1)$, \textit{cf}., \cite{HawkingEllis1973}. Consider thereon some finite-rank vector bundle $E \equiv E[\mcM, \pi, V]$, where the fibre $V$ is a finite-dimensional vector space, \textit{i.e.}, $\text{dim} (V) = n < \infty$. We denote by $\mathcal{E}(E) := \Gamma^{\infty}(E)$ the \textbf{configuration space} of the field theory under scrutiny, corresponding to the space $C^\infty(E \leftarrow \mcM)$ of smooth sections of the bundle $E$. 

\begin{example} We now present some illustrative examples of configuration spaces.
    \begin{itemize}
        \item[(i)] Consider $n \in \mathbb{N}$ real or complex scalar fields on $(\mcM, g_\mcM)$. In this case, $E \equiv E[\mcM, \pi, \mathbb{K}^n]$ is the trivial $\mathbb{K}$-vector bundle, where $\mathbb{K}$ denotes either $\mathbb{R}$ or $\mathbb{C}$, over $\mcM$, \textit{i.e.}, $E$ is globally diffeomorphic to $\mcM \times \mathbb{K}^n$. Smooth sections of this bundle $\mathcal{E}(E) := \Gamma^{\infty}(E) \simeq C^{\infty}(\mcM) \otimes \mathbb{K}^n$ correspond to \emph{field multiplets}, where $\otimes$ denotes the algebraic tensor product. In particular, for a real scalar field, $E \simeq \mcM \times \mathbb{R}$. In this case, the kinematical field configurations are identified with smooth, real-valued functions on $\mcM$, that is, $\mathcal{E}(\mcM) := \Gamma^{\infty}(\mcM \times \mathbb{R}) \simeq C^{\infty}(\mcM, \mathbb{R})$. The case of two mutually interacting scalar fields is discussed in Section~\ref{Sec: two scalar fields}. 
        \item[(ii)] Under the additional assumptions that $\mcM$ is spin with a trivial spin bundle $S\mcM$, the r\^ole of $E$ is played the Dirac bundle $D\mcM = S\mcM \times_{\sigma_d} \Sigma_d$, where $\Sigma_d \simeq \mathbb{C}^{N_d}$ is the \emph{spinor space}, whilst $\sigma_d: \text{Spin}(d) \rightarrow GL(\Sigma_d)$ is the \emph{spinor representation} -- see . Spinor fields are smooth sections of the Dirac bundle, \textit{i.e.}, $\Gamma^{\infty}(D\mcM) \simeq C^{\infty}(\mcM, \Sigma_d)$ -- see Section \ref{Sec: Thirring model} for additional details. 
        \item[(iii)] Let $\mathrm{G}$ be some compact Lie group and denote by $(\mathfrak{g}, [\cdot, \cdot]_\mathfrak{g})$ the corresponding Lie algebra. For Yang-Mills theories with a trivial principal bundle $\mathrm{P} \simeq \mathcal{M} \times \mathrm{G}$, the adjoint bundle $ \text{Ad}(\mathrm{P}) \simeq \mcM \times \mathfrak{g}$ plays the r\^ole of the vector bundle $E$. In this setting, the space of field configurations coincides with the space of $\mathfrak{g}-$valued $1-$forms on $\mcM$, \textit{i.e.}, $\mathcal{E}(\text{Ad}(\mathrm{P})) := \Omega^1 (\mcM, \text{Ad}(\mathrm{P})) \simeq \Omega^1 (\mcM, \mathfrak{g})$  -- see \cite[Ch. 7]{Rejzner}. 
    \end{itemize}
\end{example}

The configuration space $\mathcal{E}(E)$ can be equipped with a Fréchet topology induced by a suitable family of seminorms. Since $\mathcal{M}$ is a smooth, second-countable Hausdorff manifold, it is paracompact and, hence, it admits a Riemannian metric $g_R$. On the vector bundle $E$, we fix a fibre metric $h$ and a compatible connection $\nabla^E$. Let $\{K_m\}_{m \in \mathbb{N}}$ be an exhaustion of $\mathcal{M}$ by compact sets. For $j,m \in \mathbb{N}$, define
\begin{equation}
\label{Eq: seminorms on the configuration space}
    p_{j,m}(s) := \sup_{x \in K_m} \|\nabla^j s(x)\|_{g_R,h}, \quad s \in \Gamma^\infty(E),
\end{equation}
where the norm is induced by $g_R$ and $h$. The seminorms $\{p_{j,m}\}_{j,m}$ generate a locally convex topology $\tau$ on $\mathcal{E}(E)$, with respect to which $\Gamma^\infty(E)$ is a nuclear Fréchet space, see, \textit{e.g.}, \cite[Ch.~10]{Treves_1967}, in particular Example I for the analysis of the trivial bundle case.

In this context, physically relevant observables can be modelled -- both at a classical and quantum level -- as smooth functionals from the configuration space $\mathcal{E}(E)$ of the underlying theory to the complex field $\mathbb{C}$. The reason for considering complex-valued objects, as opposed to real-valued ones, lies in the fact that the quantization procedure, even for real field theories, inherently involves the introduction of a complex structure. 
Henceforth, we shall confine our attention to those complex-valued functionals $F: \mathcal{E}(E) \rightarrow \mathbb{C}$, which are smooth in the Bastiani sense, \textit{cf.}, \cite{Bastiani_1964}, \cite{Hamilton_1982}, \cite{Milnor_1984}, \cite{Nee_2006}. In particular, to define the functional derivatives of such $F$, we shall adopt the Michal-Bastiani notion of differentiability, see \cite{Michal_1938} and \cite{Michor_1984}. Following slavishly \cite[Ch. 3]{Rejzner}, we denote by \begin{equation}
    \label{Eq: space of functionals on E}
    \mathcal{F}(E) := \{F: \mathcal{E}(E) \to \mathbb{C} \, | \, F \, \text{is Bastiani smooth in the sense of \cite[Def. 3.4]{Rejzner}} \},
\end{equation}
the \textbf{space of functionals} of the underlying field theory.

First and foremost, we need to equip the space $\mathcal{F}(E)$ with a suitable notion of \emph{localisation} of functionals on the spacetime manifold $\mathcal{M}$. The underlying motivation is that physical observables are intrinsically local. Therefore, in the framework of algebraic quantum field theory, one aims to associate algebraic structures to bounded regions of spacetime in a way compatible with causality, \textit{cf.}, \cite{brunetti2}, \cite{Haag_1996}, \cite{Araki_1999}. In this regard, it is essential to distinguish functionals according to the region of spacetime on which they effectively depend. This allows one to determine whether a given observable is localized in a prescribed region and, consequently, to organize $\mathcal{F}(E)$ into a net of local algebras $\mathcal{O} \mapsto \mathcal{A}(\mathcal{O})$ indexed by suitable spacetime domains $\mathcal{O} \subset \mathcal{M}$. From a functional analytic viewpoint, the localisation properties of functionals also play a crucial r\^ole in ensuring that algebraic operations remain compatible with the microlocal regularity conditions required for their well-definiteness, see \cite{Hormander_1990},\cite{Brunetti_1996}, \cite{Rejzner}. To this end, we introduce the notion of \emph{spacetime support} of functionals. 

\begin{definition}
\label{Def: spacetime support}
    Let $F \in \mathcal{F}(E)$ be as per Equation \eqref{Eq: space of functionals on E}. We call \emph{spacetime support} of $F$ the set 
    \begin{equation}
        \label{Eq: spacetime support}
        \text{supp}(F) := \{ x \in \mcM \, | \, \forall \mathcal{O}_x, \exists \eta, \theta \in \mathcal{E}(E), \text{supp}(\theta) \subset \mathcal{O}_x \, \text{such that} \, F(\eta + \theta) \ne F(\eta) \}, 
    \end{equation}
where $\mathcal{O}_x \subset \mcM$ is a neighbourhood of $x$ in $\mcM$. 
\end{definition}

\begin{remark} 
Notice that the above definition admits a natural generalisation to Fréchet derivatives of functionals, which can be regarded as elements of $ \mathcal{D}'(\mcM^k, (E^*)^{\boxtimes k})$ for any $k \in \mathbb{N}$, where $E^*$ denotes the dual bundle to $E$ -- see also Definition \ref{Def: microcausal functionals}. 
\end{remark}

To avoid unnecessary complications, from now on we abandon the general bundle setting and focus instead on the concrete models considered in this work: two mutually interacting scalar fields, see Section~\ref{Sec: two scalar fields}, and self-interacting Dirac fields, \textit{cf.}, Section~\ref{Sec: Thirring model}. This restriction will be in place for both the construction of the free and interacting algebras of observables. A more thorough treatment of the general framework can be found in \cite{Rejzner}. The same distinction will be retained in the derivation of the corresponding Lorentzian Wetterich-type equations in Sections \ref{Sec: Wetterich for two scalar fields} and \ref{Sec: Wetterich equation for the Thirring model}, in the spirit of the construction developed in \cite{Dangelo_23}. 

\subsection{(A) Two mutually interacting scalar fields}
\label{Sec: two scalar fields}
On top of $(\mcM, g_\mcM)$, consider two real, mutually interacting, scalar fields $\varphi_1, \varphi_2 \in C^{\infty}(\mcM; \mathbb{R})$. The full action of the theory is thus specified by
\begin{equation}
\label{Eq: action 2 scalar fields}
    I[\varphi_1, \varphi_2] := - \int_{\mcM} d\mu_x \, \left(\frac{1}{2}\left[\sum_{\ell = 1,2} g^{\mu \nu}(x) \partial_{\mu} \varphi_{\ell} (x) \partial_{\nu} \varphi_{\ell}(x) + (m_\ell^2 + \xi_{\ell}R) \varphi_{\ell}^2(x) \right] + V(\varphi_1, \varphi_2)(x) \right) \, f(x), 
\end{equation}
where $d\mu_x$ denotes the metric-induced volume measure on $\mcM$ and $f \in C^{\infty}_0(\mcM)$. In addition, we assume that the two fields have different masses, namely $m_1 \ne m_2$. In Equation \eqref{Eq: action 2 scalar fields}, the functional $V(\varphi_1, \varphi_2)$ is assumed to have a polynomial dependence in the fields. In particular, we shall consider a symmetric interacting potential of the form 
\begin{equation}
    \label{Eq: potential 2 scalar field}
    V_h(\varphi_1, \varphi_2) := \int_{\mcM} d\mu_x \, \left[ \frac{\lambda_1}{4!} \varphi^4_1(x) + \frac{\lambda_2}{4!} \varphi^4_2(x) + \frac{\lambda_3}{4} \varphi_1^2(x) \varphi_2^2(x)\right] h(x), \quad \lambda_1, \lambda_2 > 0, \, \lambda_1 \lambda_2 > 9 \lambda_3^2. 
\end{equation}
where $h \in C^{\infty}_0(\mcM)$ is an adiabatic cut-off needed to avoid infrared divergences. The assumptions $\lambda_1,\lambda_2>0$ and 
$\lambda_1\lambda_2>9\lambda_3^2$ ensure that the quartic part of the 
potential in Equation~\eqref{Eq: potential 2 scalar field} is bounded from 
below. Indeed, setting $\rho_1 := \frac{1}{2} \varphi_1^2$ and 
$\rho_2 := \frac{1}{2} \varphi_2^2$, and disregarding the smooth cut-off, the integrand 
can be written as
\begin{equation*}
    V(\rho_1,\rho_2)(x)
    =
    \frac{1}{3!}
    \left(
        \lambda_1 \rho_1^2(x)
        +6\lambda_3 \rho_1(x)\rho_2(x)
        +\lambda_2 \rho_2^2(x)
    \right).
\end{equation*}
The quadratic form in the variables $(\rho_1,\rho_2)$ has an associated matrix 
\begin{equation*}
     \begin{pmatrix}
        \lambda_1 & 3\lambda_3\\
        3\lambda_3 & \lambda_2
    \end{pmatrix}.
\end{equation*}
The stated assumptions are precisely the conditions ensuring that this matrix 
is positive-definite.

\begin{remark}
We stress that the splitting of the total action as the sum of free and interacting contributions in Equation \eqref{Eq: action 2 scalar fields} carries an intrinsic degree of arbitrariness. Indeed, another admissible choice would be to include into the interacting potential also the mass terms, which are quadratic in the fields. Yet, on account of the principle of perturbative agreement -- see \cite{DragoHackPinamonti2017}, such a choice does not affect the final result.   
\end{remark}

\begin{remark}
\label{Rmk: field doublet}
Note that the two scalar fields $\varphi_1, \varphi_2 \in C^{\infty}(\mcM)$ can equivalently be regarded as a single smooth section $\Phi \in \Gamma^{\infty}(\mcM \times \mathbb{R}^2) \simeq C^{\infty}(\mcM; \mathbb{R}^2)$ of the trivial vector bundle $\mcM \times \mathbb{R}^2$. 
\end{remark}

\begin{remark}
\label{Rmk: coupling}
    In Equation \eqref{Eq: action 2 scalar fields} we have implicitly assumed that the two fields are coupled only by the interaction term. Working in full generality, we could also have incorporated into the theory a mass term of the form $m_3^2 \varphi_1 \varphi_2$. Indeed, being such a contribution of zeroth order, it does not affect the Green hyperbolicity of the Klein-Gordon operator. Yet, dealing with this term at the level of the free dynamics yields several technical complications, as it prevents the decoupling of the free equations of motion -- see Equation \eqref{Eq: free dynamics 2 scalar fields}. To bypass this hurdle, we can invoke the \emph{principle of perturbative agreement} \cite{DragoHackPinamonti2017} to include it in the interacting potential. Nonetheless, as will be shown in detail in Section \ref{Sec: Wetterich for two scalar fields}, the presence of such a mixed term precludes the choice of a diagonal infrared regulator and, hence, it further complicates the derivation of the Wetterich equation. 
\end{remark}

As a preliminary step towards the construction of a suitable algebra of observables for the theory described by Equation \eqref{Eq: action 2 scalar fields}, we begin by fixing notations and conventions once and for all. Let $\mathcal{E}(\mcM):= C^{\infty}(\mcM; \mathbb{R})$. We denote by
\begin{equation*}
   \mathcal{E}_2 (\mathcal{M}) := \mathcal{E}(\mathcal{M}) \oplus \mathcal{E}(\mathcal{M}), 
\end{equation*}
the \emph{configuration space} of the theory, which is a nuclear, Fréchet space equipped with the product topology, \textit{i.e.}, the Fréchet topology generated by the family of seminorms:
$$p_{n, K}^{(2)} (\varphi,\varphi') := \max \{p_{n, K} (\varphi) + p_{n, K} (\varphi')\}, \quad \text{for any compact set} \, K \subset \mcM \,  \text{and for} \, \varphi,\varphi' \in \mathcal{E}(\mcM),$$ where $$p_{n, K}(\varphi) := \sup_{x \in K} \max_{m \le n} |\varphi^{(n)}(x)|.$$ 

\noindent We can now introduce the key objects of our analysis. 

\begin{definition}[Space of functionals]
\label{Def: space of functionals 2 scalar fields}
In the aforementioned setting, we call \emph{space of functionals} of the quantum field theory described by Equation \eqref{Eq: action 2 scalar fields}, the space
\begin{equation}
    \label{Eq: MSR space of functionals}
    \mathcal{F}(\mathcal{M}) := \{ F: \mathcal{E}_2 (\mathcal{M}) \rightarrow \mathbb{C} \, | \, F \, \text{is smooth in the Bastiani sense} \, \},
\end{equation}
see \cite{Rejzner}. 
\end{definition}

\begin{remark}
An element $u \in \mathcal{F}(\mcM)$ should actually be regarded a functional 
$$F: \mathcal{D}(\mcM) \times \mathcal{E}_2(\mcM) \rightarrow \mathbb{C}, \quad (f; \varphi_1, \varphi_2) \mapsto F(f; \varphi_1, \varphi_2),$$ which is linear in the first entry and smooth with respect to the locally convex topology of $\mathcal{E}_2(\mcM)$. Henceforth, with a slight abuse of notation, the dependence on the smooth, compactly supported test function will be left implicit and we shall still denote by $F$ the functional $F_f := F(f; \cdot, \cdot)$. 
\end{remark}

The space of functionals $\mathcal{F}(\mcM)$ can be also endowed with a notion of Fréchet derivative as clarified by the following definition. 

\begin{definition}[Fréchet derivatives]
\label{Def: Frechet derivatives}
Let $F \in \mathcal{F}(\mcM)$, where $\mathcal{F}(\mcM)$ is the space of functionals as per Definition \ref{Def: space of functionals 2 scalar fields}. We call \emph{Fréchet derivative} of $F$ of order $(k_1, k_2) \in \mathbb{N}_0 \times \mathbb{N}_0$, the functional-valued distribution $F^{(k_1, k_2)} \in \mathcal{E}'(\mcM^{k_1+k_2}; \mathcal{F}(\mcM))$ defined by
\begin{flalign}
    \label{Eq: Frechet derivatives}
    \notag & F^{(k_1, k_2)}: \underbrace{\mathcal{E}(\mcM) \otimes \ldots \otimes \mathcal{E}(\mcM)}_{k_1 - \text{times}} \otimes \underbrace{\mathcal{E}(\mcM) \otimes \ldots \otimes \mathcal{E}(\mcM)}_{k_2 - \text{times}}  \times \mathcal{E}_2(\mcM) \longrightarrow \mathbb{C}, \\ \notag
    F(\eta_1 \otimes & \ldots \otimes \eta_{k_1} \otimes \eta'_{1} \otimes \ldots \otimes \eta'_{k_2}; \varphi_1, \varphi_2) = \\ &\frac{\partial^{k_1+k_2}}{\partial s_1 \ldots \partial s_{k_1} \partial s'_{1} \ldots \partial s'_{k_2}} F(\varphi_1 + s_1 \eta_1 + \ldots + s_{k_1} \eta_{k_1}; \varphi_2 + s'_1 \eta'_1 + \ldots + s'_{k_2} \eta'_{k_2}) \big \vert_{\substack{s_1 = \ldots = s_{k_1} = 0 \\ s'_1 = \ldots = s'_{k_2} = 0}}. 
\end{flalign}
\end{definition}

\begin{remark}[Polynomial functionals]
\label{Rmk: polynomial functionals}
If a functional $F \in \mathcal{F}(\mcM)$ has only a finite number of non-vanishing Fréchet derivatives, $F$ is said to be of \emph{polynomial} type. The space of polynomial functionals over $\mcM$ will be denoted by $\text{Pol}(\mcM).$
\end{remark}

\begin{example}
\label{Ex: fun 2 scalar fields}
Notable examples of polynomial functionals which will play a crucial r\^ole in the following are: 
\begin{flalign}
    \boldsymbol{1} [f] (\varphi_1, \varphi_2) &:= \int_{\mcM} d\mu_x \, f(x), \\
    \label{Eq: Phi1, Phi2}
    \Phi_1[f] (\varphi_1, \varphi_2) &:= \int_{\mcM} d\mu_x \, \varphi_1(x) f(x), \quad \Phi_2[f] (\varphi_1, \varphi_2) := \int_{\mcM} d\mu_x \, \varphi_2(x) f(x),
\end{flalign}
where $(\varphi_1, \varphi_2) \in \mathcal{E}_2(\mathcal{M})$ and $f \in \mathcal{D}(\mathcal{M})$. A direct computation entails that $$(\Phi_1)^{(1, 0)}[f] (\eta_1; \varphi_1, \varphi_2) = \int_{\mcM} d\mu_x \, \eta_1(x) f(x), \quad (\Phi_2)^{(0, 1)}[f] (\eta_2; \varphi_1, \varphi_2) = \int_{\mcM} d\mu_x \, \eta_2(x) f(x),$$ all higher functional derivatives being identically vanishing. 
Particularly relevant in view of the ensuing discussion are also the functionals quadratic in the fields, \textit{i.e.}, 
\begin{flalign}
\label{Eq: Phi1^2}
(\Phi_1^2) [f] (\varphi_1, \varphi_2) &:= \int_{\mcM} d\mu_x \, \varphi_1^2(x) f(x), \\ \label{Eq: Phi2^2}
(\Phi_2^2) [f] (\varphi_1, \varphi_2) &:= \int_{\mcM} d\mu_x \, \varphi_2^2(x) f(x),\\ 
\label{Eq: Phi1Phi2}
      (\Phi_1 \Phi_2) [f] (\varphi_1, \varphi_2) \equiv (\Phi_2 \Phi_1)[f] (\varphi_1, \varphi_2) &:= \int_{\mcM} d\mu_x \, \varphi_1(x) \varphi_2(x) f(x). 
\end{flalign}
\end{example}

\noindent Observe that we can endow $\mathcal{F}(\mcM)$ with an algebra structure by defining thereon a notion of \emph{pointwise product}: for any $F, G \in \mathcal{F}(\mcM)$ and for any pair of field configurations $(\varphi_1, \varphi_2) \in \mathcal{E}(\mcM)$,
\begin{equation}
    \label{Eq: pointwise prod}
 (F \cdot G) (\varphi_1, \varphi_2) := \iota^*[F(\varphi_1, \varphi_2) \otimes G(\varphi_1, \varphi_2)], 
\end{equation}
where $\iota: \mathcal{E}_2(\mcM) \rightarrow \mathcal{E}_2(\mcM) \times \mathcal{E}_2(\mcM), (\varphi_1, \varphi_2) \mapsto ((\varphi_1, \varphi_2),(\varphi_1, \varphi_2))$ is the diagonal map on the configuration space. 

At the present stage, it would be tempting to consider as the classical $*$-algebra of observables the algebra of functionals on $\mcM$, equipped with the pointwise product as per Equation \eqref{Eq: pointwise prod}. At this level, there is no actual obstruction in making this choice. Yet, recall that in the algebraic formalism, the quantum information carried by the statistical correlations functions is encoded in the classical algebraic framework via a deformation of the algebra product. More precisely, this leads to considering pointwise products among derivatives of functionals lying in $\mathcal{F}(\mcM)$ and suitable distributional kernels with a non-trivial singular structure. On account of \cite[Thm. 8.2.10]{Hormander_1990}, such products are \emph{a priori} not necessarily well-defined. This issue can be avoided by focusing on a notable subclass of functionals possessing the sought-after microlocal behaviour. 

\begin{definition}[Microcausal functionals]
\label{Def: microcausal functionals}
Let $k := k_1 + k_2 \in \mathbb{N}_0$. We call \emph{microcausal functionals} the elements of
\begin{equation}
    \label{Eq: microcausal functionals}
    \mathcal{F}_{\mu c}(\mcM) := \{ F \in \mathcal{F}(\mcM) \, \vert \, \text{WF}(F^{(k_1, k_2)}(\varphi_1, \varphi_2)) \subseteq G_{k}(\mcM), \forall k_1, k_2 \in \mathbb{N}_0\}, 
\end{equation}
where 
\begin{equation*}
    G_{k_1 + k_2} (\mcM) := T^*(\underbrace{\mcM \times \ldots \times \mcM}_{k-\text{times}} \setminus \left( \bigcup_{x \in \mcM} (V^+_x)^k \cup \bigcup_{x \in \mcM} (V^-_x)^k \right)). 
\end{equation*}
In particular, we shall denote the space of polynomial microcausal functionals by $\text{Pol}_{\mu c}(\mcM):=\mathcal{F}_{\mu c} (\mcM) \cap \text{Pol}(\mcM)$. 
\end{definition}

Furthermore, we can identify two notable subclasses of microcausal functionals: 
\begin{itemize}
    \item[\ding{103}] \textit{Regular functionals}: 
    \begin{equation*}
        \label{Eq: regular fun 2 scalar fields}
        \mathcal{F}_{reg}(\mcM) := \{ F \in \mathcal{F}_{\mu c}(\mcM) \, \vert \, F^{(k_1, k_2)}(\varphi_1, \varphi_2) \in \mathcal{D}(\mcM^{k_1 + k_2}), \forall k_1, k_2 \in \mathbb{N}_0 \}.
    \end{equation*}
   In particular, we denote by $\text{Pol}_{reg}(\mcM) := \text{Pol}(\mcM) \cap \mathcal{F}_{reg}(\mcM)$.
    \item[\ding{103}] \textit{Local functionals}: 
       \begin{equation*}
        \label{Eq: local fun 2 scalar fields}
        \mathcal{F}_{loc}(\mcM) := \{ F \in \mathcal{F}_{\mu c}(\mcM) \, \vert \, \text{supp}(F^{(k_1, k_2)}(\varphi_1, \varphi_2)) \in \text{Diag}_{k_1 + k_2}(\mcM), \forall k_1, k_2 \in \mathbb{N}_0 \},
    \end{equation*}
    where $\text{supp}(F) := \{x \in \mcM \, \vert \, \forall U \in \mathcal{N}_x, \exists \eta, \theta \in \mathcal{E}_2(\mcM): \text{supp}(\eta) \subset U, F(\eta + \theta) \ne F(\theta)\}$, $\mathcal{N}_x$ being a collection of open neighbourhoods of $x$, is the spacetime support of $F$, while $\text{Diag}_k(\mcM) :=\{ (x, \ldots, x) \in \mcM^k \, \vert \, x \in \mcM\}$ is the total diagonal of $\mcM^k$. The space of local functionals which are in addition polynomials in the fields is denoted by $\text{Pol}_{reg}(\mcM) := \text{Pol}(\mcM) \cap \mathcal{F}_{loc}(\mcM).$
\end{itemize}

\begin{example}
Notable examples of polynomial local functionals are $\Phi_{\ell}^2[f]$ for $\ell = 1,2$ and the pointwise product $(\Phi_1 \Phi_2)[f]$, as one can check by direct inspection -- see Example \ref{Ex: fun 2 scalar fields}. In general, all physically relevant observables -- such as the stress-energy tensor of the underlying theory -- belong to this class. 
\end{example}

We now possess all the ingredients to characterise the classical algebra of observables. Heuristically speaking, since the interaction term is per assumption polynomial in the fields, such an algebra should contain any linear combinations of functionals with the desired microlocal behaviour, which are additionally of polynomial type as per Remark \ref{Rmk: polynomial functionals}.

\begin{definition}[Classical microcausal algebra]
Let $(\mcM, g_\mcM)$ be a $d-$dimensional globally hyperbolic, Lorentzian spacetime with $d \ge 2$. We call \emph{classical microcausal algebra} over $\mcM$, and we denote it by $\mathcal{A}_{cl}$, the commutative, unital $*-$algebra of microcausal functionals $\mathcal{F}_{\mu c}(\mcM)$, endowed with the pointwise product in Equation \eqref{Eq: pointwise prod} and with an involution given by the complex conjugation.    
\end{definition}

To encompass into the classical framework the information of the correlation functions, we shall introduce a suitable deformation of the algebra structure. This procedure is known as \emph{deformation quantization} and it consists in deforming the algebra product by means of a suitable bi-distribution. To identify the bi-distribution which will play this r\^ole, it is convenient to consider the dynamics of the free fields: 
\begin{equation}
    \label{Eq: free dynamics 2 scalar fields}
    \begin{cases}
           P_{0, 1} \varphi_1 := (-\Box_g + m_1^2 + \xi_1 R) \varphi_1 = 0 \\
            P_{0, 2} \varphi_2 := (-\Box_g + m_2^2 + \xi_2 R) \varphi_2 = 0. 
    \end{cases}    
\end{equation}
Observe that, in view of Remark \ref{Rmk: coupling}, the free dynamics is described by a system of uncoupled Green hyperbolic equations. Hence, there exists unique advanced $(A)$ and retarded $(R)$ fundamental solutions for the Klein-Gordon operators $P_{0, \ell}$, $\ell = 1,2$, which we shall denote by $\Delta_{A/R, \ell}$, $\ell = 1,2$, respectively. Moreover, for each $\ell =1,2$, let $\Delta_{+, \ell} \in \mathcal{D}'(\mcM \times \mcM)$ be the associated (global) Hadamard two-point functions. As a by-product of the factorization of the free field dynamics, no mixing between the two fields can occur. More precisely, in view of Remark \ref{Rmk: field doublet}, we can write the correlations of the theory as a diagonal matrix-valued bi-distribution
\begin{equation}
\label{Eq: Delta +}
    \mathbf{\Delta}_+ := \begin{pmatrix}
        \Delta_{+, 1} & 0 \\
        0 & \Delta_{+, 2}
    \end{pmatrix} \in \mathcal{D}'(\mcM \times \mcM; \text{Mat}(2; \mathbb{C})).
\end{equation} 
Hence, in considering a suitable deformation of the pointwise product, a natural choice for the deforming bi-distributions is represented by the Hadamard two-point functions $\Delta_{\ell, +}$, $\ell = 1,2$. For any $F, G \in \mathcal{F}_{\mu c}(\mcM)$ and for each $\ell = 1,2$, we define the \emph{deformation maps} as 
\begin{equation*}
    D_{\hbar \Delta_{+, \ell}} := \langle \hbar \Delta_{+, \ell}, \frac{\delta}{\delta \varphi_{\ell}} \otimes \frac{\delta}{\delta \varphi_{\ell}} \rangle = \int_{\mcM} d\mu_x d\mu_y \, \hbar \Delta_{+, \ell}(x,y)  \frac{\delta}{\delta \varphi_{\ell}(x)} \otimes \frac{\delta}{\delta \varphi_{\ell}(y)},
\end{equation*}
and we introduce the \emph{deformed product} acting as follows: 
\begin{equation}
    \label{Eq: deformed product 2 scalar fields}
    (F \star G)(\varphi_1, \varphi_2) := \mathsf{M} \circ e^{\sum_{\ell = 1,2} D_{\hbar \Delta_{+, \ell}}} \left[F(\varphi_1, \varphi_2) \otimes G(\varphi_1, \varphi_2)\right],
\end{equation}
for any pair of field configurations $(\varphi_1, \varphi_2) \in \mathcal{E}_2(\mcM)$, where $\mathsf{M}$ is the pull back on the space $\mathcal{F}_{\mu c}(\mcM) \otimes \mathcal{F}_{\mu c}(\mcM)$ via the diagonal map $\iota: \mathcal{E}_2 (\mcM) \rightarrow \mathcal{E}_2 (\mcM) \times \mathcal{E}_2 (\mcM)$, $\iota(\Phi) = (\Phi, \Phi)$, $\Phi$ denoting the field doublet as per Remark \ref{Rmk: field doublet}. 

\begin{remark}
Notice that, in Equation \eqref{Eq: deformed product 2 scalar fields}, we could have equivalently written the deformation map as
\begin{equation}
    \sum_{\ell = 1,2} D_{\hbar \Delta_{+, \ell}} := D_{\hbar \, \text{tr}_{\mathbb{C}^2} \mathbf{\Delta}_+}, 
\end{equation}
where $\text{tr}_{\mathbb{C}^2}(\mathbf{\Delta}_{+})$ denotes the trace taken with respect to the matrix indices of $\mathbf{\Delta}_{+}$ -- see Equation \eqref{Eq: Delta +}.
\end{remark}

\noindent Equation \eqref{Eq: deformed product 2 scalar fields} can also be recast in the form of a formal power series in $\hbar$: 
\begin{equation*}
    F \star G:= F \cdot G + \sum_{k := k_1 + k_2 \ge 1} \frac{\hbar^k}{k_1! k_2!} \langle F^{(k_1, k_2)}, \Delta_{+, 1}^{\otimes k_1} \Delta_{+,2}^{\otimes k_2} G^{(k_1, k_2)} \rangle, 
\end{equation*}
where we have left the dependence on the test function and on the field configurations implicit for notational ease. 

\begin{remark}
The choice of a deformation map constructed from the matrix trace of $\boldsymbol{\Delta}_+$ is motivated by the desire to contract only fields of the same type. This reflects the physical requirement that correlations arise between fields of the same species, while at the free level they remain independent and are coupled only through the interaction.
\end{remark}


We are now in the position to define the quantum counterpart of the algebra $\mathcal{A}_{cl}$. 


\begin{definition}[Quantum algebra of microcausal functionals]
Let $\mathbf{\Delta}_+ \in \mathcal{D}'(\mcM \times \mcM; \text{Mat}(2; \mathbb{C}))$ be as per Equation \eqref{Eq: Delta +}, where $\Delta_{+, \ell}, \ell = 1,2$ are the Hadamard two-point functions of the underlying scalar theory. We call \emph{quantum algebra of microcausal (polynomial) functionals} the unital, commutative $*-$algebra $(\text{Pol}_{\mathbf{\Delta}_+} (\mcM), \star, \ast)$. 
\end{definition}

\begin{remark}
We recall that in this setting the expectation value of any observable $A \in \text{Pol}_{\mathbf{\Delta}_+}$ is codified by the evaluation at vanishing fields, that is, $\langle A \rangle = A \vert_{\Phi = 0}$. 
\end{remark}

\begin{definition}[Local and covariant algebra]
\label{Def: local and covariant}
We call \emph{local and covariant algebra of microcausal polynomial functionals} the algebraic closure of $\text{Pol}_{loc}(\mcM)$ under the product $\star_{H}$, where the deformation is implemented by $\text{tr}_{\mathbb{C}^2} \boldsymbol{H}$ with $\boldsymbol{H} \in \mathcal{D}'(\mcM \times \mcM; \text{Mat}(2; \mathbb{C}))$ such that $\boldsymbol{\Delta}_+ - \boldsymbol{H} \in C^{\infty}(\mcM \times \mcM; \text{Mat}(2; \mathbb{C}))$.     
\end{definition}

\noindent Additionally, we define the abstract Wick ordering of an observable $A \in \text{Pol}(\mcM)$ as 
\begin{equation}
\label{Eq: alpha H}
    : A:_{\boldsymbol{H}} := e^{-\frac{1}{2} \text{tr}_{\mathbb{C}^2} \langle \boldsymbol{H}, \frac{\delta^2}{\delta \Phi(x) \delta\Phi(y)} \rangle} A =: \alpha_{- \boldsymbol{H}} (A), 
\end{equation}
where the pairing $\langle \cdot \rangle$ is here defined on $L^2(\mcM) \otimes \mathbb{C}^2$ and the trace is taken with respect to the $\mathbb{C}^2-$indices. The following definition clarifies how to relate this construction to the concrete Wick ordering commonly employed in quantum field theory. 

\begin{definition}\label{Def: Concrete Wick ordering}
     Let $\mathrm{Pol}(\mathcal{M})$ be as per Definition \ref{Def: local and covariant}. Given a (global) Hadamard state $\mathbf{\Delta}_+\in\mathcal{D}^\prime(\mathcal{M}\times\mathcal{M})$ as in Equation \eqref{Eq: Delta +}, we call operator ordering map of $\mathrm{Pol}(\mathcal{M})$ in $\mathrm{Pol}_{\mathbf{\Delta}_+}(\mathcal{M})$ the application 
    $$\alpha_{\mathbf{\Delta}_+}:\mathrm{Pol}(\mathcal{M})\to\mathrm{Pol}_{\mathbf{\Delta}_+}(\mathcal{M})$$
    where $\alpha_{\mathbf{\Delta}_+}$ is as per Equation \eqref{Eq: alpha H} with $\boldsymbol{H}$ replaced by $\boldsymbol{\Delta}_+$. This is a $*$-isomorphism, namely, for all $F,G\in\mathrm{Pol}_{\mathbf{\Delta}_+}(\mathcal{M})$,
    \begin{equation}\label{Eq: Wick products}
    F\star G=\alpha_{\mathbf{\Delta}_+}(\alpha^{-1}_{\mathbf{\Delta}_+}(F) \, \alpha^{-1}_{\mathbf{\Delta}_+}(G)).
    \end{equation}
\end{definition}

\begin{example}
In order to better clarify the above discussion, it is convenient to discuss an explicit computation. More precisely, we shall consider the deformed product of the quadratic functionals discussed in Example \ref{Ex: fun 2 scalar fields}. For any $f, f^{\prime} \in \mathcal{D}(\mcM)$ and for all pair of field configurations $(\varphi_1, \varphi_2) \in \mathcal{E}_2(\mcM)$, we have
\begin{flalign*}
    (\Phi_{\ell}^2[f] \star  \Phi_{\ell}^2[f']) (\varphi_1, \varphi_2) &= \int_{\mcM^2} d\mu_x d\mu_y \, \varphi_{\ell}^2(x) \varphi_{\ell}^2(y) f(x) f'(y)\\ &+ 4\hbar \int_{\mcM^2} d\mu_x d\mu_y \, \Delta_{+, \ell}(x,y) \varphi_\ell(x) \varphi_\ell(y) f(x) f'(y) \\ &+ 2\hbar^2 \int_{\mcM^2} d\mu_x d\mu_y \, \Delta^2_{+, \ell}(x,y) f(x) f'(y), \quad \ell = 1,2,
\end{flalign*}
or, equivalently, 
\begin{equation*}
(\Phi_{\ell}^2[f] \star  \Phi_{\ell}^2[f']) (\varphi_1, \varphi_2) = (\Phi_{\ell}^2[f] \Phi_{\ell}^2[f']) (\varphi_1, \varphi_2) + 4 \hbar\Delta_{+, \ell}(f,f^\prime)(\Phi_{\ell}[f] \Phi_{\ell}[f']) (\varphi_1, \varphi_2) +2\hbar^2\Delta_{+, \ell}^2(f,f^\prime). 
\end{equation*}
Similarly, for the deformed product of the mixed quadratic functionals, we obtain
\begin{flalign*}
    ((\Phi_1\Phi_2)[f] \star  (\Phi_1\Phi_2)[f']) (\varphi_1, \varphi_2) &= \int_{\mcM^2} d\mu_x d\mu_y \, \varphi_1(x) \varphi_2(x) \varphi_1(y) \varphi_2(y) f(x) f'(y) \\ &+ \hbar \sum_{\substack{\ell, j = 1, 2 \\ \ell \ne j}} \int_{\mcM^2} d\mu_x d\mu_y \, \Delta_{+, \ell}(x,y) \varphi_j(x) \varphi_j(y) f(x) f'(y) \\ &+ \hbar^2 \int_{\mcM^2} d\mu_x d\mu_y \, \Delta_{+, 1}(x,y) \Delta_{+,2}(x,y) f(x) f'(y),
\end{flalign*}
and, in a more compact notation, 
\begin{flalign*}
((\Phi_1\Phi_2)[f] \star  (\Phi_1\Phi_2) [f']) (\varphi_1, \varphi_2)& = ((\Phi_1\Phi_2)[f] (\Phi_1\Phi_2) [f']) (\varphi_1, \varphi_2) \\ &+ \hbar \sum_{\ell \ne j} \Delta_{+, \ell}(f,f^\prime)(\Phi_{j}[f] \Phi_{j}[f']) (\varphi_1, \varphi_2) +\hbar^2 \Delta_{+, 1}(f,f^\prime) \Delta_{+,2}(f,f^\prime). 
\end{flalign*}
Note that, upon evaluating the above expressions at vanishing field configurations, they yield the correlations of the underlying theory. Furthermore, to make contact with the standard notation adopted in quantum field theory, a pictorial representation can be employed: 
\begin{flalign*}
\Phi_{1}^2[f] \star \Phi_{1}^2 [f'] &= \Phi_1^2 [f] \Phi_1^2 [f'] + 4 \hbar \wick{
        \c1 \Phi^2_1 [f]
        \c1 \Phi^2_1 [f'] } +
 + 2 \hbar^2 \wick{
        \c1 \Phi_1 [f] \c2 \Phi_1 [f]
        \c1 \Phi_1 [f'] \c2 \Phi_1 [f']}, \\ 
\Phi_{2}^2[f] \star \Phi_{2}^2 [f'] &= \Phi_2^2 [f] \Phi_2^2 [f'] + 4 \hbar \wick{ \c1 {\textcolor{red}{\Phi_2[f]}}
         \c1 {\textcolor{red}{\Phi_2[f']}} } +
 + 2 \hbar^2 \wick{
        \c1 {\textcolor{red}{\Phi_2[f]}} \c2 {\textcolor{red}{\Phi_2[f]}}
        \c1 {\textcolor{red}{\Phi_2[f']}} \c2 {\textcolor{red}{\Phi_2[f']}}}, \\
(\Phi_{1} \Phi_2)[f] \star (\Phi_{1} \Phi_2)[f'] &= (\Phi_1 \Phi_2) [f] (\Phi_1 \Phi_2) [f'] \\ & + \hbar \wick{ \c1 \Phi_1 [f]
         \c1 \Phi_1 [f'] }  \Phi_2[f] \Phi_2[f'] + \hbar  \Phi_1[f] \Phi_1[f'] \wick{ \c1 {\textcolor{red}{\Phi_2[f]}}
         \c1 {\textcolor{red}{\Phi_2[f']}}}  \\ &
 + \hbar^2 \wick{
        \c1 \Phi_1 [f]  \c2 {\textcolor{red}{\Phi_2[f]}}
        \c1 \Phi_1 [f']  \c2 {\textcolor{red}{\Phi_2[f']}}},
\end{flalign*}
where each contraction highlighted in black is performed by means of $\Delta_{+, 1}$, whilst red contractions are implemented by $\Delta_{+,2}$. 
\end{example}

\noindent \textbf{Interacting theory}
We now possess all the ingredients to tackle the interacting scenario. Let us consider the potential in Equation \eqref{Eq: potential 2 scalar field}, which depends only on polynomials in the fields. This requirement will be crucial in Section \ref{Sec: Existence of local solutions to the Renormalization Group flow equations}, as it ensures that one can choose a local potential independent of the field derivatives, thereby preserving the Green hyperbolicity of the operator under scrutiny.
As for the case of a single, self-interacting scalar field, we can define the time ordered product via the Feynman propagators of the free theory. More precisely, for each $\ell = 1,2$, we denote by $\Delta_{F, \ell} := \Delta_{+, \ell} + i \Delta_{A, \ell}$ the Feynman propagators relative to the fields $\varphi_{\ell}$. Also in this case, we can adopt a convenient matrix notation: let $\boldsymbol{\Delta}_F \in \mathcal{D}'(\mcM \times \mcM; \text{Mat}(2; \mathbb{C})$ be defined by 
\begin{equation*}
    \boldsymbol{\Delta}_F(x,y) := \begin{pmatrix}
        \Delta_{F, 1}(x,y) & 0 
        \\
        0 & \Delta_{F, 2}(x,y)
    \end{pmatrix},
\end{equation*}
at the level of integral kernels. The time ordered product can then be completely characterised by the {\it time ordering map} $\mathcal{T}^{\hbar \boldsymbol{H}_F}$ acting on tensor products of elements in $\mathrm{Pol}_{loc}(\mathcal{M})$ as per Definition \ref{Def: local and covariant}. Without delving into the axiomatic construction of such a product, we refer the reader to \cite{brunetti2} and \cite{Rejzner} for further details on the topic. Here we limit ourselves to concretely define its action on the algebra $\mathrm{Pol}_{\boldsymbol{\Delta}_+}(\mathcal{M})$. The time ordering map can be explicitly realised exploiting the identity: 
\begin{equation}\label{Eq: time-ordered product}
	 \mathcal{T}^{\hbar \boldsymbol{H}_F}  (F_1 \otimes \ldots \otimes F_n):=
	  F_1 \star_{F} \ldots \star_{F} F_n =
	\mathsf{M} \circ e^{\sum_{k < j}^n D^{k j}_{\hbar \text{tr}_{\mathbb{C}^2} \boldsymbol{H}_F}}  (F_1
	\otimes \ldots \otimes F_n), 
\end{equation}
where $\mathsf{M}$ denotes the pull back on $\textrm{Pol}_{\mu c}(\mathcal{M})\otimes
	\textrm{Pol}_{\mu c}(\mathcal{M})$ via the diagonal map $\iota$, while $F_i\in\mathrm{Pol}_{\mu c}(\mathcal{M})$, for all $i=1,\dots,n$. Here $\boldsymbol{H}_F\in\mathcal{D}'(\mathcal{M}\times\mathcal{M}; \text{Mat}(2; \mathbb{C}))$ denotes the Feynman parametrix, whilst 
$$ D^{k j}_{\hbar \text{tr}_{\mathbb{C}^2} \boldsymbol{H}_F}:= \sum_{\ell = 1, 2} \langle \hbar H_{F, \ell}, \frac{\delta}{\delta \varphi_{\ell}}\otimes\frac{\delta}
{\delta \varphi_{\ell}}\rangle,$$ which is manifestly symmetric under exchange of $j$ and $k$. In the case $n =2$, Equation \eqref{Eq: time-ordered product} reduces to: 
\begin{equation*}
    F \cdot_T G:= F \cdot G + \sum_{k := k_1 + k_2 \ge 1} \frac{\hbar^k}{k_1! k_2!} \langle F^{(k_1, k_2)}, \Delta_{F, 1}^{\otimes k_1} \Delta_{F,2}^{\otimes k_2} G^{(k_1, k_2)} \rangle, \quad F, G \in \text{Pol}_{loc}(\mcM). 
\end{equation*}

The information of the non-linear, interacting potential in Equation \eqref{Eq: potential 2 scalar field} is encoded at an algebraic level by means of two key objects: the $S$-matrix and the Bogoliubov map. 

\begin{remark}
\label{Rmk: formal power}
Recalling Equation \eqref{Eq: potential 2 scalar field}, we will have to consider perturbative expansions with respect to three coupling parameters $\underline{\lambda} := (\lambda_1, \lambda_2, \lambda_3)$ with $\lambda_1, \lambda_2 > 0$ and $\lambda_1 \lambda_2 > 9 \lambda_3^2$. Unlike the single-scalar case, the interacting potential in Equation \eqref{Eq: potential 2 scalar field} cannot be treated as a power series in a single parameter. Instead, one must work with joint powers of the components of $\underline{\lambda}$. Consequently, factoring out the couplings from the potential is not convenient, and truncations at a given order are understood as truncations {\it jointly} in $\lambda_1, \lambda_2, \lambda_3$. 
\end{remark}

In this context, we call \emph{S-matrix}
    \begin{flalign}
        \mathcal{S}(V) &:= \sum_{n \ge 0} \frac{1}{n!} \left( \frac{i}{\hbar} \right)^n \underbrace{V \cdot_T \ldots \cdot_T V}_{n},\label{Eq: S-Matrix}
    \end{flalign}
where all the couplings are incorporated in the interacting potential $V$. In addition, $\mathrm{Pol}_{\mu c}[[\lambda]](\mathcal{M})$ is the space of functionals defined as a formal power series in the parameters $\underline{\lambda} := (\lambda_1, \lambda_2, \lambda_3)$, such that $F\in \mathrm{Pol}_{\mu c}[[\lambda]](\mathcal{M})$ if $F=\sum\limits_{n_1, n_2, n_3=0}^\infty\lambda_1^{n_1} \lambda_2^{n_2} \lambda_3^{n_3} F_{n_1, n_2, n_3}$ with $F_{n_1, n_2, n_3}\in\mathrm{Pol}_{\mu c}(\mathcal{M})$.

\noindent The S-matrix admits an inverse in the sense of formal power series -- see also Remark \ref{Rmk: formal power}. For future convenience, it is useful to express it explicitly by means of the anti-Feynman parametrix:  
\begin{equation}\label{Eq: Anti-Feynman Parametrix}
    \boldsymbol{H}_{AF} := \boldsymbol{H} - i \boldsymbol{\Delta}_R = \boldsymbol{H}_F^*\in\mathcal{D}^\prime(\mathcal{M}\times\mathcal{M}; \text{Mat}(2; \mathbb{C})), 
\end{equation}
where $\boldsymbol{H} \in \mathcal{D}'(\mathcal{M} \times \mathcal{M}; \text{Mat}(2; \mathbb{C}))$ is any bi-distribution which differs from the two-point correlation function of a Hadamard state by a smooth remainder, while $\ast$ denotes the matrix and $L^2$ formal adjoint. Thus, the inverse of $\mathcal{S}(\lambda V)$ can be written in the form: 
\begin{equation}
\label{Eq: inverse Smat}
    \mathcal{S}^{-1} (V) := \sum_{n \ge 0} \frac{1}{n!} \left( - \frac{i}{\hbar} \right)^n \underbrace{V \cdot_{\bar{T}} \ldots \cdot_{\bar{T}} V}_{n}, 
\end{equation}
where $\cdot_{\bar{T}}$ is the deformed product implemented by $\boldsymbol{H}_{AF}$. In addition the superscript $\star$ indicates that $\mathcal{S}^{\star -1}$ is the inverse of the S-matrix with respect to the product induced by $\boldsymbol{H}$, that is
$$\mathcal{S}(V)\star_{\boldsymbol{H}}\mathcal{S}^{ -1} ( V)=\mathcal{S}^{ -1} (V)\star_{\boldsymbol{H}} \mathcal{S}(\lambda V)=\textbf{1},$$ where $\textbf{1}$ is the identity functional as per Example \ref{Ex: fun 2 scalar fields}.   

The S-matrix and its inverse are essential tools in constructing the Bogoliubov map, which serves to represent interacting observables -- expressed as formal power series in the couplings -- in terms of their free counterparts. In the case of two, mutually interacting scalar fields as per Equation \eqref{Eq: action 2 scalar fields}, it can be defined in the following way. Given $F\in \mathrm{Pol}_{loc}(\mathcal{M})$, the Bogoliubov map is 
\begin{equation}
\label{Eq: Bogoliubov map}
    R_{V}(F):= \mathcal{S}^{-1}(V) \star_{\boldsymbol{H}} [\mathcal{S}(\lambda V) \cdot_T F], \, \, \forall F \in \mathrm{Pol}_{loc}(\mathcal{M}). 
\end{equation}
where $\cdot_T$ is the time ordered product as per Equation \eqref{Eq: time-ordered product}, whilst $\star_{\boldsymbol{H}}$ is the deformed product implemented by the bi-distribution $\boldsymbol{H}$ -- see also Definition \ref{Def: local and covariant}.

We conclude the section with an example showing how to perform explicit computations by means of the Bogoliubov map. 

\begin{example}
\label{Ex: interacting fields}
Consider the quadratic functionals as per Equations \eqref{Eq: Phi1^2}, \eqref{Eq: Phi2^2} and \eqref{Eq: Phi1, Phi2} and let us discuss how to compute the expectation value of their interacting counterparts up to $\mathcal{O}(\underline{\lambda}^3)$, \textit{i.e.}, up to second order jointly in the coupling parameters. Barring replacing $\boldsymbol{H}, \boldsymbol{H}_F$ and $\boldsymbol{H}_{AF}$ with the corresponding states, the expectation value of any observable corresponds to the evaluation at vanishing fields $\Phi := (\varphi_1, \varphi_2) = 0$. We begin by expanding the S-matrix and its inverse in Equations \eqref{Eq: S-Matrix} and \eqref{Eq: inverse Smat} as
\begin{flalign}
    \mathcal{S}(V) & = \mathbf{1} + \frac{i}{\hbar} V - \frac{1}{2\hbar^2} V \star_F V + \mathcal{O}(\underline{\lambda}^3), \\
     \mathcal{S}^{-1}(V) & = \mathbf{1} - \frac{i \lambda}{\hbar} V - \frac{1}{2} \frac{\lambda^2}{\hbar^2} V \star_{AF} V + \mathcal{O}(\underline{\lambda}^3).
\end{flalign}
Hence, suppressing the dependence on the field configurations for notational ease,
\begin{flalign}
    R_{V}(F[f]) &=_{\mathcal{O}(\underline{\lambda}^3)} F[f] + \frac{i}{\hbar} \left(V[h] \cdot_T F[f] - V[h] \star_H F[f] \right) \notag \\ & -\frac{1}{2 \hbar^2} \left[ (V[h] \cdot_{\bar{T}} V[h]) \star_H F[f] + (V[h] \cdot_T V[h]) \cdot_T F[f] - 2 V[h] \star_H (V[h] \cdot_T F[f])\right], \label{Eq: Second Order Expansion}
\end{flalign}
where the functional $F \in \text{Pol}_{loc}(\mcM)$ is taken to coincide either with $\Phi_{\ell}^2$, $\ell = 1,2$ or $(\Phi_1 \Phi_2)^2$. Notice that only maximally contracted terms contribute to the expectation value of $F$. To identify the relevant contributions modulus combinatorial factors, it is convenient to resort to a graphical approach. Henceforth, we shall denote in black the contractions between the fields $\varphi_1$, which are performed by means of $H_1, H_{F,1}$ and $H_{AF, 1}$. We will use red to mark those between the fields of type $2$ implemented by $H_2, H_{F,2}$ and $H_{AF, 2}$. For simplicity, we shall discuss in detail only the case $F[f] = (\Phi_1 \Phi_2)^2[f]$, the other two following suit.  
It is possible to infer that: 
\begin{itemize}
 \item[\ding{104}] none of the terms of the first order expansion survives. In fact, since the functional $F$ is quadratic in the fields, in the term $V[h] \star_{\boldsymbol{H}} F [f]$ as well as in $V[h] \cdot_T F[f]$, it is possible to perform at most two contractions. This leaves at least one power of $\Phi_1$ and $\Phi_2$ remaining,
\item[\ding{104}] at second order in perturbation theory, from $V[h] \star_{\boldsymbol{H}} (V[h] \cdot_T F[f])$, one obtains three separate contributions, corresponding to the three different terms in the expression of $V[h]$. The first contributions can be pictorially represented as follows 
     \begin{equation*}
        \wick{
        \c1 \Phi_1 [h] \c2 \Phi_1 [h] \c3 {\textcolor{red}{\Phi_2[h]}} \c4 {\textcolor{red}{\Phi_2[h]}} \, \, \;
        (\c1 \Phi_1[h] \c2 \Phi_1 [h] \c5 \Phi_1 [h] \c6 \Phi_1 [h] \, \;
        \c5 \Phi_1 [f] \c6 \Phi_1 [f] \c3 {\textcolor{red}{\Phi_2[h]}} \c4 {\textcolor{red}{\Phi_2[h]}})}, 
    \end{equation*}
and it yields, modulus constant factors,
\begin{equation*}
    \propto \lambda_1 \lambda_3 \int_{\mcM^3} d\mu_x d\mu_y d\mu_z \, H^2_{1}(x,y) H^2_{2}(x,z) H_{F,1}^2(y,z) h(x) h(y) f(z). 
\end{equation*}
The second non-vanishing term is of the form: 
\begin{equation*}
        \wick{
        \c1 \Phi_1 [h] \c2 \Phi_1 [h] \c3 {\textcolor{red}{\Phi_2[h]}} \c4 {\textcolor{red}{\Phi_2[h]}} \, \, \;
        (\c3 {\textcolor{red}{\Phi_2[h]}}  \c4 {\textcolor{red}{\Phi_2[h]}}  \c5 {\textcolor{red}{\Phi_2[h]}}  \c6 {\textcolor{red}{\Phi_2[h]}}  \, \;
        \c1 \Phi_1 [f] \c2 \Phi_1 [f] \c5 {\textcolor{red}{\Phi_2[h]}} \c6 {\textcolor{red}{\Phi_2[h]}})}, 
\end{equation*}
and, at the level of integral kernels, it reads
\begin{equation*}
    \propto \lambda_2 \lambda_3 \int_{\mcM^3} d\mu_x d\mu_y d\mu_z \, H^2_{2}(x,y) H^2_{1}(x,z) H_{F,2}^2(y,z) h(x) h(y) f(z). 
\end{equation*}
Finally, the third contributions corresponds to 
\begin{equation*}
        \wick{
        \c1 \Phi_1 [h] \c2 \Phi_1 [h] \c3 {\textcolor{red}{\Phi_2[h]}} \c4 {\textcolor{red}{\Phi_2[h]}} \, \, \;
        (\c2 \Phi_1[h]  \c5 \Phi_1[h]  \c4 {\textcolor{red}{\Phi_2[h]}}  \c6 {\textcolor{red}{\Phi_2[h]}}  \, \;
        \c5 \Phi_1 [f] \c1 \Phi_1 [f] \c3 {\textcolor{red}{\Phi_2[h]}} \c6 {\textcolor{red}{\Phi_2[h]}})}, 
\end{equation*}
\begin{equation*}
    \propto \lambda_3^2 \int_{\mcM^3} d\mu_x d\mu_y d\mu_z \, H_{1}(x,y) H_{2}(x,y) H_{1}(x,z) H_{2}(x,z) H_{F,1}(y,z) H_{F,2}(y,z) h(x) h(y) f(z). 
\end{equation*}
We stress once again that all the contractions between elements within the round brackets are implemented by $\boldsymbol{H}_F$, while all others by $\boldsymbol{H}$. 

\item[\ding{104}] Focusing instead on $(V[h] \cdot_T V[h]) \cdot_T F[f]$ at a graphical level the same contractions contribute, hence we refrain from repeating the above analysis. The only difference with the previous case is that the non-vanishing contributions are related to different kernels and, thus, they correspond to the analytic expressions:
\begin{flalign*}
         \propto \lambda_1 \lambda_3 \int_{\mcM^3} d\mu_x d\mu_y d\mu_z \, &[H_{F, 1}^2(x,y) H_{F,1}^2(x,z) H_{F,2}^2(y,z) \\ &+ H_{F,1}^2(x,y) H_{F,1}^2(y,z) H_{F,2}^2(x,z)] h(x) h(y) f(z),
\end{flalign*}
\begin{flalign*}
         \propto \lambda_2 \lambda_3 \int_{\mcM^3} d\mu_x d\mu_y d\mu_z \,& [H_{F,2}^2(x,y) H_{F,2}^2(x,z) H_{F,1}^2(y,z) \\ &+ H_{F,2}^2(x,y) H_{F,1}^2(x,z) H_{F,2}^2(y,z)] h(x) h(y) f(z).
\end{flalign*}
The term $(V[h] \cdot_{\bar{T}} V[h]) \star_{\boldsymbol{H}} F[f]$ can be dealt with in a similar way by plugging in the above formulae the right parametrices. 
\end{itemize}
\end{example}

\begin{remark}
\label{Rem: Renormalization ambiguities in the example}
    Observe that the expressions in Example \ref{Ex: interacting fields} is merely formal at the present stage, as they involve powers of the parametrices $\boldsymbol{H}_F$ and $\boldsymbol{H}_{AF}$, which are ill-defined on the diagonal on account of H\"ormander criterion \cite{Hormander_1990}. This requires a renormalisation procedure, which is inherently subject to ambiguities. Since they do not play a relevant r\^{o}le in the derivation of the results of this paper, we shall refrain from discussing them in detail.
\end{remark}

\paragraph{Wetterich Equation in the Local Potential Approximation}
\label{Sec: Wetterich for two scalar fields}

In order to regularize the infrared behaviour of the theory, we introduce a scale-dependent mass regulator that suppresses low-energy modes: 
\begin{equation}
    \label{Eq: infrared regulator}
    Q_k (\chi_1, \chi_2) = - \frac{1}{2} \int_{\mathcal{M}} d\mu_x \, \left[q_{k,1}(x) T\chi_1^2(x) + q_{k,2}(x) T\chi_2^2 (x)\right], 
\end{equation}
where, without loss of generality, we can fix $q_{k,1}(x) = q_{k,2}(x) = k^2 f(x)$ with $f(x)$ a smooth cut off function, compactly supported in large regions of spacetime. 

\begin{remark}
    The choice of a regulator of this type may seem arbitrary to some extent, since nothing prevents us from adding also contributions proportional to the mixed product $\varphi_1 \varphi_2$. Yet, the choice in Equation \eqref{Eq: infrared regulator} is dictated by the fact that the only coupling between the fields occurs at the level of the interaction potential via the term $\propto \lambda_3 \varphi_1^2 \varphi_2^2$. If a mass term $\propto \varphi_1 \varphi_2$ were also to be present, it would have been more convenient to retain in the expression of $Q_k$ the mixed terms as well. 
\end{remark}

\noindent Denoting by
\begin{equation*}
    \label{Eq: current}
    J(\chi_1, \chi_2) := \int_{\mathcal{M}} d\mu_x \left[ j_1(x) \chi_1(x) + j_2(x) \chi_2(x)\right], \, \, j_1, j_2 \in C^{\infty}_0(\mathcal{M}),
\end{equation*}
an external current, we can introduce the (regularised) generating functional of the theory
\begin{equation}
\label{Eq: Zk scalar}
    Z_k(j_1, j_2) = \omega(S(V)^{-1} \star S(V + J+ Q_k)),
\end{equation}
together with its connected counterpart
\begin{equation}
\label{Eq: Wk scalar}
    W_k(j_1, j_2) := -i \log Z_k(j_1, j_2). 
\end{equation}
Here, we denote by $\omega$ an arbitrary Hadamard state on the underlying quantum algebra, albeit not necessarily quasi-free for the free theory. Taking the first functional derivative of $W_k$, we obtain the \emph{classical fields}, namely,
\begin{equation}
        \frac{\delta W_k(j_1, j_2)}{\delta j_\ell} = \frac{1}{Z_k(j_{\varphi_1}, j_{\varphi_2})} \omega(S(V)^{-1} \star (S(V + J+ Q_k) \cdot_T \chi_\ell(x))) =: \varphi_\ell(x), \quad \ell = 1,2,
\end{equation}
where $j_{\varphi_\ell}, \ell = 1,2$ are the currents which solve the above equation as functions of $\varphi_\ell$. The \emph{effective action} of the theory is then obtained as the Legendre transform with respect to both currents, \textit{i.e.},
\begin{equation}
    \label{Eq: Gammak scalar}
    \tilde{\Gamma}_k (\varphi_1, \varphi_2) = W_k(j_{\varphi_1}, j_{\varphi_2}) - J_{\Phi}(\varphi_1, \varphi_2), 
\end{equation}
where $J_{\Phi}$ is the functional defined in Equation \eqref{Eq: current} with $j_\ell \equiv j_{\varphi_\ell}$, $\ell = 1,2$. By definition, the first functional derivative of $\tilde{\Gamma}_k$ yields the \textit{quantum equations of motion} which are given by: 
\begin{equation*}
    \frac{\delta \tilde{\Gamma}_k}{\delta \varphi_{\ell}} = -j_{\varphi_\ell}, \quad \ell =1,2.
\end{equation*}
In a similar fashion as in \cite{Dangelo_23}, one can also introduce the \emph{average effective action}
\begin{equation*}
    \Gamma_k(\varphi_1, \varphi_2) = \tilde{\Gamma}_k(\varphi_1, \varphi_2) - Q_k(\varphi_1, \varphi_2). 
\end{equation*} 
With all the necessary ingredients in place, we now turn to the derivation of the flow equations for the generating functionals, which take the form of differential equations in the scale parameter $k$. In particular, we derive the flow equation for the connected generating functional as well as that for the average effective action. The former is commonly known in the literature as \textbf{Polchinski equation} \cite{Polchinski_1984} and it reads
\begin{equation}
    \label{Eq: Polchinski eq 2 scalars}
    \partial_k W_k(j_1, j_2) = - \frac{1}{2} \int_{\mcM} d\mu_x \, \frac{1}{Z_k(j_1, j_2)} \sum_{\ell = 1,2} \partial_k q_{k, \ell}(x) \omega\left(S(V)^{-1} \star (S(V + J + Q_k) \cdot_{T} T\chi^2_{\ell}(x))\right),  
\end{equation}
whilst the latter goes under the name of \textbf{Wetterich equation} and it takes the form
\begin{flalign}
    \label{Eq: Wetterich eq 2 scalars}
    \notag
    \partial_k \Gamma_k(\varphi_1, \varphi_2) &= - \frac{1}{2} \int_{\mcM} d\mu_x \, \frac{1}{Z_k(j_1, j_2)} \sum_{\ell = 1,2} \partial_k q_{k, \ell}(x) \\ &\left[\omega\left(S(V)^{-1} \star (S(V + J + Q_k) \cdot_{T} T\chi^2_{\ell}(x)\right)\right] - \varphi_{\ell}^2\}.  
\end{flalign}
Observe that both the Polchinski and the Wetterich equations involve expectation values taken with respect to a chosen state $\omega$, which at this stage is not assumed to be quasi-free. However, since our ultimate interest lies in physically relevant states, we may, without loss of generality, restrict our attention to quasi-free ones. This assumption significantly simplifies the analysis, as it allows for the use of efficient approximation schemes to obtain approximate solutions to the flow equations. In particular, we adopt the following ansatz: we assume the existence of a local functional $U_k$, depending only on the fields and not on their derivatives, such that the average effective action can be written in the form
\begin{equation}
\label{Eq: LPA Scalar}
    \Gamma_k(\varphi_1, \varphi_2) = - \int_{\mcM} d\mu_x \, \left(\frac{1}{2} \nabla^{\mu} \varphi_1(x) \nabla_{\mu} \varphi_1(x) + \frac{1}{2} \nabla^{\mu} \varphi_2(x) \nabla_{\mu} \varphi_2(x) + U_k(\varphi_1, \varphi_2)\right). 
\end{equation}
This approximation scheme is commonly referred to in the literature as \emph{local potential approximation} (LPA), see \cite{DAngeloDragoPinamontiRejzner2024}. Within such scheme, the left hand-side of Equation \eqref{Eq: Wetterich eq 2 scalars} becomes
$$\partial_k \Gamma_k = - \int_{\mcM} d\mu_x \, \partial_k U_k.$$ For what concerns the right hand-side, the rationale is to truncate the effective action up to quadratic contributions in the fields. This can be achieved by considering an expansion of $\Gamma_k$ around a classical solution $\varphi_{\ell, cl}$ of the quantum equation of motion, \textit{i.e.}, $\frac{\delta \Gamma_k}{\delta \varphi_\ell} \vert_{\varphi_{\ell, cl}} = 0$, thus obtaining
\begin{equation*}
    \Gamma_k(\Phi) = \Gamma_k(\Phi_{cl}) + \frac{1}{2} \Phi^T \Gamma_k^{(2)} (\Phi_{cl}) \Phi + \mathcal{O} (|\Phi - \Phi_{cl}|^3) =: \Gamma_k^t(\Phi) + \mathcal{O} (|\Phi - \Phi_{cl}|^3),
\end{equation*}
where $\Phi$ is the field doublet as per Remark \ref{Rmk: field doublet}, whilst $\Gamma_k^{(2)}$ is the Hessian matrix of $\Gamma_k$. Hence, we can define the \textbf{truncated action} corresponding to the effective action $\Gamma_k^t$ as
\begin{equation*}
    I_0^t := - \int_{\mcM} d\mu_x \, \left( \frac{1}{2} (\nabla \Phi)^T (\nabla \Phi) + \frac{1}{2} \Phi^T U_{k}^{(2)} (\Phi_{cl}) \Phi \right). 
\end{equation*}
In the above expression, $U_k^{(2)}(\Phi_{cl})$ is the Hessian matrix of the truncated effective potential in Equation \eqref{Eq: LPA Scalar}, \textit{i.e.},
\begin{equation}
    \label{Eq: matrix Uk scalar}
    U_k^{(2)} (\Phi_{cl}) = \begin{pmatrix}
        \frac{\delta^2 U_{k}}{\delta \varphi_1^2} & 0\\
      0 &     \frac{\delta^2 U_{k}}{\delta \varphi_2^2}
    \end{pmatrix},
\end{equation}
where we have set the off-diagonal contributions to zero on account of the fact that no mixing occurs at the level of the masses. Thus, we can define the \textbf{effective mass of the theory} 
\begin{equation*}
    M := I_0^t - I_0 = - \frac{1}{2} \int_M d\mu_x \sum_{\ell=1,2} \left(\frac{\delta^2 U_{k} (\Phi_{cl})}{\delta \varphi_\ell^2} - m_{\ell}^2\right) \varphi_{\ell}^2(x) =: \sum_{\ell=1,2} M_{\ell}
\end{equation*}
where $I_0$ is the free action of the theory as per Equation \eqref{Eq: action 2 scalar fields}, while $M_{\ell} := \frac{\delta^2 U_{k}}{\delta \varphi_\ell^2} (\Phi_{cl}) - m_{\ell}^2$ for each $\ell =1,2$. This splitting of the effective mass suggests that at this level we can treat the two field theories as independent. As a consequence, applying the results in \cite{DAngeloDragoPinamontiRejzner2024} for a single scalar field to this setting, we can write the Wetterich equation in the LPA as
\begin{equation}
    \label{Eq: right handside}
    \partial_k U_k (\varphi_1, \varphi_2) = \lim_{y \rightarrow x} \frac{1}{2} \sum_{\ell =1,2} \partial_k q_{k, \ell} (x) (\Delta_{F, M_\ell, k}(y,x) - H_{F, M_{\ell}, k} (y,x)). 
\end{equation}
Here $\Delta_{F, M_{\ell}, k}$ is the Feynman propagator for the theory $I_0^t + Q_k$ with $P_{k, \ell} := \Box_g + \frac{\delta^2 U_{k}}{\delta \varphi_{\ell}^2}$, while $H_{F, M_\ell, k}$ is the associated Feynman parametrix. Following the rationale detailed in \cite{DAngeloDragoPinamontiRejzner2024}, observe that Equation \eqref{Eq: right handside} can be rewritten in terms of the difference between the symmetric part of the two-point function and the Hadamard parametrix of the theory described by the action $I_0 + M + Q_k$, \textit{i.e.}, 
\begin{equation}
\label{Eq: LPA symmetric part}
     \partial_k U_k (\varphi_1, \varphi_2) = \lim_{y \rightarrow x} \frac{1}{2} \sum_{\ell =1,2} \partial_k q_{k, \ell} (x) (\Delta_{S, M_\ell, k}(y,x) - H_{ M_{\ell}, k}(y,x)). 
\end{equation}
Note that, up to this point, we have not make any assumption on the expression of the effective potential $U_k$, but we have simply require it to be a local functional of the fields. Yet, to compute explicitly the beta functions for the relevant couplings in Equation \eqref{Eq: potential 2 scalar field}, such a choice needs to be specified. Heretofore, we will require that the potential $U_k$ is of the form: 
\begin{equation}
    \label{Eq: explicit choice of Uk}
    U_k(\varphi_1, \varphi_2) := U_{k, 0} (\varphi_1, \varphi_2) + \underbrace{\left( m_{1, k}^2 + \frac{\lambda_{3,k}}{2} \rho_2 + \frac{\lambda_{1,k}}{6} \rho_1 \right)}_{M^2_{1, k}} \rho_1 +  \underbrace{\left( m_{2, k}^2 + \frac{\lambda_{3,k}}{2} \rho_1 + \frac{\lambda_{2,k}}{6} \rho_2 \right)}_{M^2_{2, k}} \rho_2, 
\end{equation}
where $\rho_\ell := \frac{1}{2} \varphi_{\ell}^2$, for $\ell = 1,2$. 

\begin{remark}
The explicit expression of the effective potential $U_k$ is dictated by the form of the interacting potential in Equation \eqref{Eq: potential 2 scalar field}. Yet, notice that the splitting of the mutual interaction term $\frac{\lambda_{3, k}}{4} \varphi_1^2 \varphi_2^2$ is arbitrary, since nothing prevents us from incorporating it entirely in either one of the masses $M^2_{\ell, k}$. We stress that the choice made in Equation \eqref{Eq: explicit choice of Uk} reflects the intrinsic symmetry under exchange of the two fields, hence it is the most physically sensible.   
\end{remark}

In addition, the ensuing analysis will be tailored to the case of $4-$dimensional Minkowski spacetime so to have manageable, explicit expressions for the Hadamard parametrix of the underlying theory. If we choose as a reference state $\omega$ the vacuum state, the right hand-side of Equation \eqref{Eq: LPA symmetric part} takes the form, \textit{cf.}, \cite[Eq. (96)]{Brunetti_2009}: 
\begin{equation*}
    B := \frac{k}{8 \pi^2} (k^2 + M_{k, 1}^2) \log \left( \frac{k^2 + M_{ k, 1}^2}{\mu^2} \right) + \frac{k}{8 \pi^2} (k^2 + M_{k, 2}^2) \log \left( \frac{k^2 + M_{k, 2}^2}{\mu^2} \right), 
\end{equation*}
where $\mu^2$ is a reference energy scale, which encodes a residual renormalisation freedom. A direct computation yields the beta functions for the relevant couplings: 
\begin{flalign}
\label{Eq: beta function U0}
k \partial_k U_{k, 0} &= \frac{k^2}{8 \pi^2} \sum_{\ell =1,2} (m^2_{\ell, k} + k^2) \log \left( \frac{k^2 + m_{\ell, k}^2}{\mu^2}\right) \\
\label{Eq: beta function m}
k \partial_k m_{k, \ell}^2 &= \frac{k^2}{8 \pi^2} \left( \frac{\lambda_{\ell, k}}{6} \left[ \log\left(\frac{k^2 +m_{\ell, k}^2}{\mu^2} \right) + 1 \right] + \frac{\lambda_{k, 3}}{2} \left[ \log \left( \frac{k^2 + m_{j,k}^2}{\mu^2} \right) + 1\right] \right), \, \ell, j=1,2, j \ne \ell\\
\label{Eq: beta function lambda}
k \partial_k \lambda_{\ell, k} &= \frac{3 k^2}{8 \pi^2} \left[ \frac{\lambda_{\ell,k}^2}{36} \frac{1}{k^2 + m_{\ell, k}^2} + \frac{\lambda_{3,k}^2}{4} \frac{1}{k^2 + m_{j, k}^2} \right], \quad \ell, j=1,2, j \ne \ell\\
\label{Eq: beta function lambda3}
k \partial_k \lambda_{3,k} &= \frac{k^2}{8\pi^2} \left[ \frac{\lambda_{3,k} \lambda_{1,k}}{12(k^2+m_{1,k}^2)} + \frac{\lambda_{3,k} \lambda_{2,k}}{12(k^2 + m_{2,k}^2}
\right]. 
\end{flalign}
The system of flow equations defines a set of coupled first-order differential equations which, for $k > 0$ and away from the singular loci $k^2 + m^2_{\ell, k} = 0$, $\ell =1,2$ admits a unique local solution for given initial data on account of the Picard–Lindel\"of theorem, as the beta functions are smooth in the couplings and continuous with respect to the scale parameter.

\begin{remark}
The local form of the right hand-side in Equation \eqref{Eq: LPA symmetric part} and, in turn, the resulting beta functions in Equations \eqref{Eq: beta function U0}--\eqref{Eq: beta function lambda3} crucially depend on the dimension of the underlying spacetime. In generic dimension $d > 2$, provided one can identify a suitable equilibrium quasi-free state, the above computations can be reproduced \emph{mutatis mutandis} without any significant obstruction -- see \cite[Sec. 6]{Dangelo_23} for additional examples in the case of a single scalar field. 
\end{remark}

We conclude this section by observing that, after a suitable rescaling, one may derive the corresponding flow equations for dimensionless couplings, see, \textit{e.g.}, \cite[Sec.~6]{DAngeloDragoPinamontiRejzner2024}. In the same spirit as \cite[Sec.~6.2]{DAngeloDragoPinamontiRejzner2024}, one could also analyse the high-temperature limit $\beta \to 0$ on Minkowski spacetime, as well as Bunch-Davies states on de Sitter spacetime. We shall not pursue these directions here, as they lie beyond the scope of the present work.

\subsubsection{Stochastic fields via the Martin-Siggia-Rose formalism}
\label{Sec: pAQFT for Martin-Siggia-Rose fields}
As a direct application of the formalism developed for two mutually interacting scalar fields, we shall discuss the case of a stochastic quantum field theory in the Martin-Siggia-Rose (MSR) formalism, see, \textit{e.g.}, \cite{Bonicelli_2025}, \cite{Zanella_2002}, \cite{Carosso_2020}. This approach allows to deal with fields whose dynamics is driven by a random potential, here specified by an additive Gaussian white noise. The idea underpinning the MSR formalism is to introduce an auxiliary field, which encompasses the stochastic correlations of the noise. Although originally designed to deal with stochastic differential equations (SDEs) or parabolic stochastic partial differential equations, this framework can be applied \emph{mutatis mutandis} also to elliptic SPDEs. The key distinction lies in the interpretation of the auxiliary field: in the elliptic setting, it serves only as a Lagrange multiplier, rather than a genuine response field, as the absence of a temporal direction prevents the formulation of a causal structure.

\begin{remark}
In this work we do not address the question of the validity of the 
Martin-Siggia-Rose representation for stochastic partial differential 
equations. This issue involves analytic and probabilistic aspects which lie 
beyond the scope of the present analysis, see \cite{Bonicelli_2025}. Instead, we shall take the MSR representation as a standing assumption and use it as the starting point for 
the algebraic construction developed below.
\end{remark}

On $(\mathbb{R}^d, \delta)$, with $d \ge 1$, let $\xi$ denote a white noise, \textit{i.e.}, a Gaussian-centred random distribution taking values in $\mathcal{S}'(\mathbb{R}^d)$, whose correlations read
\begin{equation}
    \label{Eq: noise correlations}
    \mathbb{E}(\xi(f)) = 0,\, \, \, \mathbb{E}(\xi(f) \xi(g)) = 2 D \langle f, g \rangle_{L^2(\mathbb{R}^d)}, \quad \text{for all} \, f, g \in \mathcal{S}(\mathbb{R}^d). 
\end{equation}
Here $D \in \mathbb{R}$ denotes the \emph{diffusion coefficient} and the factor $2$ is purely conventional. 
With a slight abuse of notation, we can equivalently rewrite Equation \eqref{Eq: noise correlations} at the level of distributional kernels, that is, 
\begin{equation}
    \label{Eq: noise correlations components}
    \mathbb{E}(\xi(x)) = 0,\, \, \, \mathbb{E}(\xi(x) \xi(y)) = 2 D \delta(x-y). 
\end{equation}
In the aforementioned setting, consider the following non-linear elliptic stochastic PDE  
\begin{equation}
\label{Eq: MSR SPDE}
    (-\Delta + m^2) \varphi + \frac{\lambda}{2} \varphi^2 = \xi, \quad m^2 > 0, \lambda \in \mathbb{R}_+.
\end{equation}
Let $\varphi \in \mathcal{E}(\mathbb{R}^d) \equiv C^{\infty}(\mathbb{R}^d; \mathbb{R})$ be any smooth field configuration whose dynamics is governed by Equation \eqref{Eq: MSR SPDE}. Via the Martin-Siggia-Rose formalism applied to this case \cite{Martin_1973}, we can replace the noise in Equation \eqref{Eq: MSR SPDE} with an auxiliary field $\tilde{\varphi} \in \mathcal{E}(\mathbb{R}^d)$, which plays the r\^ole of a Lagrange multiplier enforcing Equation \eqref{Eq: MSR SPDE}. The full action of the theory thus takes the form 
\begin{equation}
\label{Eq: MSR action}
    I[\varphi, \tilde{\varphi}] = \int_{\mathbb{R}^d} dx \, \{ \tilde{\varphi}(x) \left[ - \Delta \varphi(x) + m^2 \varphi(x) + \frac{\lambda}{2} \varphi^2(x) \right] - D \tilde{\varphi}^2(x)\}. 
\end{equation}
On $(\mathbb{R}^d, \delta)$, thanks to translational invariance of the underlying background and to Fourier techniques, the Helmholtz operator $L:=-\Delta + m^2$ with $m^2 > 0$ admits an exact inverse $G \in \mathcal{S}'(\mathbb{R}^d \times \mathbb{R}^d)$, whose integral kernel reads
\begin{equation*}
    G(x-y) \sim |x-y|^{2-d} e^{-m |x-y|}, \quad \forall x,y \in \mathbb{R}^d, 
\end{equation*}
modulus constant factors. More in general, on any smooth, connected and oriented Riemannian manifold $(\mathcal{N}, h)$, the Helmholtz operator $L:=-\Delta + m^2$ with $m^2 > 0$ is elliptic. Therefore, locally in a neighbourhood $\mathcal{O} \subset \mathcal{N}$, there exists a bi-distribution $G \in \mathcal{D}'(\mathcal{O} \times \mathcal{O})$ such that $L_x G(x,y) = L_y G(x,y) = \delta(x,y)$, where $\delta$ is the Dirac delta distribution. Under the additional assumption that $(\mathcal{N}, g_{\mathcal{N}})$ is compact without boundary, the operator $L$ is self-adjoint, elliptic and strictly positive for $m^2 > 0$ on $L^2(\mathcal{N}, \mu_{g_{\mathcal{N}}})$ and, consequently, invertible. This implies that there exists a globally defined fundamental solution $G \in \mathcal{D}'(\mathcal{N} \times \mathcal{N})$, which is exact and not modulus smoothing. In these settings, it has been proven in \cite{Dappiaggi_2020} that the perturbative approach to both free and interacting quantum field theories extends to Euclidean scenario. In particular, it is possible to construct interacting observables via a counterpart of the Bogoliubov map, with the deformed product $\star$ replaced by the pointwise product and the time-order one $\cdot_T$ implemented in a similar way by the exact inverse $G$, \textit{cf.}, \cite[Sec. 7]{Dappiaggi_2020}. 
Yet, we refrain from pursuing this approach, as several technical and conceptual aspects of interacting Euclidean quantum field theories within the algebraic framework remain largely unexplored in the literature, such as the definition of time-ordered products in settings where only parametrices are available.
For these reasons, henceforth we shall focus on a counterpart of the action in Equation \eqref{Eq: MSR action} living on $d-$dimensional Minkowski spacetime $(\mathbb{R}^d, \eta)$.


\begin{remark}
\emph{A priori} there is no obstruction to constructing such a field theory on a generic Lorentzian manifold $(\mathcal{M},g)$. However, the advantages of working on $(\mathbb{R}^d, \eta)$ are two-fold. On the one hand, on Minkowski spacetime we have a natural notion of vacuum state and, hence, an explicit, global expression for the Hadamard parametrix. As we shall see in the derivation of the Wetterich equation, this entails relevant simplifications in the computations. On the other hand, working with the flat metric, allows us to make explicit contact with the results obtained in \cite{Duch_2025}, albeit via an alternative approach. 
\end{remark}

From this point onward, we shall denote by $P_0 := -\Box_{\eta} + m^2$ the Green hyperbolic operator ruling the free dynamics of the scalar field $\varphi$. The corresponding action is given by
\begin{equation}
\label{Eq: MSR Lorentzian action}
    I[\varphi, \tilde{\varphi}] = \int_{\mathbb{R}^d} dx \, \{ \tilde{\varphi}(x) \left[ - \Box_{\eta} \varphi(x) + m^2 \varphi(x) + \frac{\lambda}{2} \varphi^2(x) \right] - D \tilde{\varphi}^2(x)\}. 
\end{equation}
In this Lorentzian construction, one of the fields $\tilde{\varphi}$ retains the r\^ole of an auxiliary, non-dynamical variable, and it is removed at the end of the computation by setting it equal to a constant. 

\begin{remark}
In contrast with the case of the two, mutually interacting scalar fields investigated in Section \ref{Sec: two scalar fields}, the mass term in the action is of the form $m^2 \varphi(x) \tilde{\varphi}(x)$, hence it couples the dynamical and auxiliary fields. In spite of this fact, we can still introduce a field doublet of the form
$$\boldsymbol{\Phi} := \begin{pmatrix}
    \varphi \\
    \tilde{\varphi}
\end{pmatrix} \in C^{\infty}(\mathbb{R}^d) \otimes \mathbb{R}^2$$ and write the free action as a function of this doublet, \textit{i.e.}, 
\begin{equation*}
    I_0(\Phi) = \frac{1}{2} \int_{\mathbb{R}^d} dx \, \boldsymbol{\tilde{\Phi}}^T(x) \mathrm{P}_0 \boldsymbol{\Phi}(x),
\end{equation*}
where $\boldsymbol{\tilde{\Phi}} := \begin{pmatrix}
    0 & 1 \\
    1 & 0
\end{pmatrix} \boldsymbol{\Phi}$, while $\mathrm{P}_0 := P_0 \mathbb{I}_{\mathbb{R}^2}$.
\end{remark}

\noindent Notice that, upon taking the functional derivative of the full action with respect to the field $\tilde{\varphi}$, we obtain the equation of motion for $\varphi$, \textit{i.e.},
\begin{equation}
    \label{Eq: dynamics phi}
    P_0 \varphi + \frac{\lambda}{2} \varphi^2 - 2D \tilde{\varphi} = 0.
\end{equation}
The final term, which depends explicitly on the coefficient $D$, preserves a structure analogous to that induced by noise correlations. Nevertheless, in the Lorentzian formulation, it does not admit a genuine stochastic interpretation, as no underlying noise is present. 
Observe, in addition, that one can associate to the auxiliary field $\tilde{\varphi}$ a fictitious dynamics by considering the functional derivative of $I[\varphi, \tilde{\varphi}]$ with respect to the dynamical field $\varphi$, namely,  
\begin{equation}
\label{Eq: dynamics varphi}
    P_0 \tilde{\varphi} + \lambda \varphi \tilde{\varphi} = 0. 
\end{equation}
This shows that the fictitious free dynamics of $\tilde{\varphi}$ is still ruled by the Klein-Gordon operator $P_0$ with the \emph{same} mass term for both fields. 

Since several aspects of the ensuing discussion parallel those of Section \ref{Sec: two scalar fields}, we refrain from repeating the full analysis of the free and interacting theories, highlighting only the key differences. We denote by $\mathcal{E}_2(\mathbb{R}^d) \ni (\varphi, \tilde{\varphi})$ the configuration space and by $\mathcal{F}(\mathbb{R}^d)$ the space of functionals as per Definition \ref{Def: space of functionals 2 scalar fields}. This space can be equipped with a notion of Fréchet derivative as per Definition \ref{Def: Frechet derivatives}, where now the first index refers to functional derivatives taken with respect to $\varphi$ and the second one to those with respect to $\tilde{\varphi}$. Being the interaction term polynomial in the fields, it suffices to confine our attention to polynomial functionals, \textit{i.e.}, functionals with only a finite number of non-vanishing derivatives. In the same fashion as in Section \ref{Sec: two scalar fields}, we can also identify notable subspaces of $\mathcal{F}(\mathcal{M})$, which will play a crucial r\^ole in the following. Since the definitions of \emph{microcausal}, \emph{regular} and \emph{local functionals} remain exactly the same in this setting, we limit ourselves to give notable examples of elements lying in these classes.      



\begin{example}
\label{Ex: local functionals}
For future convenience, let us discuss some notable examples of polynomial functionals. For any pair of field configurations $(\varphi, \tilde{\varphi}) \in \mathcal{E}_2(\mathbb{R}^d)$ and for any test function $f \in \mathcal{D}(\mathbb{R}^d)$, we define
\begin{flalign}
    \boldsymbol{1} [f] (\varphi, \tilde{\varphi}) &:= \int_{\mathbb{R}^d} dx \, f(x), \\
    \label{Eq: Phi}
    \Phi [f] (\varphi, \tilde{\varphi}) &:= \int_{\mathbb{R}^d} dx \, \varphi(x) f(x), \\ 
    \label{Eq: tilde Phi}
     \tilde{\Phi}[f] (\varphi, \tilde{\varphi}) &:= \int_{\mathbb{R}^d} dx \, \tilde{\varphi}(x) f(x).
\end{flalign}
Observe that $\Phi^{(1,0)} [f] (\varphi_1; \varphi, \tilde{\varphi}) = \int_{\mathbb{R}^d} dx \, \varphi_1(x) f(x)$, all the higher functional derivatives being vanishing. Therefore, since $\text{supp}(\delta) \subset \text{Diag} (\mathbb{R}^d)$, we conclude that $\Phi_f \in \mathcal{F}_{loc} (\mathbb{R}^d)$. A similar argument entails that also $\tilde{\Phi}_f \in \mathcal{F}_{loc}(\mathbb{R}^d)$. Notice, in addition, that the functional defined by
\begin{equation}
   \label{Eq: Phi tilde Phi}
   (\Phi\tilde{\Phi})[f] (\varphi, \tilde{\varphi}) = \int_{\mathbb{R}^d} dx \, \varphi(x) \tilde{\varphi}(x) f(x). 
\end{equation}
is local as well.  
\end{example}


\noindent In a similar fashion as in Section \ref{Sec: two scalar fields}, we define the \emph{classical microcausal algebra} of functionals as $\mathcal{A}_{cl} (\mathbb{R}^d) := (\mathcal{F}_{\mu c} (\mathbb{R}^d), \cdot, \ast)$, where $\cdot$ denotes the pointwise product as per Equation \eqref{Eq: pointwise prod}. In order to encode into the algebraic structure the correlations of the underlying theory, we need to implement a deformation quantization procedure. More precisely, we want to allow only for contractions between the dynamical field configurations $\varphi$ and the auxiliary field $\tilde{\varphi}$, so to account for stochastic fluctuations. To this avail, we introduce a deformation map of the form
\begin{equation}
\label{Eq: deformation map MSR}
    D_{\hbar \Delta_{+}} := \langle \hbar \Delta_{+}, \frac{\delta}{\delta \varphi} \otimes \frac{\delta}{\delta \tilde{\varphi}} \rangle = \hbar \int_{\mathbb{R}^{2d}} dx dy \, \Delta_{+}(x,y)  \frac{\delta}{\delta \varphi(x)} \otimes \frac{\delta}{\delta \tilde{\varphi}(y)},
\end{equation}
where $\Delta_+ \in \mathcal{D}'(\mathbb{R}^d \times \mathbb{R}^d)$ is the Hadamard two-point function associated to the Klein-Gordon operator $P_0$. Then, for any pair of microcausal functionals $F,G \in \mathcal{F}_{\mu c} (\mathbb{R}^d)$, one can define their deformed product as in Equation \eqref{Eq: deformed product 2 scalar fields}, with the the deformation map chosen as per Equation \eqref{Eq: deformation map MSR}. For the sake of doing computations, it is convenient to expand the exponential in a formal power series and write: 
\begin{equation}
    \label{Eq: MSR deformed product}
    F \star G := F \cdot G + \sum_{k, \tilde{k} = 1}^{\infty} \frac{\hbar^{(k+\tilde{k})}}{k! \tilde{k}!} \langle F^{(k, \tilde{k})}, (\Delta_+)^{\otimes (k+\tilde{k})} G^{(\tilde{k}, k)} \rangle.
\end{equation}

\begin{remark}
\label{Rmk: commutativity}
    Note that the deformed product in Equation \eqref{Eq: MSR deformed product} is symmetric under exchange of $F$ and $G$, \textit{i.e.}, $F \star G = G \star F$ for any $F, G \in \mathcal{F}_{\mu c} (\mathbb{R}^d)$. This is a by-product of the symmetry of Equation \eqref{Eq: MSR deformed product} with respect to $k, \tilde{k}$ and it naturally codifies the commutativity of the fields $\varphi$ and $\tilde{\varphi}$. 
\end{remark}

\begin{example}
Let us consider the local functionals introduced in Example \ref{Ex: local functionals}. We compute the deformed product between $\Phi_f$ and $\tilde{\Phi}_f$, as it will play a fundamental r\^ole in the interacting scenario: for any pair of field configurations $(\varphi, \tilde{\varphi}) \in \mathcal{E}_2(\mathbb{R}^d)$ and for any test function $f \in \mathcal{D}(\mathbb{R}^d)$  
\begin{equation*}
    (\Phi [f] \star \tilde{\Phi} [f']) (\varphi, \tilde{\varphi}) = (\Phi [f] \cdot \tilde{\Phi}[f'])(\varphi, \tilde{\varphi}) + \hbar \int_{\mathbb{R}^{2d}} dx dy \, \Delta_+(x,y) f(x) f'(y). 
\end{equation*}
Since the deformation map is built in such a way to encode the commutativity of the underlying theory -- see Remark \ref{Rmk: commutativity}, it descends that 
\begin{equation*}
     (\tilde{\Phi}[f'] \star \Phi[f]) (\varphi, \tilde{\varphi}) = (\Phi[f] \star \tilde{\Phi}[f']) (\varphi, \tilde{\varphi}), 
\end{equation*}
as one can directly check by applying Equation \eqref{Eq: MSR deformed product}. 
\end{example}

As for the case of two scalar fields, one can consider a bi-distribution $H \in \mathcal{D}'(\mathbb{R}^d \times \mathbb{R}^d)$, which coincides with the Hadamard parametrix up to a smooth remainder, and define the local and covariant algebra of microcausal polynomial functionals with respect to the deformed product $\star_H$. The abstract Wick ordering of any local observable is then defined according to Equation \eqref{Eq: alpha H}, barring the replacement of $\text{tr}_{\mathbb{C}^2}(\boldsymbol{H})$ with the Hadamard parametrix $H$ of $P_0$.  
Consider now an interaction potential of the form 
\begin{equation}
    \label{Eq: V MSR}
    V[h] (\varphi, \tilde{\varphi}) := \int_{\mathbb{R}^d} dx \, \left(\frac{\lambda}{2} \varphi^2(x) \tilde{\varphi}(x) + - D\tilde{\varphi}^2(x) \right) h(x), \quad h \in \mathcal{D}(\mathbb{R}^d). 
\end{equation}
\begin{remark}
\label{Rmk: joint exp}
    Note that the potential in Equation \eqref{Eq: V MSR} depends on two coupling parameters: the interaction parameter $\lambda \in \mathbb{R}_+$ and the diffusion coefficient $D \in \mathbb{R}$. Accordingly, all perturbative expansions are to be understood as joint expansions in both parameters.
\end{remark}
For any pair of local, polynomial functionals $F, G \in \mathcal{F}_{loc}(\mathbb{R}^d)$ we define their time-order product via the formal power series 
\begin{equation}
\label{Eq: time-ordering MSR}
    F \cdot_T G := F \cdot G + \sum_{k, \tilde{k} = 1}^\infty \frac{\hbar^{k+\tilde{k}}}{k! \tilde{k}!}  \langle F^{(k, \tilde{k})}, \Delta_{F}^{\otimes (k+\tilde{k})} G^{(\tilde{k}, k)} \rangle, 
\end{equation}
where we omit the field configurations for the sake of notational ease. In Equation \eqref{Eq: time-ordering MSR}, $\Delta_F \in \mathcal{D}'(\mathbb{R}^d \times \mathbb{R}^d)$ is the Feynman propagator of the free theory. Aside from the different form of the interacting potential, the $S-$matrix and the Bogoliubov map take a form analogous to that given in Equations \eqref{Eq: S-Matrix} and \eqref{Eq: Bogoliubov map}. To further clarify  the action of the Bogoliubov map in this setting, we provide an illustrative example.

\begin{example}
 Let us compute the interacting counterpart of the functionals in Example \ref{Ex: local functionals} up to second order \emph{jointly} in the couplings $\underline{\lambda} := (\lambda, D)$ as clarified in Remark \ref{Rmk: joint exp}. Under the same {\it caveat} discussed in Example \ref{Ex: interacting fields}, to recover the expectation value of the interacting observables it suffices to evaluate them at vanishing field configurations. Also in this case, only terms quadratic in the couplings may survive upon contraction, as the field content of the zeroth and first order contributions does not allow for maximally contracted diagrams. A direct inspection entails that due to the explicit structure of the interaction, functionals containing an equal number of dynamical and auxiliary fields have vanishing expectation. Therefore, we can confine our attention to the functionals $\Phi[f]$ and $\tilde{\Phi}[f]$, which are linear in the fields. 
 \begin{flalign}
     R_V(\Phi[f]) \vert_{\substack{\varphi =0\\ \tilde{\varphi} = 0}} &=_{\mathcal{O}(\underline{\lambda}^3)} \frac{1}{2 \hbar^2} [ 2 V[h] \star_H (V[h] \cdot_T \Phi[f]) - (V[h] \cdot_T V[h]) \cdot_T \Phi[f] \\ &- (V[h] \cdot_{AT} \Phi[f]) - (V[h] \cdot_T V[h]) \star_H \Phi[f]],
 \end{flalign}
and, analogously, for $\tilde{\Phi}[f]$. Let us examine separately the three contributions in the previous expression: 
 \begin{itemize}
     \item[\ding{104}] The first term yields the following maximally contracted diagrams: 
   \begin{equation*}
        \wick{
        \c1 \Phi [h] \c2 \Phi [h] \c3 {\tilde{\Phi}} [h] \, \, \;
        (\c1 {\tilde{\Phi}} [h] \c2 {\tilde{\Phi}} [h] \, \; \c3 \Phi [f])} \propto \lambda D \int_{\mathbb{R}^{3d}} dx \, dy \, dz \, H^2(x,y) H(x,z) h(x) h(y) f(z), 
    \end{equation*}
       \begin{equation*}
        \wick{
        \c1 {\tilde{\Phi}} [h] \c2 {\tilde{\Phi}} [h] \, \, \;
        (\c1 \Phi [h] \c2 \Phi [h] \c3 {\tilde{\Phi}} [h] \, \; \c3 \Phi [f])} \propto \lambda D \int_{\mathbb{R}^{3d}} dx \, dy \, dz \, H^2(x,y) H_F(y,z) h(x) h(y) f(z).  
    \end{equation*}
\item[\ding{104}] The analysis of the second and third contribution in the expansion of the interacting fields up to $\mathcal{O}(\underline{\lambda}^3)$ is analogous, barring minor modifications. Thus, we report here only the maximally contracted diagrams stemming for the term $(V[h] \cdot_T V[h]) \cdot_T \Phi[f]$: 
  \begin{equation*}
        \wick{
        (\c1 \Phi [h] \c2 \Phi [h] \c3 {\tilde{\Phi}} [h] \, \;
        \c1 {\tilde{\Phi}} [h] \c2 {\tilde{\Phi}} [h]) \, \, \; \c3 \Phi [f]} \propto \lambda D \int_{\mathbb{R}^{3d}} dx \, dy \, dz \, H_F^2(x,y) H_F(x,z) h(x) h(y) f(z), 
    \end{equation*}
       \begin{equation*}
        \wick{
        (\c1 {\tilde{\Phi}} [h] \c2 {\tilde{\Phi}} [h] \, \;
        \c1 \Phi [h] \c2 \Phi [h] \c3 {\tilde{\Phi}} [h]) \, \, \; \c3 \Phi [f]} \propto \lambda D \int_{\mathbb{R}^{3d}} dx \, dy \, dz \, H_F^2(x,y) H_F(y,z) h(x) h(y) f(z).  
    \end{equation*}

 \end{itemize}
The fact that the interacting fields have non-zero expectations could have been deduced directly from Equation \eqref{Eq: MSR SPDE}. Indeed, adding a stochastic current to the free equation of motion of a scalar field modifies the one-point function of the underlying theory. As a result, even selecting a Gaussian, quasi-free state, $\omega_1$ no longer vanishes and it contributes to the two-point function. 
\end{example}

\paragraph{Wetterich equation in the Local Potential Approximation}
\label{Sec: Wetterich equation MSR} -- In this paragraph, we shall derive the Renormalization Group flow equations in the local potential approximation for a quantum field theory described by Equation \eqref{Eq: MSR action}. The strategy is similar to the one detailed for the two scalar fields, albeit with some key differences we succinctly highlight in the following. The form of the deformation map - see Equation \eqref{Eq: deformation map MSR} -- suggests a natural choice for the infrared regulator: 
\begin{equation}
\label{Eq: Qk MSR}
Q_k(\chi, \tilde{\chi}) = - \frac{1}{2} \int_{\mathbb{R}^d} dx \, q_k(x) \chi(x) \tilde{\chi}(x),
\end{equation}
where the factor $\frac{1}{2}$ is added to account for the commutativity of the fields. Without loss of generality, we can fix $q_k(x) := k^2 f(x)$, with $f \in C^{\infty}_0(\mathbb{R}^d)$ with support contained in large regions of spacetime. The regularised generating functional of the theory thus reads: 
\begin{equation}
    \label{Eq: Zk MSR}
    Z_k(j, \tilde{j}) := \omega(S(V)^{-1} \star S(V+J+Q_k)),
\end{equation}
where $$J(\chi, \tilde{\chi}) := \int_{\mathbb{R}^d} dx \, [\tilde{j}(x) \chi(x) + j(x) \tilde{\chi}(x)], \quad j, \tilde{j} \in C^{\infty}_0(\mathbb{R}^d)$$ is an external current, while $\omega$ is a state on the quantum algebra, not necessarily quasi-free. The connected counterpart of the generating functional $Z_k$ is given by
\begin{equation}
\label{Eq: Wk}
    W_k (j, \tilde{j}) := -i \log(Z_k(j, \tilde{j}),
\end{equation}
its first-order derivatives yielding the classical fields, \textit{i.e.}, 
\begin{equation}
    \label{Eq: classical fields}
    \frac{\delta W_k}{\delta \tilde{j}(x)} =: \varphi(x) \quad \text{and} \quad \frac{\delta W_k}{\delta j(x)} =: \tilde{\varphi}(x).
\end{equation}
The effective action and average effective action are
\begin{equation}
\label{Eq: Gammak}
    \tilde{\Gamma}_k (\varphi, \tilde{\varphi}) := W_k(j, \tilde{j}) - J(j_{\tilde{\varphi}}, \tilde{j}_{\varphi}), \quad \Gamma_k(\varphi, \tilde{\varphi})  = \tilde{\Gamma}_k(\varphi, \tilde{\varphi}) - Q_k(\varphi, \tilde{\varphi}), 
\end{equation}
with $j_{\tilde{\varphi}}$ and $\tilde{j}_{\varphi}$ the currents which solve Equation \eqref{Eq: classical fields} as a function of the classical fields. In the present setting, the \textbf{Polchinski equation} is of the form: 
\begin{equation}
    \label{Eq: Polchinski equation MSR}
     \partial_k W_k(j, \tilde{j}) = - \frac{1}{2} \int_{\mcM} d\mu_x \, \frac{1}{Z_k(j, \tilde{j})} \partial_k q_{k}(x) \omega\left(S(V)^{-1} \star (S(V + J + Q_k) \cdot_{T} T\eta \tilde{\eta})(x)\right),  
\end{equation}
while the \textbf{Wetterich equation} reads: 
\begin{equation}
    \label{Eq: Wetterich equation MSR}
     \partial_k W_k(j, \tilde{j}) = - \frac{1}{2} \int_{\mcM} d\mu_x \, \frac{1}{Z_k(j, \tilde{j})} \partial_k q_{k}(x) \omega\left(S(V)^{-1} \star (S(V + J + Q_k) \cdot_{T} T\eta \tilde{\eta})(x) - \varphi(x) \tilde{\varphi}(x)\right).  
\end{equation}
To find approximate local solutions to Equation \eqref{Eq: Wetterich equation MSR}, we work in the LPA scheme, detailed in Paragraph \ref{Sec: Wetterich for two scalar fields}. The key difference with respect to the analysis carried out for the two scalar fields is that the effective mass of the theory is now of the form
\begin{equation*}
    M := I_0^t -I_0 = -\int_{\mathbb{R}^d} dx \, \left( \frac{\delta^2 U_{k}}{\delta \varphi \delta \tilde{\varphi}} (\varphi_{cl}, \tilde{\varphi}_{cl}) - m^2\right)\varphi(x) \tilde{\varphi}(x)
\end{equation*}
in which only the mixed second-order derivatives of the effective potential appear. As before, in order to obtain explicit expressions, we specify the analysis on $4-$dimensional Minkowski spacetime $(\mathbb{R}^4, \eta)$ and we select as $\omega$ the vacuum state at zero temperature. Furthermore, we make the following ansatz on the structure of the effective potential $U_k$, namely, we assume that 
\begin{equation}
    \label{Eq: Uk MSR}
    U_k(\varphi, \tilde{\varphi}) = U_{k,0} + m^2_k \varphi \tilde{\varphi} + \frac{\lambda_k}{2} \varphi^2 \tilde{\varphi} - D_k \tilde{\varphi}^2. 
\end{equation}
Following slavishly the steps detailed in Paragraph \ref{Sec: Wetterich for two scalar fields}, the Wetterich equation can be recast in the form of an ordinary differential equation for $U_k$ with respect to the scale parameter $k$, \textit{i.e.}, 
\begin{equation}
    \label{Eq: partial Uk MSR}
    \partial_k U_k(\varphi, \tilde{\varphi}) = \frac{k}{8 \pi^2} (k^2 + m_k^2 + \frac{\lambda_k}{2} \varphi) \log \left(\frac{k^2 + m_k^2 + \frac{\lambda_k}{2} \varphi}{\mu^2} \right).
\end{equation}

\begin{remark}
   The regularised mass term in Equation \eqref{Eq: partial Uk MSR} is independent of both the auxiliary field $\tilde{\varphi}$ and of the diffusion coefficient $D_k$. This feature is not accidental, but rather reflects the structural form of the MSR action -- see Equation \eqref{Eq: MSR action}, where mass-like contributions arise solely in the mixed sector proportional to $\varphi \tilde{\varphi}$. Within the local potential approximation, this implies that the second functional derivatives of the effective action and, hence, the regulator-modified inverse propagator, depend only on the field $\varphi$. Although alternative choices of regularisation could in principle be considered, the present one preserves the causal and response-field structure of the theory, ensuring that the auxiliary field $\tilde{\varphi}$ retains its r\^ole as a response variable, while stochastic effects enter exclusively through correlation functions. 
\end{remark}

From Equation \eqref{Eq: partial Uk MSR}, one can directly deduce the beta functions for the relevant couplings: 
\begin{flalign}
\label{Eq: Uk0 MSR}
k \partial_k U_{k,0} &= \frac{k^2}{8 \pi^2} (k^2 +m_k^2) \log \left( \frac{k^2 + m_k^2}{\mu^2} \right), \\
\label{Eq: mk2 MSR}
k \partial_k m^2_k &= \frac{k^2}{8 \pi^2} \frac{\lambda_k}{2} \left[ \log \left( \frac{k^2 + m_k^2}{\mu^2} + 1\right)\right],\\
\label{Eq: Dk MSR}
k \partial_k D_k &= 0,\\
\label{Eq: lambdak MSR}
k \partial_k \lambda_k &= \frac{k^2}{4 \pi^2} \frac{\lambda_k^2}{4} \frac{1}{k^2 + m_k^2}.
\end{flalign}
The auxiliary field $\tilde\varphi$ does not play the r\^ole of an independent running coupling in the flow equations above. Rather, the beta 
functions in Equations \eqref{Eq: Uk0 MSR}, \eqref{Eq: mk2 MSR}, \eqref{Eq: Dk MSR} and \eqref{Eq: lambdak MSR} are obtained by projecting the Wetterich equation onto the corresponding monomials in the fields. In particular, the equations for $m_k^2$, $D_k$ and $\lambda_k$ are read off from the terms proportional to $\tilde\varphi$. Hence, evaluating the auxiliary field on a constant 
configuration, and in particular setting $\tilde\varphi=1$, only removes an overall $k$-independent factor and does not modify the asymptotic behaviour of 
the running couplings. The first equation, governing the field-independent part $U_{k,0}$ of the effective potential, is by construction independent of 
$\tilde\varphi$ and is therefore unaffected by this projection.

\begin{remark}
The absence of a non-trivial flow for the noise amplitude $D_k$ is a direct consequence of the Martin-Siggia-Rose structure of the theory. Indeed, the diffusion coefficient multiplies a purely quadratic term in the auxiliary field, whilst all interaction vertices remain linear in the physical field $\varphi$. As a result, within the present truncation, no loop diagrams can generate contributions proportional to $\tilde{\varphi}^2$, owing also to the causal structure of the propagators. As a consequence, the RG flow does not induce any renormalisation of the noise sector, and $D_k$ remains scale-independent. This property is stable under truncations preserving the MSR structure, but it may fail in the presence of multiplicative noise or higher-order couplings. This is consistent with the results obtained in \cite{Duch_2025}, albeit via a different approach.  
\end{remark}

In order to directly compare our results with those derived in \cite{Duch_2025}, we confine the analysis to the three-dimensional setting, namely, to $(\mathbb{R}^3, \eta)$, and we require the reference state $\omega$ to be the vacuum state at zero temperature. By \cite[Eq. (95)]{Brunetti_2009}, we have that the Hadamard function in the odd-dimensional setting can be written at the level of integral kernels as 
\begin{equation*}
    H_{M_k}(x) = \frac{1}{4 \sin (\frac{d}{2}-1) \pi} (2 \pi)^{\frac{2-d}{2}} {M_k}^{\frac{d}{2}- 1} |x^2|^{\frac{2-d}{4}} I_{\frac{d}{2}-1}(\sqrt{M_k^2 |x^2|}),
\end{equation*}
for $x^2 < 0$ and $d \ge 3$. Recalling that $I_\nu(y) \sim \frac{1}{\Gamma(\nu +1)} \left(\frac{y}{2}\right)^\nu$, with $F$ an entire analytic function, $H_M$ is smooth as a function of $M_k^2$. 
In particular, setting $d=3$, we can write $$\partial_k U_k(\varphi, \tilde{\varphi}) = \lim_{y \rightarrow x} H_{M_k}(x - y) =  \frac{k}{4\pi} M_k, \quad \text{with} \quad M_k = k^2 + m_k^2 + \frac{\lambda_k}{2} \varphi,$$ where $U_k$ is chosen as in Equation \eqref{Eq: Uk MSR}. Eliminating the auxiliary field by evaluating the final expressions at $\tilde{\varphi} = 1$, we thus obtain the beta functions for the relevant couplings: 
\begin{flalign}
    \label{Eq: 3D partial Uk0 MSR}
    k \partial_k U_{k,0} &=  \frac{k^2}{4\pi} (k^2 + m_k^2), \\
    \label{Eq: 3D mk}
    k \partial_k m_k^2 &= \frac{k^2}{8\pi} \lambda_k, \\
    \label{Eq: 3D Dk}
    k \partial_k D_k &= 0, \quad \text{no scaling of the noise amplitude} \\
    \label{Eq: 3D lambda k}
    k \partial_k \lambda_k &= 0,  \quad \text{no scaling of the coupling constant}
\end{flalign}
These results are in agreement with \cite{Duch_2025}. The perturbative approach adopted in the present work, however, has the additional advantage that its validity extends beyond the subcritical regime, \textit{i.e.}, for dimensions $d > 3$, thereby enabling the study of the scaling behaviour of the relevant couplings on higher-dimensional backgrounds.  

\subsection{(B) Self-interacting Dirac fields}
\label{Sec: Thirring model}
In this section, we discuss the algebraic quantum field theoretical framework applied to an interacting Fermionic model, \textit{i.e.}, the Thirring model in dimension $d \ge 2$. The present approach differs from earlier work, \textit{e.g.}, \cite{Rejzner_2011}, in that the anticommutativity of Fermionic fields is implemented at the quantum level via the deformation map, rather than by introducing Grassmann variables. The present approach differs from earlier treatments, such as \cite{Rejzner_2011}, in that no Grassmann-valued classical fields are introduced. Furthermore, we shall regard spinor and cospinor fields as independent field variables throughout the initial construction, imposing that they are related only at a later stage. This choice is particularly convenient for the perturbative algebraic formulation adopted below. The latter draws mainly on \cite{BonicelliCosteriDappiaggiRinaldi2024}, where an analogous framework is developed for the Thirring model, aside from the stochastic aspects considered therein, and on \cite{DappiaggiHackPinamonti2009}, where the algebraic quantization of the free Dirac field on arbitrary globally hyperbolic Lorentzian spacetimes is presented in detail.

Henceforth, we shall assume that the background $\mcM$ is spin, \textit{i.e.}, that its second Stiefel–Whitney class vanishes. This ensures the existence of at least one admissible spin structure on $\mcM$, see \textit{e.g.} \cite{LawsonMichelsohn1989}. To guarantee uniqueness, we further require that the spin bundle is trivial. Under these assumptions, the spinor and cospinor fields are smooth sections of the Dirac bundle and of its dual respectively, namely $\psi \in \Gamma^{\infty}(D\mcM)$ and $\bar{\psi} \in \Gamma^{\infty}(D^*\mcM)$. Under the triviality assumptions, we have that $\psi, \bar{\psi} \in C^{\infty}(\mcM; \Sigma_d) \simeq C^{\infty}(\mcM) \otimes \Sigma_d$, where $\Sigma_d = \mathbb{C}^{N_d}$ with $N_d := 2^{\lfloor \frac{d}{2} \rfloor}$ is the \emph{spinor space}. 

\begin{notation}
\label{Not: Dirac doublet}
    Also in this case we can introduce a convenient matrix notation -- see Remark \ref{Rmk: field doublet}. We define a field doublet $\Psi := \begin{pmatrix}
        \psi \\
        \bar{\psi}
    \end{pmatrix} \in \Gamma^{\infty}(D\mcM \oplus D^* \mcM) \simeq C^{\infty}(\mcM; \Sigma_d) \otimes \mathbb{C}^2 \simeq C^{\infty}(\mcM) \otimes \mathbb{C}^{N_d} \otimes \mathbb{C}^2 $, where we use the fact that $\Sigma_d \simeq \Sigma_d^*$. Henceforth, we shall carefully distinguish between the $\mathbb{C}^2-$indices, denoted by Latin letters, and the Dirac indices, denoted by Greek letters. For instance, $$\Psi_{i} = \begin{cases}
    \psi^{\rho} \quad \text{if} \, i = 1 \\ \bar{\psi}_{\rho} \quad \text{if} \, i = 2 \end{cases} \rho \in \{1, \ldots, N_d\}.$$ 
\end{notation}

A key ingredient in the dynamical descriptions of Fermions are the gamma fields $\gamma \in \Gamma(T^*\mcM \otimes \text{End}(S\mcM)) \simeq \Gamma^{\infty}(T^*\mcM \otimes \text{End}(\mathbb{C}^{N_d}))$. Locally, fixing a global frame $\{e_a\}_{a = 1, \ldots, N_d}$ of $S\mcM$, they can be written in coordinate expression as $\gamma^{\mu}(x) = e_a^{\mu}(x) \gamma(e_a)$, where $\gamma^{\mu}(x) \in \text{End}(\mathbb{C}^{N_d})$ are the Dirac gamma matrices. In this context, the operators ruling the dynamics are the Dirac operator and its formal adjoint, which are explicitly given by
\begin{flalign*}
    \slashed{D}^{\rho}_{\rho'} &:= (i (\gamma^{\mu})^{\rho}_{\rho'} \nabla_\mu - m \mathbb{I}^{\rho}_{\rho'}): \Gamma^{\infty}(D\mcM; \Sigma_d) \rightarrow \Gamma^{\infty}(D\mcM; \Sigma_d), \\
     (\slashed{D}^*)^{\rho}_{\rho'} &:= (i (\gamma^{\mu})^{\rho}_{\rho'} \nabla_\mu +  m \mathbb{I}^{\rho}_{\rho'}): \Gamma^{\infty}(D^*\mcM; \Sigma^*_d) \rightarrow \Gamma^{\infty}(D^*\mcM; \Sigma^*_d).
\end{flalign*}
These operators are Green hyperbolic and, hence, they admit unique advanced and retarded fundamental solutions $\Delta^{\psi}_{A/R}, \Delta^{\bar{\psi}}_{A/R} \in \mathcal{D}'(\mcM \times \mcM;\Sigma_d \otimes \Sigma_d^*)$, which can be built out of the ones for the Klein-Gordon operator via the Lichnerowicz–Weitzenb\"ock formula -- see \cite{LawsonMichelsohn1989}. Concretely, if $\Delta_{A/R} \in \mathcal{D}'(\mcM \times \mcM)$ denote the advanced and retarded propagators for a real, scalar field on $(\mcM, g_{\mcM})$, $\Delta^{\psi}_{A/R} = \slashed{D}^* \Delta_{A/R}$ and $\Delta^{\bar{\psi}}_{A/R} = \slashed{D} \Delta_{A/R}$ are those associated to the Dirac operator and to its formal adjoint, respectively. The Hadamard two-point functions for the Dirac field can thus be obtained from a scalar Hadamard two-point function $\Delta_+$ by applying the Dirac operator and inserting the appropriate spinorial parallel transport. We shall denote these by 
\begin{equation*}
    (\Delta^{\psi}_+)^{\rho}_{\rho'}, (\Delta^{\bar{\psi}}_+)^{\rho}_{\rho'} \in \mathcal{D}'(\mcM \times \mcM; \Sigma_d \otimes \Sigma_d^*) \simeq \mathcal{D}'(\mcM \times \mcM; \mathbb{C}^{2N_d}) ,
\end{equation*}
or we can equivalently write, in a more compact matrix notation 
\begin{equation*}
    \boldsymbol{\Delta}_+^{\Psi} := \begin{pmatrix}
        \Delta^{\psi}_+ & 0 \\
        0 & -\Delta^{\bar{\psi}}_+
    \end{pmatrix} \in \mathcal{D}'(\mcM \times \mcM, \text{End}(\mathbb{C}^{2N_d})),
\end{equation*}
where the minus sign is included for later convenience. 

The full action of the Thirring theory is thus specified by $I_0[\psi, \bar{\psi}] = I_0[\psi] + I_0[\bar{\psi}]$, with
\begin{flalign}
    \label{Eq: Thirring action}
    I_0[\psi] &= - \int_{\mcM} d\mu_x \, \left[ i \bar{\psi}_{\rho}(x) (\gamma^{\mu})^{\rho}_{\rho'} (x)  \nabla_{\mu} \psi^{\rho'}(x) - m \bar{\psi}_{\rho}(x) \psi^{\rho}(x) - \frac{\lambda}{2} (\bar{\psi} \gamma^{\mu} \psi)(x) (\bar{\psi} \gamma_{\mu} \psi)(x) \right] f(x), \\
     I_0[\bar{\psi}] &= - \int_{\mcM} d\mu_x \, \left[ i \psi^{\rho}(x) (\gamma^{\mu})^{\rho'}_{\rho} (x)  \nabla_{\mu} \bar{\psi}_{\rho'}(x) - m \bar{\psi}_{\rho}(x) \psi^{\rho}(x) - \frac{\lambda}{2} (\bar{\psi} \gamma^{\mu} \psi)(x) (\bar{\psi} \gamma_{\mu} \psi) (x) \right] f(x).
\end{flalign}
Here $\lambda \in \mathbb{R}$ is the coupling constant, while $f \in \mathcal{D}(\mcM)$ is a smooth, compactly supported cut off.

For the sake of notational brevity, let us denote by $V := \Sigma_d$ and let us introduce the vector space $W:= \bigoplus_{k, k' \ge 0} (V^{\otimes k} \otimes (V^*)^{\otimes k'})$ with $V^{\otimes 0} = (V^*)^{\otimes 0} = \mathbb{C}$ by convention. In the present setting, the configuration space is the Fréchet space $\mathcal{E}_V(\mcM) := \Gamma^{\infty}(D\mcM \oplus D^* \mcM) \simeq C^{\infty}(\mcM; V) \oplus C^{\infty}(\mcM; V),$ endowed with the product topology. Furthermore, we denote by $\mathcal{F}_W(\mcM)$ the space of functionals $u: \Gamma^{\infty}(D\mcM \oplus D^* \mcM) \rightarrow W$, which are smooth in the Bastiani sense -- see. 

\begin{remark}
\label{Rmk: functional valued distr}
To be precise, such $u$ should be understood as a functional-valued distribution, {\it i.e.}, $u \in \mathcal{D}'(\mathcal{M}; \mathcal{F}_W(\mathcal{M}))$. With a slight abuse of notation, we may still write $u \in \mathcal{F}_W(\mathcal{M})$, when referring to the evaluated functional $u(f; \cdot, \cdot)$ for a given but fixed test function $f \in \mathcal{D}(\mcM)$.
\end{remark}

\begin{remark}
    \label{Rmk: components of u}
    Bearing the above remark in mind, let  $\{e_{\rho}\}_{\rho = 1, \ldots, N_d}$ be an arbitrary basis of $V$ with dual counterpart given by $\{e^{\rho}\}_{\rho=1, \ldots, N_d}$. Fixed $f \in \mathcal{D}(\mcM)$, for any pair of field configurations $(\psi, \bar{\psi}) \in \Gamma^{\infty}(D\mcM) \times \Gamma^{\infty}(D^*\mcM)$, we can define the component of $u$ with values in $W^{(p,q)} = V^{\otimes p} \otimes (V^*)^{\otimes q}$ as 
\begin{equation}
    \label{Eq: component of Dirac functionals}
    u^{\lambda_1 \ldots \lambda_p}_{\rho_1 \ldots \rho_q} (f; \psi, \bar{\psi}) := \langle u(f; \psi, \bar{\psi}), e^{\lambda_1} \otimes \ldots \otimes e^{\lambda_p} \otimes e_{\rho_1} \otimes \ldots \otimes e_{\rho_q} \rangle_V, 
\end{equation}
where we indicate by $\langle \cdot, \cdot \rangle_V$ the duality pairing between $V$ and $V^*$. It descends that, exploiting the natural embedding $W^{(p,q}) \hookrightarrow W$ for $p, q \ge 0$, any $u \in \mathcal{D}'(\mcM; \mathcal{F}_W(\mcM))$ can be decomposed as 
\begin{equation}
    \label{Eq: decomposition of Dirac functionals}
    u = \sum_{p,q \ge 0} u^{\lambda_1 \ldots \lambda_p}_{\rho_1 \ldots \rho_q} e_{\lambda_1} \otimes \ldots \otimes e_{\lambda_p} \otimes e^{\rho_1} \otimes \ldots \otimes e^{\rho_q}, 
\end{equation}
where the right hand-side is as per Equation \eqref{Eq: component of Dirac functionals}. 
\end{remark}

\noindent The space $\mathcal{F}_W(\mcM)$ can be equipped with a notion of Fréchet derivative: for any $u \in \mathcal{F}_W(\mcM)$, we define its $(k,k')-$functional derivative as
\begin{flalign}
\notag
    u^{(k,k')}: \Gamma^{\infty}(D\mcM) \otimes \ldots \otimes \Gamma^{\infty}(D \mcM) \otimes \Gamma^{\infty} (D^* \mcM) \otimes \ldots \otimes \Gamma^{\infty}(D^* \mcM) \times \mathcal{E}_V(\mcM) &\longrightarrow W \\ \notag
    u^{(k,k')}(\eta_1 \otimes \ldots \otimes \eta_k \otimes \bar{\eta}_1 \otimes \ldots \otimes \bar{\eta}_{k'}; \psi, \bar{\psi}) &= \\ \label{Eq: Thirring fun derivatives} = \frac{\partial^{k+k'}}{\partial s_1 \ldots \partial s_k \partial \bar{s}_1 \ldots \partial \bar{s}_k} u(s_1 \eta_1 + \ldots + s_k \eta_k + \psi; \bar{s}_1 \bar{\eta}_1 + \ldots + \bar{s}_{k'} \bar{\eta}_{k'} + \bar{\psi}) &\big \vert_{\substack{s_1 = \ldots = s_k = 0 \\ \bar{s}_1 = \ldots = \bar{s}_{k'} = 0}}.
\end{flalign}
In particular, if $\exists (\tilde{k},\tilde{k}') \in \mathbb{N}_0 \times \mathbb{N}_0$ such that $u^{(k,k')} = 0$ for any $k \ge \tilde{k}$ or $k' \ge \tilde{k}'$, we say that $u$ is a polynomial functional of the fields, \textit{i.e.}, $u \in \text{Pol}_W(\mcM)$. 

\begin{remark}
\label{Rmk: wavefront set for vector-valued distributions}
Recall that for any $u \in \mathcal{D}'(\mcM, \text{Pol}_W(\mcM))$ -- \textit{cf.}, Remark \ref{Rmk: functional valued distr} --, $u(\psi, \bar{\psi}) \equiv u(\cdot, \psi, \bar{\psi})$ is a vector-valued distribution for any pair of field configurations $(\psi, \bar{\psi}) \in \mathcal{E}_V(\mcM)$. We thus consider the following non-polarised definition of the wavefront set of $u$: 
\begin{equation*}
    \text{WF}(u(\psi, \bar{\psi})) := \bigcup_{i=1}^{\text{dim} W} WF(u_i(\psi, \bar{\psi})), \quad u_i \in \mathcal{D}'(\mcM, \mathcal{F}(\mcM)), 
\end{equation*}
where $u_i$ denotes the $i-$th component of $u$ with respect to a chosen basis of $V$. 
\end{remark}

In light of the previous remark, we can introduce the following notable subspaces of $\mathcal{F}_W(\mcM)$:
\begin{itemize}
    \item[\ding{104}] We call \emph{microcausal functionals} the elements of 
    \begin{equation}
        \label{Eq: microcausal func Thirring}
        \mathcal{F}_{\mu c, W}(\mcM) := \{u \in \mathcal{F}_W(\mcM) \, | \, \text{WF}(u^{(k,k')}(\psi, \bar{\psi})) \subseteq G_{k+k'}(\mcM), \forall k,k' \in \mathbb{N}_0\}, 
    \end{equation}
    where 
    \begin{equation*}
        G_{k + k'} (\mcM) = T^*(\underbrace{\mcM \times \ldots \times \mcM}_{k+k'-\text{times}} \setminus \left( \bigcup_{x \in \mcM} (V^+_x)^{k+k'} \cup \bigcup_{x \in \mcM} (V^{-}_x)^{k+k'}\right)), 
    \end{equation*}
    $V^{\pm}_x$ denoting the future (\textit{resp}. past) pointing light cones at $x \in \mcM$. 
    \item[\ding{104}] We call \emph{regular functionals} the elements of 
     \begin{equation}
        \label{Eq: regular func Thirring}
        \mathcal{F}_{reg, W}(\mcM) := \{u \in \mathcal{F}_{\mu c, W}(\mcM) \, | \, u^{(k,k')}(\psi, \bar{\psi}) \in \mathcal{D}'(\mcM^{k+k'}), \forall k,k' \in \mathbb{N}_0\}. 
    \end{equation}
    \item[\ding{104}] Denoting by $\text{supp}(u(\psi, \bar{\psi})) := \bigcup_{i=1}^{\text{dim} W} \text{supp}(u_i(\psi, \bar{\psi}))$, we call \emph{local functionals} the elements of the space
      \begin{equation}
        \label{Eq: local func Thirring}
        \mathcal{F}_{loc, W}(\mcM) := \{u \in \mathcal{F}_{\mu c, W}(\mcM) \, | \, \text{supp}(u^{(k,k')}(\psi, \bar{\psi})) \subseteq \text{Diag}_{k+k'}(\mcM), \forall k,k' \in \mathbb{N}_0\}. 
    \end{equation}
\end{itemize}
The \emph{classical microcausal algebra} of functionals is the unital, commutative, $*-$algebra $\mathcal{A}^W_{cl}(\mcM) := (\mathcal{F}_{\mu c, W}(\mcM), \cdot, \ast)$ of microcausal functionals as per Equation \eqref{Eq: microcausal func Thirring}, endowed with the pointwise product as per Equation \eqref{Eq: pointwise prod} and with an involution given by the complex conjugation. 

\begin{example}
\label{Ex: spinor fields}
Central to the construction of the quantum theory are functionals that are linear in the fields, \textit{i.e.}, 
    \begin{flalign}
    \label{Eq: spinor functional} 
    \Psi^{\rho}[f] (\psi, \bar{\psi}) &:= \int_{\mcM} d\mu_x \, \psi^{\rho}(x) f(x), \quad \Psi := \Psi^{\rho} e_{\rho} \in \text{Pol}_V(\mcM),  \\
    \label{Eq: cospinor functional}
    \bar{\Psi}_{\rho}[f] (\psi, \bar{\psi}) &:= \int_{\mcM} d\mu_x \, \bar{\psi}_{\rho}(x) f(x), \quad \bar{\Psi} := \bar{\Psi}_{\rho} e^{\rho} \in \text{Pol}_{V^*}(\mcM),
    \end{flalign}
where $f \in \mathcal{D}(\mcM)$, $(\psi, \bar{\psi}) \in \Gamma^{\infty} (D \mcM) \times \Gamma^{\infty} (D^*\mcM)$, whilst $\{e_{\rho}\}$ and $\{e^{\rho}\}$ are as per Remark \ref{Rmk: components of u}. It follows by direct inspection that these functionals are both local and microcausal and, hence, they form the building blocks of the classical algebra $\mathcal{A}^W_{cl}$. Henceforth, we shall implicitly identify $\Psi$ (\textit{resp}. $\bar{\Psi}$) with its component $\Psi^{\rho}$ (\textit{resp}. $\bar{\Psi}_{\rho}$) and write $\Psi^{\rho} \in \text{Pol}_V(\mcM)$ (\textit{resp}. $\bar{\Psi}_{\rho} \in \text{Pol}_{V^*} (\mcM)$). 
\end{example}

Observe that, at the present stage, the classical algebra is commutative per construction. To incorporate the information on the CAR into the algebraic structure, a convenient strategy consists in selecting a deformation map which is sensitive to the ordering of fields within functionals. Consider
\begin{flalign*}
    D_{\hbar \Delta^{\psi}_+} &:= \langle \hbar \Delta^{\psi}_+, \frac{\delta}{\delta \psi} \otimes \frac{\delta}{\delta \bar{\psi}} \rangle = \hbar \int_{\mcM^2} d\mu_x d\mu_y \, (\Delta_+^{\psi})^{\rho}_{\rho'} \frac{\delta}{\delta \psi^{\rho}(x)} \otimes \frac{\delta}{\delta \bar{\psi}_{\rho'}(y)}, \\   D_{\hbar \Delta^{{\bar{\psi}}}_+} &:= \langle \hbar \Delta^{\bar{\psi}}_+, \frac{\delta}{\delta \bar{\psi}} \otimes \frac{\delta}{\delta \psi} \rangle = \hbar \int_{\mcM^2} d\mu_x d\mu_y \, (\Delta_+^{\bar{\psi}})^{\rho'}_{\rho} \frac{\delta}{\delta \bar{\psi}_{\rho}(x)} \otimes \frac{\delta}{\delta \psi^{\rho'}(y)},
\end{flalign*}
and define the deformation map as
\begin{equation}
\label{Eq: deformation map Thirring}
    D_{\hbar \text{tr}_{\mathbb{C}^2} \boldsymbol{\Delta}^{\Psi}_+} := D_{\hbar \Delta_+^{\psi}} - D_{\hbar \Delta_+^{\bar{\psi}}}.
\end{equation}
Notice that the minus sign in Equation \eqref{Eq: deformation map Thirring} naturally encodes the anticommutativity of the spinor fields, thus avoiding the need to introduce Grassmann variables. The deformed product is thus given by Equation \eqref{Eq: deformed product 2 scalar fields}, with $\boldsymbol{\Delta}_+$ replaced by $\boldsymbol{\Delta}^{\Psi}_+$ as in Equation \eqref{Eq: deformation map Thirring}. Formally, for any pair of microcausal functionals $F, G \in \mathcal{F}_{\mu c, W}(\mcM)$, we can write 
\begin{equation}
    \label{Eq: deformed product Thirring}
    F \star G = F \cdot G + \sum_{\substack{k_1, k_2 \ge 1\\ k = k_1 + k_2}} \frac{(-1)^{k_2} \hbar^{k} }{k_1! k_2!} \langle F^{(k_1, k_2)}, (\Delta_{+}^{\psi})^{\otimes k_1} \otimes (\Delta_+^{\bar{\psi}})^{\otimes k_2} G^{(k_2, k_1)} \rangle, 
\end{equation}
where the Fréchet derivatives are as per Equation \eqref{Eq: Thirring fun derivatives}. 

\begin{example}
As an example, we show that Equation \eqref{Eq: deformed product Thirring} yields the expected correlations between spinor fields. More precisely, a direct computation entails  
\begin{flalign*}
    \Psi^{\rho}[f] (\psi, \bar{\psi}) \star \bar{\Psi}_{\rho'}[f'] (\psi, \bar{\psi}) & = \Psi^{\rho}[f] (\psi, \bar{\psi}) \cdot \bar{\Psi}_{\rho'}[f'] (\psi, \bar{\psi}) + \hbar \int_{\mcM^2} d\mu_x d\mu_y \, (\Delta_+^{\psi})^{\rho}_{\rho'}(x,y) f(x) f'(y), \\
    \bar{\Psi}_{\rho'}[f'] (\psi, \bar{\psi}) \star \Psi^{\rho}[f] (\psi, \bar{\psi}) & =  \bar{\Psi}_{\rho'}[f'] (\psi, \bar{\psi}) \cdot \Psi^{\rho}[f] (\psi, \bar{\psi}) - \hbar \int_{\mcM^2} d\mu_x d\mu_y \, (\Delta_+^{\bar{\psi}})^{\rho}_{\rho'}(x,y) f(x) f'(y), 
\end{flalign*}
where $\Psi^{\rho}[f]$ and $\bar{\Psi}_{p'}[f']$ are the linear functionals in Example \ref{Ex: spinor fields}, with $f, f' \in \mathcal{D}(\mcM)$. 
Notice that, by construction, the ordering of the fields becomes relevant: whenever a cospinor field appears as the first argument of the deformation map, a minus sign arises in the correlations.
\end{example}

In this setting, consider globally-defined bi-distributions $H^{\psi}, H^{\bar{\psi}} \in \mathcal{D}'(\mcM \times \mcM, \text{End}(\mathbb{C}^{2N_d}))$ such that they locally coincide up to smoothing with the Hadamard two-point functions $\Delta_{+}^{\psi}$ and $-\Delta_+^{\bar{\psi}}$, respectively. Then, the local and covariant algebra of microcausal polynomial functionals is the anti-commutative, quantum algebra obtained by deformation with respect to $H^{\psi}$ and $H^{\bar{\psi}}$. The Wick-ordering of Fermionic observables is then implemented similarly to the scalar case -- see Section \ref{Sec: two scalar fields} and, in particular, Equation \eqref{Eq: alpha H}.

\textbf{Interacting theory}. Let us consider a self-interacting Dirac field theory on $(\mcM, g_{\mcM})$ with an interaction of the form: 
\begin{equation}
    \label{Eq: Thirring potential}
    V[h](\psi, \bar{\psi}) := \int_{\mcM} d \mu_x \, \frac{\lambda}{2} (\bar{\psi} \gamma^{\mu} \psi)(\bar{\psi} \gamma_{\mu} \psi) h(x), \quad h \in \mathcal{D}(\mcM), \lambda \in \mathbb{R}_+.  
\end{equation}
For any pair of microcausal functionals $F,G \in \mathcal{F}_{\mu c, W}(\mcM)$, we define their time-ordered product as 
\begin{equation}
    \label{Eq: Thirring time-order product}
    F \cdot_T G := F \cdot G + \sum_{\substack{k_1, k_2 \ge 1 \\k= k_1 + k_2}} \frac{(-1)^{k_2} \hbar^k}{k_1! k_2!} \langle F^{(k_1, k_2)}, \Delta_{F, \psi}^{\otimes k_1} \otimes \Delta_{F, \bar{\psi}}^{\otimes k_2} G^{(k_2, k_1)} \rangle,  
\end{equation}
where for notational ease we avoid to indicate the field configurations. Here, $\Delta_{F, \psi}, \Delta_{F, \bar{\psi}} \in \mathcal{D}'(\mcM \times \mcM; \text{End}(\mathbb{C}^{N_d}))$ denote the Feynman propagators for the Dirac operator and its adjoint, respectively. Note in addition the minus sign in Equation \eqref{Eq: Thirring time-order product}, which arises from the ordering of the spinorial fields within the functionals.    

\begin{example}
For the sake of clarity, we illustrate an explicit application of Equation \eqref{Eq: Thirring time-order product}. Consider the linear functionals $\Psi^{\rho} \in \text{Pol}_V(\mcM)$ and $\bar{\Psi}_{\rho'} \in \text{Pol}_{V^*}(\mcM)$ and let us compute 
\begin{flalign*}
    \Psi^{\rho} [f] (\psi, \bar{\psi}) \cdot_T \bar{\Psi}_{\rho'} [f'] (\psi, \bar{\psi}) &= \Psi^{\rho} [f] (\psi, \bar{\psi}) \cdot \bar{\Psi}_{\rho'} [f'] (\psi, \bar{\psi)} + \hbar \int_{\mcM^2} d\mu_x d\mu_y \, \Delta_{F, \psi}(x,y) f(x) f'(y), \\ 
    \bar{\Psi}_{\rho'} [f'] (\psi, \bar{\psi}) \cdot_T  \Psi^{\rho} [f] (\psi, \bar{\psi}) &=  \bar{\Psi}_{\rho'} [f'] (\psi, \bar{\psi})  \cdot \Psi^{\rho} [f] (\psi, \bar{\psi}) - \hbar \int_{\mcM^2} d\mu_x d\mu_y \, \Delta_{F, \bar{\psi}}(x,y) f'(x) f(y). 
\end{flalign*}
\end{example}
To define the interacting counterpart of a given vector-valued functional $F \in \text{Pol}_{loc, W}(\mcM)$, we can exploit the action of the Bogoliubov map in Equation \eqref{Eq: Bogoliubov map}, with the potential $V$ as per Equation \eqref{Eq: Thirring potential}. Interacting observables are expressed as formal power series in the coupling parameter $\lambda \in \mathbb{R}_+$, and truncating the series at a given order yields an approximation of the interacting observable to that perturbative order as illustrated by the ensuing example. 

\begin{example}
In analogy with the analysis of two mutually interacting scalar fields, \textit{cf}., Example \ref{Ex: interacting fields}, we now compute the expectation value of the interacting counterparts of the following functionals up to $\mathcal{O}(\lambda^3)$: 
\begin{flalign*}
    \Psi^{\rho}[f] (\psi, \bar{\psi}), \quad (\bar{\Psi}_{\rho'} \Psi^{\rho}) [f] (\psi, \bar{\psi}) &:= \int_{\mcM} d\mu_x \, \psi^{\rho} (x) \bar{\psi}_{\rho'}(x) f(x), \\ \Psi^2 [f] (\psi, \bar{\psi}) \equiv (\Psi^{\rho} \Psi^{\rho'})[f] (\psi, \bar{\psi}) &:= \int_{\mcM} d\mu_x \, \psi^{\rho}(x) \psi^{\rho'}(x) f(x),
\end{flalign*}
 for any $f \in \mathcal{D}(\mcM)$ and $(\psi, \bar{\psi}) \in \Gamma^{\infty}(D\mcM) \times \Gamma^{\infty}(D^*\mcM)$. By direct inspection, $R_V(\Psi^{\rho}[f]) \big \vert_{\psi, \bar{\psi} = 0} =_{\mathcal{O}(\lambda^3)} 0$. Due to the fact that the interacting potential contains an even number of fields, we deduce that $R_V(\Psi^{\rho}[f]) \big \vert_{\psi, \bar{\psi} = 0} =_{\mathcal{O}(\lambda^\infty)} 0$ at any perturbative order. In a similar fashion, one can show that $R_V(\bar{\Psi}_{\rho} [f]) \big \vert_{\psi, \bar{\psi} = 0} =_{\mathcal{O}(\lambda^{\infty})} 0$ and, more in general, $$R_V((\bar{\Psi}^n \Psi^m) [f]) \big \vert_{\psi, \bar{\psi} = 0} =_{\mathcal{O}(\lambda^{\infty})}  0$$ whenever $n \ne m$. Therefore, the only potentially non-vanishing expectation values should arise from functionals with symmetric spinorial content. Let $F[f ] := (\bar{\Psi}_{\rho'} \Psi^{\rho}) [f]$ and 
 \begin{flalign*}
     R_V(F) \big \vert_{\psi, \bar{\psi} = 0} =_{\mathcal{O}(\lambda^3)} \frac{\lambda^2}{2\hbar^2} \left[ \underbrace{2 V[h] \star_{\boldsymbol{H}} (V[h] \cdot_T F)}_{(\text{a})} - \underbrace{(V[h] \cdot_T V[h]) \cdot_T F[f]}_{(\text{b})} - \underbrace{(V[h] \cdot_{AT} V[h]) \star_{\boldsymbol{H}} F[f]}_{(\text{c})}  \right]. 
 \end{flalign*}
 Let us compute separately each term. 
 \begin{itemize}
     \item[(a)] Adopting the same graphical notation as in Example \ref{Ex: interacting fields}, the first contribution corresponds to 
\begin{equation*}
        \wick{
        \c1 \bPsi [h] \c2 {\textcolor{red}{\Psi [h]}} \c3 \bPsi [h] \c4 {\textcolor{red}{\Psi [h]}} \, \, \;
        (\c4 {\textcolor{red}{\bPsi [h]}} \c3 \Psi [h] \c5 \bPsi [h] \c1 \Psi [h] \, \;
        \c2 {\textcolor{red}{\bPsi [h]}} \c5 \Psi [f])}, 
    \end{equation*}
where denote in red those contractions performed via $H^{\psi}$ or $H_F^{\psi}$, whilst we indicate in black those contractions implemented by $-H^{\bar{\psi}}$ or $-H_{F}^{\bar{\psi}}$, depending on the contribution under scrutiny. In the above, the Dirac indices are omitted for the sake of notational clarity. At the level of integral kernels this yields, up to combinatorial factors,
\begin{equation}
    \propto \lambda^2 \int_{\mcM^3} d\mu_x d\mu_y d\mu_z \, (H_{F}^{\psi})^2(x,y) H^{\bar{\psi}}(x,y) H^{\bar{\psi}}(x,z) H_{F}^{\psi}(y,z) h(x) h(y) f(z), 
\end{equation}
where, with a slight abuse of notation, we have suppressed spinor indices and omitted the explicit appearance of Dirac gamma matrices. More precisely, $(H_{F}^{\psi})^2 := ( H_F^{\psi} \gamma^{\mu} H_F^{\psi} \gamma_{\mu})^{\rho}_{\rho'}$ and the corresponding integral kernel should be understood as involving the trace over Dirac indices of such terms. 
\item[(b)] Bearing in mind the above notational caveat, for the second contribution we have the following two non-vanishing diagrams:   
\begin{equation*}
        \wick{
        (\c1 \bPsi [h] \c2 {\textcolor{red}{\Psi [h]}} \c3 \bPsi [h] \c4 {\textcolor{red}{\Psi [h]}} \, \;
        \c2 {\textcolor{red}{\bPsi [h]}} \c3 \Psi [h] \c5 \bPsi [h] \c1 \Psi [h]) \; \, \;
        \c4 {\textcolor{red}{\bPsi [h]}} \c5 \Psi [f])}, 
    \end{equation*}
and 
\begin{equation*}
        \wick{
        (\c1 \bPsi [h] \c2 {\textcolor{red}{\Psi [h]}} \c3 \bPsi [h] \c4 {\textcolor{red}{\Psi [h]}} \, \;
        \c4 {\textcolor{red}{\bPsi [h]}} \c1 \Psi [h] \c2 \bPsi [h] \c5 \Psi [h]) \; \, \;
        \c5 {\textcolor{red}{\bPsi [h]}} \c3 \Psi [f])}, 
    \end{equation*}
which, at the level of integral kernels, read
\begin{equation*}
    \propto \lambda^2 \int_{\mcM^3} d\mu_x d\mu_y d\mu_z \, (H_F^{\bar{\psi}})^2 (x,y) H_F^{\psi} (x,y) H_F^{\psi}(x,z) H_F^{\bar{\psi}} (y,z) h(x) h(y) f(z), 
\end{equation*}
and 
\begin{equation*}
    \propto \lambda^2 \int_{\mcM^3} d\mu_x d\mu_y d\mu_z \, (H_F^{\psi})^2 (x,y) H_F^{\bar{\psi}} (x,y) H_F^{\bar{\psi}}(x,z) H_F^{\psi} (y,z) h(x) h(y) f(z), 
\end{equation*}
respectively. 
\item[(c)] The analysis of the last term is identical to $(\text{b})$ barring minor modifications, and, hence, we omit the discussion.  
 \end{itemize}
\end{example}

\paragraph{Wetterich equation in the Local Potential Approximation}
\label{Sec: Wetterich equation for the Thirring model}
In this section, we derive the Polchinski and Wetterich equations for self-interacting Dirac fields. The strategy closely parallels that used for two mutually interacting scalar fields; therefore, we limit ourselves to highlighting only the essential steps.  

We begin by introducing an infrared regulator of the form:
\begin{equation}
    \label{Eq: Thirring IR regulator}
    Q_k (\eta, \bar{\eta}) := -\int_{\mathcal{M}} d\mu_x \, \left[T(\eta^{\rho} [q_{k,\psi}]_{\rho}{}^{\rho'} \bar{\eta}_{\rho'})(x)- T(\bar{\eta}_{\rho'} [q_{k, \bar{\psi}}]^{\rho'}{}_{\rho} \eta^{\rho}) (x)\right], 
\end{equation}
for any $(\eta, \bar{\eta}) \in \Gamma^{\infty}(D\mcM) \times \Gamma^{\infty}(D^* \mcM)$, where $T(\cdot)$ denotes the time-ordering of the expression within brackets. Without loss of generality, we can choose $[q_{k,\psi}]_{\rho}^{\rho'} = [q_{k, \bar{\psi}}]^{\rho'}_{\rho} = k^2 \mathbb{I}_{\rho}^{\rho'}$. Furthermore, we consider the external current 
\begin{equation}
    \label{Eq: Thirring current}
    J(\eta, \bar{\eta}) := \int_{\mcM} d\mu_x \bar{j}_{\rho}(x) \eta^{\rho}(x) + \int_{\mcM} d\mu_x j^{\rho}(x) \bar{\eta}_{\rho}(x), \quad (j, \bar{j}) \in \Gamma^{\infty}_0(D\mcM) \times \Gamma^{\infty}_0(D^*\mcM). 
\end{equation}
In full analogy to the scalar theory, the regularise generating functional $Z_k$ and its connected counterpart $W_k$ are thus given by 
\begin{equation}
\label{Eq: Thirring generating functionals}
Z_k(j, \bar{j}) := \omega(S(V)^{-1} \star S(V + J + Q_k)), \quad W_k(j, \bar{j}) := -i \log Z_k(j, \bar{j}), 
\end{equation} 
with $\omega$ a state on the quantum algebra, not necessarily quasi-free. The first-order functional derivatives of $W_k$ yield the classical fields, \textit{i.e.}, 
\begin{equation*}
    \begin{cases}
        \frac{\delta W_k}{\delta \bar{j}_{\rho}(x)} =: \psi^{\rho}(x), \\
        \frac{\delta W_k}{\delta j^{\rho}(x)} =: \bar{\psi}_{\rho}(x), 
    \end{cases}
\end{equation*}
while $(j_{\psi})^{\rho},(j_{\bar{\psi}})_{\rho}$ are the currents which solve the above system as a function of $\bar{\psi}_{\rho}, \psi^{\rho}$, respectively. The connected, regularised generating functional is the key ingredient in the \textbf{Polchinski equation}, which reads
\begin{flalign}
    \label{Eq: Polchinski equation} 
    \notag
    \partial_k W_k(j, \bar{j}) &= - \int_{\mcM} d\mu_x \frac{1}{Z_k(j, \bar{j})} \left[ (\partial_k q_{k, \psi})^{\rho'}_{\rho}(x) \omega \left(S(V)^{-1} \star [S(V+J+Q_k) \cdot_T T(\eta^{\rho} \bar{\eta}_{\rho'})]\right)\right] \\ &+ \int_{\mcM} d\mu_x \, \frac{1}{Z_k(j, \bar{j})} \left[ (\partial_k q_{k, \bar{\psi}})^{\rho'}_{\rho}(x)  \omega \left(S(V)^{-1} \star [S(V+J+Q_k) \cdot_T T(\bar{\eta}_{\rho'} \eta^{\rho})]\right)\right]. 
\end{flalign}
Upon taking the Legendre transform of $W_k$ in both variables, one obtains the regularised effective action
\begin{equation*}
    \tilde{\Gamma}_k(\psi, \bar{\psi}) = W_k(j_{\psi}, j_{\bar{\psi}}) - J_{\Psi}(\psi, \bar{\psi}), 
\end{equation*}
where $J_{\Psi}$ is the functional in Equation \eqref{Eq: Thirring current} with the currents replaced by $j_{\psi}$ and $j_{\bar{\psi}}$. From $\tilde{\Gamma}_k$, one can derive the quantum equations of motion: 
\begin{equation*}
    \begin{cases}
        \frac{\delta \tilde{\Gamma}_k}{\delta \psi^{\rho}} = - (j_{\bar{\psi}})_{\rho}, \\
        \frac{\delta \tilde{\Gamma}_k}{\delta \bar{\psi}_{\rho}} = - (j_{\psi})^{\rho}. 
    \end{cases}
\end{equation*}
Crucial for the derivation of the Wetterich equation is the regularised average effective action specified by
\begin{equation*}
    \Gamma_k (\psi, \bar{\psi}) := \tilde{\Gamma}_k(\psi, \bar{\psi}) - Q_k(\psi, \bar{\psi}). 
\end{equation*}
The \textbf{Wetterich equation} thus reads
\begin{flalign}
    \label{Eq: Wetterich eq for Thirring fields}
    \notag  \partial_k \Gamma_k(\psi, \bar{\psi}) = - \int_{\mcM} d\mu_x \frac{1}{Z_k(j, \bar{j})} &\left[ (\partial_k q_{k, \psi})^{\rho'}_{\rho}(x) \omega \left(S(V)^{-1} \star [S(V+J+Q_k) \cdot_T T(\eta^{\rho} \bar{\eta}_{\rho'})] - T(\psi^{\rho} \bar{\psi}_{\rho'})\right)\right. \\ &+ \left. (\partial_k q_{k, \bar{\psi}})^{\rho'}_{\rho} (x)  \omega \left(S(V)^{-1} \star [S(V+J+Q_k) \cdot_T T(\bar{\eta}_{\rho'} \eta^{\rho})]  - T(\bar{\psi}_{\rho'}\psi^{\rho})\right)\right]. 
\end{flalign}
In the local potential approximation, we can express the average effective action as
\begin{equation}
    \label{Eq: LPA Thirring model}
    \Gamma_k (\psi, \bar{\psi}) = - \int_{\mcM} d\mu_x \, \left[ \bar{\psi}_{\rho} \slashed{D}^{\rho}_{\rho'} \psi^{\rho'} + \psi^{\rho} (\slashed{D}^*)^{\rho'}_{\rho} \bar{\psi}_{\rho'} + U_k(\psi, \bar{\psi})\right],
\end{equation}
where $U_k(\psi, \bar{\psi})$ is a suitable polynomial in the fields. Expanding it around classical field configurations, we obtain 
\begin{equation*}
    \Gamma_k(\psi, \bar{\psi}) =  \Gamma_k(\psi_{cl}, \bar{\psi}_{cl}) + \frac{1}{2} \Psi^T \Gamma_k^{(2)} (\psi_{cl}, \bar{\psi}_{cl}) \bar{\Psi} +\mathcal{O}(|\Psi - \Psi_{cl}|^3), 
\end{equation*}
where $\Psi \in \Gamma^{\infty}(D\mcM \oplus D^*\mcM)$ is the spinor doublet as per Notation \ref{Not: Dirac doublet}, while $\psi_{cl}, \bar{\psi}_{cl}$ are such that $\frac{\delta \Gamma_k}{\delta \psi^{\rho}} \big \vert_{\psi = \psi_{cl}} = 0$ and $\frac{\delta \Gamma_k}{\delta \bar{\psi}_{\rho}} \big \vert_{\psi = \psi_{cl}} = 0$. Here, we denote by $\Gamma_k^{(2)}(\psi, \bar{\psi})$ the Hessian matrix of $\Gamma_k$, taking values in the vector space $W$. As discussed at the beginning of Section \ref{Sec: (B) Self-interacting Dirac fields}, with a suitable choice of basis, $\Gamma_k^{(2)}$ can be written in such a way that the mixed derivatives lie on the diagonal. In this setting, the effective mass of the theory is given by 
\begin{flalign}
    \label{Eq: effective mass Thirring}
    \notag M &= I_0^t[\psi, \bar{\psi}] - I_0[\psi, \bar{\psi}] \\ &= - \int_\mcM d\mu_x \, \left[ \left( \underbrace{\frac{1}{2} [U_{k, \psi \bar{\psi}}^{(2)} (\psi, \bar{\psi}) - m_k \mathbb{I}]^{\rho'}_{\rho}}_{(M_\psi)^{\rho'}_{\rho}}  \psi^{\rho}(x) \bar{\psi}_{\rho'}(x) \right) + \left( \underbrace{\frac{1}{2} [U_{k, \bar{\psi} \psi}^{(2)} (\psi, \bar{\psi}) + m_k \mathbb{I}]^{\rho'}_{\rho}}_{(M_{\bar{\psi}})^{\rho'}_{\rho}} \bar{\psi}_{\rho'}(x)  \psi^{\rho}(x) \right) \right]. 
\end{flalign}
We can choose the effective potential to be of the form 
\begin{equation}
    \label{Eq: effective potential Thirring}
    U_k(\psi, \bar{\psi}) = U_{0,k} + m_k \bar{\psi}_{\rho}  \psi^{\rho'}  -m_k \psi^{\rho'} \bar{\psi}_{\rho} + \frac{\lambda_k}{2} (\bar{\psi} \gamma^{\mu} \psi) \bar{\psi}_{\rho} (\gamma_{\mu})^{\rho}_{\rho'} \psi^{\rho'}. 
\end{equation}
Notice that, being the relevant couplings the same for both the spinor and cospinor fields, we can discard the second term in the previous equation to derive the flow equations. 
Henceforth, we shall specialise the analysis to the case of $2-$dimensional Minkowski spacetime, \textit{i.e.}, $(\mathbb{R}^2, \eta)$. In this setting, 
\begin{equation*}
    \partial_k U_k (\psi, \bar{\psi}) = \lim_{y \rightarrow x} \frac{k^2}{2} \mathbb{I}^{\rho'}_{\rho} \left[ \left( \Delta_{F, M_{\psi}, k}^{\psi}(y,x) - H_{F, M_{\psi}, k}^{\psi} (y,x)\right)^{\rho}_{\rho'} - \left( \Delta_{F, M_{\bar{\psi}}, k}^{\bar{\psi}}(y,x) - H_{F, M_{\bar{\psi}}, k}^{\bar{\psi}} (y,x)\right)^{\rho}_{\rho'} \right],
\end{equation*}
and, focusing specifically on the first contribution, we have 
\begin{equation}
    \label{Eq: Wetterich for LPA Thirring}
    \partial_k U_k (\psi, \bar{\psi}) = \lim_{y \rightarrow x} \frac{k^2}{2} \mathbb{I}^{\rho'}_{\rho} \left( \Delta_{F, M_{\psi}, k}^{\psi}(y,x) - H_{F, M_{\psi}, k}^{\psi} (y,x)\right)^{\rho}_{\rho'}. 
\end{equation}
In the two-dimensional setting, for the vacuum state, this reduces to 
\begin{equation*}
     \partial_k U_{k, \psi \bar{\psi}} (\psi, \bar{\psi}) = \lim_{y \rightarrow x} \frac{k^2}{2} \mathbb{I}^{\rho'}_{\rho} (\slashed{D}^*)^{\rho}_{\rho'} \left( \Delta_{S, M, k}^{\infty}(y,x) - H_{M, k}^{\psi} (y,x)\right) = \frac{k^2}{2} \mathbb{I}^{\rho'}_{\rho} (M_{k})^{\rho}_{\rho'}, 
\end{equation*}
where $M_k$ is a suitable effective mass, here defined by $M_k := k^2 + m_k^2 + \frac{\lambda_{k}}{2} (\bar{\psi} \gamma^{\mu} \psi) \gamma_{\mu}$. We thus obtain the following set of beta functions for the relevant couplings: 
\begin{flalign}
    \label{Eq: Uk0 Thirring}
    k \partial_k U_{0,k} &= k M_k \bigg\vert_{\substack{\psi = 0\\ \bar{\psi} = 0}} = k (k^2 + m_k) \\
    \label{Eq: mk Thirring}
    k \partial_k m_k &= k^2 \frac{\delta^2 M_k}{\delta \psi^{\rho} \delta \bar{\psi}_{\rho}} \bigg \vert_{\substack{\psi = 0\\ \bar{\psi} = 0}} = k^3 \frac{\lambda_k}{2} \\ 
    \label{Eq: lambda k Thirring}
    k \partial_k \lambda_k &= 2 k^2 \frac{\delta^4 M_k}{\delta\psi^{\rho} \delta \bar{\psi}_{\rho} \delta \psi^{\rho'} \delta \bar{\psi}_{\rho'}} \bigg \vert_{\substack{\psi = 0\\ \bar{\psi} = 0}} = 0. 
\end{flalign}

\begin{remark}
The vanishing of the beta function for the self-interaction coupling in the three-dimensional Dirac model is compatible with simple dimensional considerations. In \(d=3\), the Dirac field has canonical dimension \([\psi]=\frac{(d-1)}{2}=1\), so that the quartic Thirring interaction, \textit{i.e.},
\[
(\bar\psi \gamma^\mu \psi)(\bar\psi \gamma_\mu \psi)
\]
has dimension \(4\). Its coupling therefore has negative mass dimension, \([\lambda]=-1\) and, therefore, it is not marginal. Within the local truncation used here, no logarithmic contribution is generated which could induce a non-trivial quantum running of this coupling. The corresponding beta function therefore vanishes. If one instead introduces a dimensionless coupling, its scale dependence will be purely canonical at this level of approximation.
\end{remark}

\section{Existence of local solutions to the Renormalization Group flow equations}
\label{Sec: Existence of local solutions to the Renormalization Group flow equations}

The issue we shall address in this section concerns whether it is possible to establish local existence of solutions to the Renormalization Group (RG) flow equations. To tackle this problem, it is convenient - albeit not mandatory - to work on an ultra-static spacetime $\mathcal{M}$, $\text{dim}(\mcM) = d$, so to have explicit and manageable expressions for the advanced and retarded propagators of the free theory. Furthermore, we shall assume that the background $\mcM$ is spin, namely, that its second Stiefel–Whitney class vanishes. This ensures the existence of at least one admissible spin structure on $\mcM$. To guarantee uniqueness, we further require that the spin bundle $S\mcM$ is trivial. 

As before, let $E \equiv E[\mcM, \pi, \mathbb{K}^n]$ be a trivial $\mathbb{K}$-vector bundle, where $\mathbb{K}$ denotes either $\mathbb{R}$ or $\mathbb{C}$, over $\mcM$, \textit{i.e.}, $E$ is globally diffeomorphic to $\mcM \times \mathbb{K}^n$. Smooth sections of this bundle $\Gamma^{\infty}(E) \simeq C^{\infty}(\mcM) \otimes \mathbb{K}^n$ correspond to field multiplets, where $\otimes$ is the algebraic tensor product and $n \in \mathbb{N}$ depends on the model under scrutiny. For the sake of readability, we separate the analysis of the two cases under consideration: \textbf{(A)} two mutually interacting scalar fields, see Section \ref{Sec: (A) Two mutually interacting scalar fields} and \textbf{(B)} self-interacting Dirac fields, see Section \ref{Sec: (B) Self-interacting Dirac fields}. Furthermore, we shall succinctly comment on how to extend this formalism to stochastic fields in the Martin-Siggia-Rose formalism, \textit{cf.}, Section \ref{Sec: MSR fields}. 

\subsection{(A) Two mutually interacting scalar fields}
\label{Sec: (A) Two mutually interacting scalar fields}

Consider two, mutually interacting scalar fields on $\mcM$ whose dynamics is governed by Equation \eqref{Eq: action 2 scalar fields}. In this scenario, the dynamical fields $\varphi_1, \varphi_2 \in \mathcal{E}(\mcM) := C^{\infty}(\mcM; \mathbb{R})$ can be equivalently regarded as a single smooth section of the trivial (real) vector bundle $E: = \mcM \times \mathbb{R}^2$. Thus, we can introduce the field doublet 
\begin{equation*}
    \Phi := \begin{pmatrix}
    \varphi_1 \\
    \varphi_2
\end{pmatrix} \in \Gamma^{\infty}(\mcM \times \mathbb{R}^2) \simeq \mathcal{E}_2(\mcM) \simeq \mathcal{E}(\mcM) \otimes \mathbb{R}^2,
\end{equation*}
where $\mathcal{E}_2(\mcM) := \mathcal{E}(\mcM) \oplus \mathcal{E}(\mcM)$. The Latin indices will be employed to indicate the components of such doublet.

\begin{notation}
\label{Not: trace}
In the following, we shall denote by $\mathcal{E} := L^2(\mcM) \otimes \mathbb{C}^2$ and let $A \in \text{Mat}(2; \mathcal{E})$ be such that each matrix entry $A_{ij}$ is a trace-class operator on $L^2(\mcM)$ for all $i, j = 1, 2$. Assume in addition that the action of such an operator on $\Psi \in \mathcal{E}$ is given by $$(A\Psi)_i(x) := \sum_{j=1, 2} (A_{ij} \Psi_j)(x) = \sum_{j=1,2} \int_{\mcM} d\mu_y \, A_{ij}(x,y) \Psi_j(y), \, \forall i =1, 2.$$ We define the \emph{Hilbert space trace of $A$ on $\mathcal{E}$} as
\begin{equation}
\label{Eq: Hilbert space trace}
\text{Tr}_{\mathcal{E}} (A) := \int_{\mcM} d\mu_x \, \text{tr}_{\mathbb{C}^2} A(x,x).
\end{equation}
We stress that if $A$ in not trace-class, Equation \eqref{Eq: Hilbert space trace} is valid only at a formal level. 
\end{notation}

\noindent In the same spirit of Section \ref{Sec: Wetterich for two scalar fields}, we shall consider an infrared mass regulator of the form 
\begin{equation*}
    \label{Eq: infrared regulator 1}
    Q_k (\chi_1, \chi_2) = - \frac{1}{2} \int_{\mathcal{M}} d\mu_x \, \left[q_{k,1}(x) T\chi_1^2(x) + q_{k,2}(x) T\chi_2^2 (x)\right], 
\end{equation*}
and an external current 
\begin{equation*}
    \label{Eq: current 1}
    J(\chi_1, \chi_2) := \int_{\mathcal{M}} d\mu_x \left[ j_1(x) \chi_1(x) + j_2(x) \chi_2(x)\right], \, \, j_1, j_2 \in C^{\infty}_0(\mathcal{M}).
\end{equation*}
Additionally, let us denote by 
\begin{equation}
    \mathrm{P}_0 := \begin{pmatrix}
        P_{0, 1} & 0 \\
        0 & P_{0, 2}
    \end{pmatrix} \, \, \text{and}  \, \, \mathsf{q}_k = \begin{pmatrix}
        q_{k, 1} & 0 \\
        0 & q_{k, 2}
    \end{pmatrix}, 
\end{equation}
where $P_{0,\ell} := -\Box_g + m_{\ell}^2 + \xi_{\ell} R$, $\ell = 1,2$. Furthermore, let 
\begin{equation}
   Z_k (j_1, j_2) = \omega(S(V)^{-1} \star S(V+ Q_k + J)), \quad  W_k(j_1, j_2) = - i \log Z_k(j_1, j_2),
\end{equation}
be the generating functional of the underlying regularized theory and its connected counterpart, respectively. For future convenience, we introduce the following short-hand notation: for any interacting operator $F$, we shall denote its weighted expectation value by
\begin{equation}
    \label{Eq: exp of F}
    \langle F \rangle := e^{-i W_k} \omega \left( S(V)^{-1} \star S(V+Q_k + J) \cdot_{T} F \right),
\end{equation}
where the deformed and time-ordered products are defined as per Equations \eqref{Eq: deformed product 2 scalar fields} and \eqref{Eq: time-ordered product}, whilst $V$ denotes the interacting potential.
A direct computation entails that the Hessian of $W_k(j_1, j_2)$ takes the form
\begin{equation}
\label{Eq: Wk 1}
[W_k^{(2)}(j_1, j_2)](x,y) = i
    \begin{pmatrix}
        \langle \chi_1(x) \cdot_T \chi_1(y) \rangle - \varphi_1(x) \varphi_1(y) & \langle \chi_1(x) \chi_2(y) \rangle - \varphi_1(x) \varphi_2(y) \\
         \langle \chi_2(x) \chi_1(y)\rangle - \varphi_2(x) \varphi_1(y) &   \langle \chi_2(x) \cdot_T \chi_2(y) \rangle - \varphi_2(x) \varphi_2(y)
    \end{pmatrix}, 
\end{equation}
where we recall that $\varphi_{\ell}$ are the classical fields, \textit{i.e.}, $\frac{\delta W_k}{\delta j_{\ell}(x)} =: \varphi_{\ell}(x)$, ${\ell}=1,2$. This implies that we can recast the \textbf{Polchinski equation} in the form: 
\begin{flalign*}
    \partial_k W_k (j_1, j_2) &= - \frac{1}{2} \text{Tr}_{\mathcal{E}} \left[ \partial_k \mathsf{q}_k : (W_k^{(2)} + \Phi \otimes \Phi):\right] \\ &= - \frac{1}{2} \int_{\mathcal{M}} d\mu_x \text{tr}_{\mathbb{C}^2} \left[\partial_k \mathsf{q}_k : (W_k^{(2)} + \Phi \otimes \Phi) : \right] (x,x) \\ &=- \frac{1}{2} \lim_{y \rightarrow x} \int_{\mathcal{M}} d\mu_x \, \sum_{\ell=1,2} \partial_k q_{k, \ell}(x) \left[ -i W_{k, \ell}^{(2)}(x,y) + \varphi_l(x) \varphi_l(y) - \tilde{H}_{F, \ell} (x,y) \right], 
\end{flalign*}
where $-i W_{k, \ell}^{(2)}(x,y) $ is the propagator of the interacting theory for the field $\varphi_{\ell}$, whilst $\tilde{H}_{F, \ell}(x,y)$, ${\ell} = 1, 2$, are the counter-terms stemming from the point-splitting regularization. In the above formula, $\text{Tr}_{\mathcal{E}}$ is as per Equation \eqref{Eq: Hilbert space trace} -- see also Notation \ref{Not: trace}. 

\begin{remark}
    Note that, by construction, the off-diagonal contributions in Equation \eqref{Eq: Wk 1} do not appear at the level of the Polchinski equation. Yet,  they still affect the diagonal contributions when taking the matrix inverse. 
\end{remark}

\noindent Thus, denoting by $G_{k, \ell} (x,y) := -i W_{k, \ell}^{(2)}(x,y)$, the \textbf{Renormalization Group (RG) flow equation} reads: 
\begin{flalign}
    \label{Eq: RG flow 2 scalars}
    \notag
    \partial_k \Gamma_k &= - \frac{1}{2} \int_{\mathcal{M}} d\mu_x \, \sum_{\ell=1,2} \partial_k q_{k, \ell}(x) : G_{k, \ell}(x,x) :_{\tilde{H}_{F,\ell}} \\ &= - \frac{1}{2} \lim_{y \rightarrow x} \int_{\mathcal{M}} d\mu_x \, \sum_{\ell=1,2} \partial_k q_{k, \ell}(x) \left( G_{k, \ell}(x,y) - \tilde{H}_{F,\ell} (x,y)\right),
\end{flalign}
where in the last line the normal-ordering prescription has been made explicit. We can now define the {\bf effective potential} $U_k$ by means of the relation:
\begin{equation}
\label{Eq: effective potential matrix form}
   \Gamma^{(2)}_k(\varphi_1, \varphi_2) - \mathsf{q}_k  := \mathrm{P}_0 + U^{(2)}_k(\varphi_1, \varphi_2), 
\end{equation} 
where $U_k^{(2)}(\varphi_1, \varphi_2)$ is the associated Hessian matrix. The operator appearing on the right hand-side of Equation \eqref{Eq: effective potential matrix form} is referred to in the literature as \emph{quantum wave operator}.  
Setting $[(\cdot)^{(2)}]_{\ell j} := \frac{\delta^2 (\cdot)}{\delta \varphi_{\ell} \varphi_j}$, Equation \eqref{Eq: effective potential matrix form} can be equivalently written componentwise as 
\begin{equation}
    \begin{cases}
        [\Gamma_{k}^{(2)}]_{\ell \ell} - q_{k, \ell} =: P_{0, \ell} + [U_{k}^{(2)}]_{\ell \ell}, \\
        [\Gamma_{k}^{(2)}]_{\ell j} =: [U_{k}^{(2)}]_{\ell j} \, \, \, \, \text{for} \, \, \ell \ne j.  
    \end{cases}
\end{equation}
\noindent Let
\begin{equation*}
   \Delta_{A/R} = \begin{pmatrix}
        \Delta_{A/R, 1} & 0 \\
        0 & \Delta_{A/R, 2}
    \end{pmatrix},
\end{equation*}
$\Delta_{A/R, \ell}$ being the advanced/retarded Green operators associated to $P_{0, \ell}$, $\ell = 1, 2$. In a similar fashion as in \cite{Dangelo_23},  we shall work in the \emph{local potential approximation}, namely, we assume that $U_k(\varphi_1, \varphi_2)$ is a multiplicative (scalar) operator, depending on polynomials of the fields but not on their derivatives, that is, 
\begin{flalign}
    \label{Eq: LPA assumptions}
    \notag
    U_k (\varphi_1, \varphi_2) &= \int_{\mathcal{M}} d\mu_x \, u (\varphi_1 (x), \varphi_2 (x), k) f(x), \, \, f \in C^{\infty}_{0}(\mathcal{O}) \\ [U_k^{(2)} (\varphi_1, \varphi_2)]_{\ell j}(x,y) & =\partial^2_{\ell j} u(\varphi_1(x), \varphi_2(x), k) f(x) \delta(x,y),   
\end{flalign}
where $\mathcal{O} \subset \mathcal{M}$ is a compact region of spacetime such that $\text{supp}(f) \subseteq \mathcal{O}$. Under this assumption, the operator $\mathrm{P}_k := \mathrm{P}_0 + U^{(2)}_k$ is still Green hyperbolic, as off-diagonal contributions -- being of order zero -- do not affect the principal symbol. Therefore, $\mathrm{P}_k$ admits unique advanced and retarded Green operators 
$$\Delta^U_{A/R}: \Gamma^{\infty}_0(\mcM; \mathbb{C}^2) \rightarrow \Gamma^{\infty} (\mcM; \mathbb{C}^2)$$ such that 
\begin{equation*}
    \mathrm{P}_k \circ \Delta^U_{A/R} = \text{id} \vert_{\Gamma^{\infty}_0(\mcM; \mathbb{C}^2)}, \, \, \, \, \Delta^U_{A/R} \circ \mathrm{P}_k \vert_{\Gamma^{\infty}_0(\mcM; \mathbb{C}^2)} = \text{id} \vert_{\Gamma^{\infty}_0(\mcM; \mathbb{C}^2)},
\end{equation*}
and, for every $\psi \in \Gamma^{\infty}_0(\mcM; \mathbb{C}^2)$,
\begin{equation*}
  \text{supp} \, \Delta^U_{A/R} (\psi)  \subset J^{\mp}(\text{supp} \psi).
\end{equation*}
Observe that, on account of the Schwartz kernel theorem, we can associate to the above Green operators matrix-valued bi-distributions $\Delta^U_{A/R} \in \mathcal{D}'(\mcM \times \mcM; \text{Mat}(2; \mathbb{C}))$, \textit{i.e.},
$$\Delta^U_{A/R} = \begin{pmatrix}
    [\Delta^U_{A/R}]_{1 1} &  [\Delta^U_{A/R}]_{1 2} \\
     [\Delta^U_{A/R}]_{2 1} &  [\Delta^U_{A/R}]_{2 2},
\end{pmatrix}$$ 
each entry lying in $\mathcal{D}'(\mcM \times \mcM)$. 

\begin{notation}
    With a slight abuse of notation, we denote by $\Delta^U_{A/R}$ both the Green operators and the corresponding distributional kernel, since we feel that no confusion can arise. 
\end{notation}

\begin{remark}
    Although $[U^{(2)}_k]_{12} = [U^{(2)}_k]_{21}$ due to the commutativity of the functional derivatives, it is {\it not} true in general that $[\Delta^U_{A/R}]_{12} = [\Delta^{U}_{A/R}]_{21}$. The reason is that the advanced and retarded propagators are not symmetric in their entries. Hence, we can only conclude that, at level of integral kernels, $[\Delta^U_{A/R}]_{12} (x,x)= [\Delta^{U}_{A/R}]_{21} (x,x)$.
\end{remark}

\noindent We can now introduce the advanced and retarded M\o ller operators associated to $\mathrm{P}_k := \mathrm{P}_0 + U_k^{(2)}$. These intertwine the free and quantum wave operators and they are specified by
\begin{equation}
    \label{Eq: adv/ret Moller}
    (\mathbb{I}- U_{k}^{(2)} \Delta^U_{A}) \, \, \text{and} \, \, (\mathbb{I}-\Delta^U_{R} U_{k}^{(2)}).
\end{equation}
Using the M\o ller operators, we can recast Equation \eqref{Eq: RG flow 2 scalars} in the form of a differential equation for the effective potential: 
\begin{equation}
\label{Eq: RG flow for Uk}
    \partial_k U_k = - \frac{1}{2} \int_{\mathcal{M}} d\mu_x \,\sum_{\ell =1,2}  \partial_k q_{k, \ell}(x)  [(\mathbb{I}-\Delta^U_{R} U_{k}^{(2)}) w (\mathbb{I}- U_{k}^{(2)} \Delta^U_{A})]_{\ell \ell} (x,x), 
\end{equation}
where,
$$w := \begin{pmatrix}
    w_{11} & 0 \\
    0    & w_{22}
\end{pmatrix} \in \mathcal{D}'(\mcM \times \mcM; \text{Mat}(2; \mathbb{C})),$$
with $w_{\ell \ell}(x,y) := \Delta_{F, \ell} (x,y) - H_{F, \ell} (x,y),$ for each $\ell =1,2$.  


Henceforth, we shall assume that the classical fields $\varphi_1$ and $\varphi_2$ are constant. This implies that the integral kernel of the effective potential $u(\varphi_1, \varphi_2, k)$ together with its second derivatives in Equation \eqref{Eq: LPA assumptions} are constant functions with respect to the variable $x$. In contrast with the strategy detailed in the previous section, we make the following two choices: 
\begin{itemize}
  \item[\ding{104}] In order to further simplify the analysis, we shall assume that the infrared regulator $Q_{k, \ell}$ has an integral kernel with a linear dependence on $k$, that is:  
    \begin{equation*}
        q_{k, \ell}(x) := (k_{0, \ell} + \epsilon k) f(x), \, \, \ell = 1,2
    \end{equation*}
    where $f$ is the adiabatic cutoff function in Equation \eqref{Eq: LPA assumptions} and $\epsilon > 0$ is chosen to be the same for both fields. Without loss of generality, we shall set $\epsilon \equiv 1$ and assume that $k \in \mathbb{R}_+$. 
    Observe, in addition, that
    the constant contribution $\propto k_{0, \ell}$ can be conveniently reabsorbed in the free mass of the theory and, hence, it can be discarded. 
    \item[\ding{104}] We shall refrain ourselves from expanding the effective average action around a fixed background solution. Specifically, we do not need to truncate the resulting expression up to quadratic contributions in the fields, but we can keep the full non-linear structure in the effective potential.   
\end{itemize}
 
\noindent Employing the short-hand notation $u^{(2)} := (\partial^2_{\ell j} u)_{\ell j}$, Equation \eqref{Eq: RG flow for Uk} can thus be recast as an initial, boundary value problem for $u(\varphi_1, \varphi_2, k)$: 
\begin{equation}
\label{Eq: BV problem}
    \begin{cases}
        \partial_k u = \sum_{\ell=1,2} G_{k, \ell}(u^{(2)}) =: \text{Tr}_{\mathcal{E}} (\mathrm{G}_{k}(u^{(2)})) \\
        u(\varphi_1, \varphi_2, a) = \psi \\
        u \vert_{\partial X \times [a,b]} = \beta
    \end{cases}
\end{equation}
where $X \subseteq \mathbb{R}^2$ is a compact set containing all possible values of $(\varphi_1, \varphi_2)$, $k \in [a, b] \subseteq \mathbb{R}_+$, whilst $\psi \in C^{\infty}(X)$ and $\beta \in C^{\infty}(\partial X \times [a,b])$ are assigned functions. In the above equation, the functional $G_{k, \ell}$ is defined for each $\ell = 1, 2$ as 
\begin{equation}
\label{Eq: Gkl}
    G_{k, \ell} (u^{(2)}) := - \frac{1}{2 ||f||_{L^1}} \int_{\mathcal{M}} d\mu_x \, \partial_k q_{k,\ell}(x) \left[ (\mathbb{I}- u^{(2)} \Delta^u_{R} f) \otimes (\mathbb{I}- u^{(2)} \Delta^u_{R} f) (w) \right]_{\ell \ell} (x,x), 
\end{equation}
where $\Delta^u_{R} \in \mathcal{D}'(\mcM \times \mcM; \text{Mat}(2; \mathbb{C}))$ is the retarded fundamental solution of the Green hyperbolic operator $\mathrm{P}_{0} + f u^{(2)}$.

\begin{notation}
\label{Not: matrix}
Consider matrix-valued bi-distributions $A_1, A_2, W \in \mathcal{D}'(\mcM \times \mcM; \text{Mat}(2; \mathbb{C})) \simeq \mathcal{D}'(\mcM \times \mcM) \times \text{Mat}(2; \mathbb{C})$. Since in Equation \eqref{Eq: Gkl} we are interested in the coinciding point limit, we can decompose $$W(x,y) := \sum_{i,j =1, 2} W_{ij} (x,y) E_{ij},$$ where $i,j$ are matrix indices, $W_{ij} \in \mathcal{D}'(\mcM \times \mcM)$, while $(E_{ij})_{ab} := \delta_{ia} \delta_{jb}$ is the Weyl basis. Thus, we can write at the level of integral kernels 
\begin{equation*}
    A_1 (x, x') \otimes A_2 (y,y') = \sum_{i_1, j_1 = 1,2} \sum_{i_2, j_2 = 1,2} (A_1)_{i_1 j_1}(x,x') (A_2)_{i_2 j_2}(y,y') E_{i_2 j_2} E_{i_1 j_1}  
\end{equation*}
and 
\begin{equation*}
    \left[A_1 (x,x') \otimes A_2 (y, y')\right] (W) (x',y') =  \sum_{i, j = 1,2} \sum_{i_1, i_2 = 1,2} (A_1)_{i_1 i} (x',x) (A_2)_{i_2 j} (y',y)  W_{ij} (x',y') E_{i_1 i_2}, 
\end{equation*}
where we used the property $(E_{i_2 j_2})^T = E_{j_2 i_2}$ and $E_{i_1 j_1} E_{ij}  = \delta_{j_1 i} E_{i_1 j}$. Henceforth, we shall compactly write 
\begin{flalign*}
   [(A_1 \otimes A_2) (W)]_{\ell \ell} (x,y) := \sum_{i, j = 1,2} \int_{\mcM^2} d\mu_x' d\mu_y' \, (A_1)_{\ell i} (x,x') (A_2)_{\ell j} (y,y')  W_{ij} (x',y').
\end{flalign*}
\end{notation}

\begin{remark}
Observe that Equation \eqref{Eq: Gkl} can be equivalently written as follows: 
    \begin{equation}
        \label{Eq: Gkl 1}
         G_{k, \ell} (u^{(2)}) := - \frac{1}{2 ||f||_{L^1}} \int_{\mathcal{M}} d\mu_x \, \partial_k q_{k,\ell}(x) : G_{k, \ell} : (x,x),  
    \end{equation}
    where 
    \begin{equation*}
        : G_{k, \ell} : (x,y) := [(\mathbb{I} - \Delta^u_{R} U_{k}^{(2)}) w (\mathbb{I} - U_{k}^{(2)} \Delta^u_{A})]_{\ell \ell} (x,y), \, \, \ell = 1,2. 
    \end{equation*}
\end{remark}

To establish local existence of solutions to Equation \eqref{Eq: BV problem}, the strategy consists in applying the \emph{Nash-Moser Theorem} in the Hamiltonian formulation, \textit{cf.} \cite{Moser1966a, Moser1966b, Nash1956} \cite[Part III]{Hamilton_1982}. We recall it here for the reader's convenience -- see also Appendix \ref{Sec: Appendix A} for further details.

\begin{theorem}[Nash-Moser Theorem in Hamilton formulation]
\label{Thm: Nash-Moser}
    Consider two tame Fréchet spaces $F$ and $G$ and let $R: \mathcal{U} \subset F \rightarrow G$ be a tame map. Under the additional requirements: 
    \begin{itemize}
        \item[1.] The map $D R: \mathcal{U} \times F \rightarrow G$, \textit{i.e.}, the first functional derivative of $R$, admits a unique inverse $E: \mathcal{U} \times G \rightarrow F$ such that if $DR(u) v = f$, then $E(u) f= v$ for all $u \in \mathcal{U}$ and $f \in G$,
        \item[2.] the inverse $E$ is a smooth tame map,
    \end{itemize}
$R$ is locally invertible and $R^{-1}$ is tame smooth. 
\end{theorem}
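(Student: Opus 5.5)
This is Hamilton's version of the Nash--Moser inverse function theorem. Since it is classical, I would follow \cite[Part III]{Hamilton_1982} closely, adapted to the notation of Theorem \ref{Thm: Nash-Moser}. The key structural input is that $F$ and $G$ are tame: each is a tame direct summand of a space $\Sigma(B)$ of exponentially decreasing sequences in a Banach space. Consequently both carry families of \emph{smoothing operators} $S_\theta$, $\theta\ge 1$, satisfying
\begin{equation*}
\|S_\theta f\|_{n+r}\le C\,\theta^{r}\|f\|_{n},\qquad \|(\mathbb{I}-S_\theta)f\|_{n}\le C\,\theta^{-r}\|f\|_{n+r}.
\end{equation*}
The plan has three steps: surjectivity near a point $u_0\in\mathcal{U}$ via a smoothed Newton scheme, injectivity, and tame smoothness of $R^{-1}$.

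\textbf{Surjectivity.} Fix $g$ close to $g_0:=R(u_0)$ in a sufficiently high norm, and solve $R(u)=g$ by a modified Newton iteration. I would use the continuous version $u=u(t)$ with
\begin{equation*}
\dot u(t) = S_{\theta(t)}\, E(u(t))\bigl(g-R(u(t))\bigr)\,\dot{\ }\!\!\text{(suitably corrected)},\qquad \theta(t)=e^{t},
\end{equation*}
or equivalently the discrete scheme $u_{j+1}=u_j+S_{\theta_j}E(u_j)(g-R(u_j))$ with $\theta_j$ growing superexponentially.

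The estimates run as follows. Write $e_j:=g-R(u_j)$. By Taylor's formula,
\begin{equation*}
e_{j+1} = \bigl(\mathbb{I}-DR(u_j)S_{\theta_j}E(u_j)\bigr)e_j - Q(u_j,\delta u_j),
\end{equation*}
where $Q$ is the quadratic remainder, which is tame because $D^2R$ is tame. The first term equals $DR(u_j)(\mathbb{I}-S_{\theta_j})E(u_j)e_j$, which is small in low norms by the second smoothing estimate. The quadratic term is controlled by the square of the low norm times a high norm.

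Using the tame bounds $\|E(u)f\|_n\le C(\|f\|_{n+r}+\|u\|_{n+r}\|f\|_{b})$ (hypothesis 2) and the analogous bounds for $R$, I would prove by induction that the low norms $\|e_j\|_{b}$ decay like $\theta_j^{-\mu}$, while the high norms $\|u_j\|_{b+s}$ grow at most like $\theta_j^{\nu}$ with $\nu<\mu$. Interpolation between low and high norms (available in tame spaces) then gives convergence of $u_j$ in every norm to some $u$ with $R(u)=g$.

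\textbf{Main obstacle.} The step I expect to be hardest is the bookkeeping of this induction: choosing the exponents $\mu,\nu$, the loss of derivatives $r$, and the neighbourhood sizes so that the derivative loss in $E$ is exactly compensated by the smoothing and by the quadratic convergence. I would follow Hamilton's choice of working with $\theta_j=\theta_0^{(3/2)^j}$ and verify the resulting inequalities only at the level of exponents.

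\textbf{Injectivity.} If $R(u_1)=R(u_2)$ with $u_1,u_2$ close, then
\begin{equation*}
0=\int_0^1 DR(u_2+t(u_1-u_2))\,(u_1-u_2)\,dt.
\end{equation*}
Applying $E(u_2)$ and estimating the difference of the linearisations quadratically gives a low-norm bound $\|u_1-u_2\|_{b}\le C\|u_1-u_2\|_{b}^{1+\epsilon}$, with high norms controlled by interpolation. Hence $u_1=u_2$ in a small neighbourhood.

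\textbf{Tame smoothness of $R^{-1}$.} First, the construction above yields tame estimates $\|R^{-1}(g)\|_n\le C(1+\|g\|_{n+r})$ uniformly, which gives continuity. Next, differentiate $R(R^{-1}(g))=g$ and use the unique inverse $E$ from hypothesis 1 to obtain
\begin{equation*}
D(R^{-1})(g)h = E\bigl(R^{-1}(g)\bigr)h.
\end{equation*}
Differentiability itself follows from the same quadratic-remainder argument as in the injectivity step. The right-hand side is a composition of tame maps, hence tame. Iterating this formula via the chain rule, which preserves tameness, shows that $R^{-1}$ is smooth tame.
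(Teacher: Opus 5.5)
Your proposal takes the same route as the paper, which gives no proof of its own and imports the theorem from Hamilton's Part III. Your outline (smoothed Newton iteration for surjectivity, the quadratic-remainder argument for injectivity, and $D(R^{-1})=E\circ R^{-1}$ plus the chain rule for tame smoothness) is the standard skeleton of that proof. Two points need to be made explicit: your quadratic estimates use tameness of $D^2R$, so you are assuming $R$ is a \emph{smooth} tame map as in Hamilton's actual hypothesis (stronger than the bare ``tame map'' in the statement), and continuity of $R^{-1}$ does not follow from the bound $\|R^{-1}(g)\|_n\le C(1+\|g\|_{n+r})$ alone but from the difference estimate of your injectivity step combined with interpolation.
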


\noindent In order to exploit Theorem \ref{Thm: Nash-Moser}, we shall identify in our setting the tame map playing the role of $R$ and prove it abides by the conditions $1.$ and $2.$ in the above statement. For the sake of clarity, we shall divide the following analysis in separate steps. 

\textbf{Step 0: Preliminary definitions}. Let $\Phi := (\varphi_1, \varphi_2) \in X$ and define $F := C^{\infty}(X \times [a,b])$ which is a Fréchet space with respect to the family of seminorms: 
\begin{equation}
    ||u||_{n} = \sum_{|\alpha| \le n} \sup_{\Phi, k} |D^{\alpha} u(\Phi, k)|,
\end{equation}
where the derivatives $D^{\alpha}$ are taken both with respect to the fields and to the parameter $k$. The space $F$ is also tame, being the space of smooth functions over a compact set. Consider the tame Fréchet subspace
\begin{equation*}
    F_0 := \{u \in F \, | \, u(\Phi, a) = 0, u \vert_{\partial X \times [a,b]} = 0\} \subset F
\end{equation*}
and decompose the solution $\tilde{u}$ of Equation \eqref{Eq: BV problem} as
\begin{equation*}
    \tilde{u} = u_b + u, \, \, u \in F_0,
\end{equation*}
where $u_b \in F$ is such that it abides by the initial and boundary conditions, \textit{i.e.}, $u_b(\Phi, a) = \psi$ and $u_b \vert_{\partial X \times [a,b]} = \beta$. Assume in addition that there exists a positive constant $A > 0$ such that $||u||_4 \le A$. This requirement will prove crucial in deriving the sought norm estimates.

\textbf{Step 1: The Renormalization Group operator is tame smooth}. 
In this context, the map in Theorem \ref{Thm: Nash-Moser} is given by the Renormalization Group (RG) operator, \textit{i.e.},  
\begin{equation}
\label{Eq: RG operator}
\mathcal{RG}: F_0 \subset F \rightarrow F, \, \, \, \mathcal{RG}(u) := \partial_k (u+u_b) - \sum_{\ell = 1, 2} G_{k, \ell} (u^{(2)}+ u_b^{(2)}),
\end{equation} 
where $G_{k, \ell}$ is as per Equation \eqref{Eq: Gkl}. In order to show that $\mathcal{RG}$ is tame smooth in a suitably small neighborhood $\mathcal{U} \subset F_0$ of $0$, it suffices to prove that $G_{k, \ell}(\tilde{u}^{(2)})$, $\ell = 1,2$, is a tame smooth map for $\tilde{u} \in u_b + \mathcal{U}$. To this avail, we need to devise the counterparts of \cite[Lem. 4.1]{Dangelo_23} to this setting. 

\begin{lemma}
\label{Lem: 4.1}
Let $\mcM$ be a ultra-static spacetime. Given $g \in \Gamma^{\infty}_0(\mcM; \mathbb{C}^2)$, consider 
$$ h = \Delta^U_R g.$$ Then, $h$ is a past compact smooth section with compact support on any Cauchy hypersurface $\Sigma$. In addition, since $$h = (\mathbb{I} - \Delta^U_R U_k^{(2)}) \varphi,$$ where $\varphi = \Delta_R g$, the following estimates hold true: 
\begin{itemize}
    \item[i.] there exist positive constants $c, C' > 0$ such that $C'$ depends on the support of $f$ in $U_k$ but not on $\tilde{u}$, while $c$ does not depend on $U_k$ and
\begin{equation}
\label{Eq: Lem. 4.1 1}
    ||h||^t_{L^{\infty}} \le c ||h||^t_{H^2} \lesssim c \, ||\varphi||^t_{H^2} \exp({C' ||\tilde{u}^{(2)}||_{\infty}}) \lesssim c ||\varphi||^t_{H^2} \exp({C' ||\tilde{u}||_2}),
\end{equation}
where $||\tilde{u}^{(2)}||_{\infty} : =\max_{\ell = 1, 2} \sum_{j = 1, 2} |\tilde{u}^{(2)}|_{\ell j}$ is the induced $\ell^{\infty}$ operator norm on $\text{Mat}(2; \mathbb{C})$ and the symbol $\lesssim$ corresponds to $\le$ modulus irrelevant numerical factors, which can be reabsorbed in a redefinition of the constants $c, C'$.
\item[ii.] there exist a positive constant $\tilde{C} > 0$ depending only on the support of $g$ such that
\begin{equation}
\label{Eq: Lem. 4.1 2}
    ||h||^t_{L^{\infty}} \lesssim c \exp({C'||\tilde{u}||_2}) \int_{-\infty}^t d \tau \, (t-\tau) ||g||^{\tau}_{H^2} \lesssim \tilde{C} \, \exp({C'||\tilde{u}||_2}) \sup_{\tau \le t} ||g||^{\tau}_{H^2},
\end{equation}
where $c, C' > 0$ are as per item $i.$
\end{itemize}
In the above estimates, we denote by $||\cdot||^t_{L^\alpha}$ the norm on $L^{\alpha}(\Sigma_t; \mathbb{C}^2)$, given  for $\alpha \in [1, \infty)$ by 
\begin{equation*}
    ||h||^t_{L^{\alpha}} := \left(\int_{\Sigma_t} d\mu_{\textnormal{\textbf{x}}} \, |h (t, \textnormal{\textbf{x}})|^{\alpha}\right)^{\frac{1}{\alpha}},
\end{equation*}
where $d\mu_{\textnormal{\textbf{x}}}$ is the measure induced by the Riemannian metric on the Cauchy hypersurface $\Sigma_t$ at a fixed time $t$, whilst for $\alpha = \infty$, we set 
\begin{equation*}
    ||h||^t_{L^{\infty}} := \text{ess} \sup_{\textnormal{\textbf{x}} \in \Sigma_t} |h(t, \textnormal{\textbf{x}})|,
\end{equation*}
$|h(t, \textbf{x})|$ being the $\mathbb{C}^2-$norm of $h$. Similarly, $||\cdot||^t_{H^2}$ is the norm on $H^2(\Sigma_t; \mathbb{C}^2)$.
\end{lemma}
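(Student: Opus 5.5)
We would follow the strategy of \cite[Lem. 4.1]{Dangelo_23}, adapted to the $\mathbb{C}^2$-valued setting, treating $\mathrm{P}_k = \mathrm{P}_0 + f\,\tilde{u}^{(2)}$ as a diagonal Klein--Gordon system perturbed by a zeroth-order, compactly supported matrix potential.

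\textbf{Qualitative part.} Smoothness of $h = \Delta^U_R g$, its past compactness and its compact support on every Cauchy hypersurface $\Sigma_t$ follow from Green hyperbolicity of $\mathrm{P}_k$: $\operatorname{supp} h \subset J^+(\operatorname{supp} g)$, and $J^+(K)\cap\Sigma_t$ is compact for compact $K$. For the identity $h = (\mathbb{I} - \Delta^U_R U_k^{(2)})\varphi$ with $\varphi = \Delta_R g$, we apply $\mathrm{P}_k$ to the right-hand side and use $\mathrm{P}_0\varphi = g$. This gives
\[
\mathrm{P}_k\varphi - U_k^{(2)}\varphi = g .
\]
Both sides are past compact (note that $U_k^{(2)}\varphi$ has compact support because $f$ does), so uniqueness of past-compact solutions gives the claim.

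\textbf{Estimate i.} The first inequality is the Sobolev embedding $H^2(\Sigma_t)\hookrightarrow L^\infty(\Sigma_t)$. This holds for $\dim\Sigma_t = d-1 \le 3$; in higher dimension we would replace $H^2$ by $H^s$ with $s > (d-1)/2$, and the argument is unchanged. On an ultrastatic spacetime the constant $c$ is uniform in $t$, since $\Sigma_t$ is isometric to a fixed $\Sigma$. For the second inequality we write $h - \varphi = -\Delta^U_R(U_k^{(2)}\varphi)$, so $h$ solves
\[
\mathrm{P}_0 h = g - f\,\tilde{u}^{(2)} h .
\]
We then use the standard energy estimate for the diagonal operator $\mathrm{P}_0$ at the $H^2$ level (i.e.\ commuting with spatial derivatives, which is possible on an ultrastatic background). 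Because $\varphi$ already solves $\mathrm{P}_0\varphi = g$ with the same data, we obtain
\[
E_2(h)(t) \le E_2(\varphi)(t) + C\int_{-\infty}^t \|f\,\tilde{u}^{(2)}h\|_{H^1}^{\tau}\,d\tau
\le E_2(\varphi)(t) + C \|\tilde{u}^{(2)}\|_\infty \int_{t_0}^t E_2(h)(\tau)\,d\tau .
\]
Here $t_0$ is the initial time of $\operatorname{supp} f$, which is compact. Grönwall's inequality then gives
\[
E_2(h)(t) \lesssim E_2(\varphi)(t)\,\exp\bigl(C'\|\tilde{u}^{(2)}\|_\infty\bigr),
\]
with $C'$ proportional to the time-extent of $\operatorname{supp} f$ (so it depends on $\operatorname{supp} f$ only) and independent of $\tilde{u}$. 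Since $\tilde{u}^{(2)}$ is constant in $x$ (constant classical fields), the spatial derivatives of $f\tilde{u}^{(2)}$ only hit $f$, and we get a bound of the form $\|\tilde{u}^{(2)}\|_\infty\,\|f\|_{C^2}$. Finally, $\|\tilde{u}^{(2)}\|_\infty \le 2\|\tilde{u}\|_2$ by definition of the seminorms. The matrix (off-diagonal) coupling enters only through the $\ell^\infty$ operator norm of $\tilde{u}^{(2)}$ acting on $\mathbb{C}^2$, so no new difficulty arises there.

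\textbf{Estimate ii.} This reduces to a free estimate for $\varphi = \Delta_R g$. Duhamel's formula for $\mathrm{P}_0$ on the ultrastatic spacetime reads
\[
\varphi(t) = \int_{-\infty}^t \frac{\sin\bigl((t-\tau)\sqrt{A}\bigr)}{\sqrt{A}}\, g(\tau)\,d\tau ,
\qquad A = -\Delta_\Sigma + m_\ell^2 + \xi_\ell R .
\]
The operator $\sin(s\sqrt{A})/\sqrt{A}$ has $H^2\to H^2$ norm at most $|s|$, which yields
\[
\|\varphi\|_{H^2}^t \le \int_{-\infty}^t (t-\tau)\,\|g\|^\tau_{H^2}\,d\tau .
\]
Since $g$ is compactly supported in time, the integral is bounded by $\tilde{C}\,\sup_{\tau\le t}\|g\|^\tau_{H^2}$, with $\tilde{C}$ depending on $\operatorname{supp} g$ (and on $t$).

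\textbf{Main obstacle.} The step requiring the most care is the Grönwall argument in item i. One has to set up the $H^2$-energy so that the constant $C'$ is independent of $\tilde{u}$ and only the exponent carries $\|\tilde{u}^{(2)}\|_\infty$. One also has to handle the lower bound of the mass spectrum: if $m_\ell^2 + \xi_\ell R$ is not positive, we would add a constant to $A$ and absorb it into $c$.
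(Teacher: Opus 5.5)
Your route is essentially the paper's. The paper also writes $h=\varphi-\Delta_R(f\tilde u^{(2)}h)$ and bounds the retarded kernel $\sin(\omega_\ell(t-\tau))/\omega_\ell$ on $H^2(\Sigma_t)$ to get an integral inequality for $\|h_\ell\|^t_{H^2}$. It closes that inequality with Gr\"onwall and finishes with $H^2(\Sigma_t)\hookrightarrow L^\infty(\Sigma_t)$, and item ii is proved by exactly your Duhamel bound. Replacing the explicit kernel bound by an $H^2$ energy estimate is a cosmetic change. Your two caveats (the embedding needs $d-1\le 3$, and the sign of $m_\ell^2+\xi_\ell R$ matters) are points the paper passes over silently.

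One step fails as written: the Gr\"onwall step in item i. The paper's proof takes the identical step. From $y(t)\le \alpha(t)+\beta\int_{t_0}^t y(\tau)\,d\tau$ you may conclude $y(t)\le \alpha(t)e^{\beta(t-t_0)}$ when $\alpha$ is nondecreasing. In general you only get $\sup_{\tau\le t}\alpha(\tau)$ in place of $\alpha(t)$.

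Neither $E_2(\varphi)(t)$ nor $\|\varphi\|^t_{H^2}$ is monotone, and the pointwise bound is in fact false. Take a free solution $\phi$ of $\mathrm{P}_0\phi=0$ with spatially compact support, a bump $\chi(t)$ in time, and set $g=\mathrm{P}_0(\chi\phi)$. Then $\varphi=\chi\phi$ vanishes on every $\Sigma_t$ past the bump. There, however, $h=-\Delta_R(f\tilde u^{(2)}h)$. To first order in $\tilde u^{(2)}$, and past $\text{supp}(f)$, this is the radiated free solution $-(\Delta_R-\Delta_A)(f\tilde u^{(2)}\chi\phi)$, which is nonzero for generic $f$ and small $\tilde u^{(2)}\neq 0$.

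So item i only holds with $\sup_{\tau\le t}$ on the right-hand side. In your version, note also that $E_2(\varphi)$ contains $\|\partial_t\varphi\|^t_{H^1}$, so you never literally obtain $\|\varphi\|^t_{H^2}$.

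The clean fix is to feed a nondecreasing majorant of $E_2(\varphi)$ directly into Gr\"onwall, for example $\int_{-\infty}^t(t-\tau)\|g\|^\tau_{H^2}\,d\tau$ from your own Duhamel step. This gives item ii verbatim. Item ii, in the form with $\sup_{\tau\le t}\|g\|^\tau_{H^2}$, is the only version used later, in Step 2a of the proof of Lemma~\ref{Lem: 4.4}.
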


\begin{proof}
($i.$) A direct consequence of $\mathrm{P}_0$ and $\mathrm{P}_k$ being Green hyperbolic is that $\Delta_R$ and $\Delta^U_R$ both map past compact smooth sections to past compact smooth sections. This entails that $h, \varphi \in \Gamma^{\infty}_{pc}(\mcM; \mathbb{C}^2)$. Recalling also that 
    $$\Delta^U_R = \Delta_R(\mathbb{I}-U_k^{(2)}\Delta^U_R) = (\mathbb{I}- \Delta^U_R U_k^{(2)}) \Delta_R.$$ Note that, since $U_k^{(2)} = f \tilde{u}^{(2)}$, where $f$ is a smooth, compactly supported function and $\tilde{u}^{(2)}$ is constant on $\mcM$, we have the following: 
    \begin{equation*}
        h = \Delta_R g - \Delta_R U_k^{(2)} \Delta^U_R g =  \Delta_R g - \Delta_R U_k^{(2)} h = \varphi - \Delta_R U_k^{(2)} h,  
    \end{equation*}
    or, equivalently, componentwise we can write $$h_{\ell} = \varphi_{\ell} - \sum_{j=1, 2} [\Delta_{R} U_k^{(2)}]_{\ell j} h_j = \varphi_{\ell} - \sum_{j=1, 2} \Delta_{R, \ell} [U_k^{(2)}]_{\ell j} h_j,\, \, \ell = 1, 2.$$
    Denote by $D_{\Sigma_t}$ the Laplace operator on $\Sigma_t$ and define $\omega_{\ell} := \sqrt{D_{\Sigma_t} +m^2_{\ell}}$ for $\ell = 1, 2$. Thus, 
    \begin{equation*}
        h_{\ell}(t, \textbf{x}) = \varphi_{\ell}(t, \textbf{x}) - \sum_{j=1,2} [\tilde{u}^{(2)}]_{\ell j} \int_{-\infty}^t d\tau \, \frac{\sin(\omega_{\ell}(t-\tau))}{\omega_{\ell}} (fh_j) (\tau, \textbf{x}). 
    \end{equation*}
As in \cite[Lem. 4.1]{Dangelo_23}, we can conclude that, for each $\ell = 1,2$, 
\begin{flalign*}
    ||h_{\ell}||^t_{L^2} & \le ||\varphi_{\ell}||^2_{L^2} + \sum_{j=1,2} |[\tilde{u}^{(2)}]_{\ell j} | \int_{a}^t d\tau \, (t-\tau) ||f||^{\tau}_{L^{\infty}} ||h_j||^{\tau}_{L^2}, \\ ||h_{\ell}||^t_{H^2} & \le ||\varphi_{\ell}||^t_{H^2} + C \sum_{j=1,2} |[\tilde{u}^{(2)}]_{\ell j} |  \int_{a}^t d\tau ||h_j||^{\tau}_{H^2}
\end{flalign*}
where the norms here are on $L^{\alpha}(\Sigma_t)$ and $H^2(\Sigma_t)$, respectively, $C > 0$ is a positive constant independent on $\tilde{u}^{(2)}$, while $a \in \mathbb{R}$ depends on $f$. Applying the Gr\"onwall lemma in integrated form, \textit{cf.}, \cite{Gronwall1919}, we obtain 
\begin{equation*}
    ||h_{\ell}||^t_{H^2} \le ||\varphi_{\ell}||^t_{H^2} \, \exp({C \sum_{j=1,2} |[\tilde{u}^{(2)}]_{\ell j}|}), \, \, \, \ell =1,2.
\end{equation*}
Now, recalling that, 
\begin{equation*}
    ||h||^t_{L^2} = \left(\sum_{\ell = 1, 2} (||h_{\ell}||^t_{L^2})^2 \right)^{\frac{1}{2}} \le \sum_{\ell = 1, 2} ||h_{\ell}||^t_{L^2} \le \sum_{\ell = 1, 2} \left[||\varphi_{\ell}||^t_{H^2} + C \sum_{j=1,2} |[\tilde{u}^{(2)}]_{\ell j} |  \int_{a}^t d\tau ||h_j||^{\tau}_{H^2} \right],
\end{equation*}
an, passing to the Sobolev space norm, we get 
\begin{flalign*}
    ||h||^t_{H^2} &\le \sum_{\ell = 1, 2} ||\varphi_{\ell}||^t_{H^2} \, \exp(C \sum_{j = 1, 2} |[\tilde{u}^{(2)}]_{\ell j} |) \le \sum_{\ell = 1, 2} ||\varphi_{\ell}||^t_{H^2} \, \exp(C' ||\tilde{u}^{(2)}||_{\infty}) \\ &\le \sqrt{2} ||\varphi||^t_{H^2} \exp(C' ||\tilde{u}^{(2)}||_{\infty}).
\end{flalign*}
In a similar fashion as in \cite[Lem. 4.1]{Dangelo_23}, one can show that there exists $c > 0$ independent of $U_k$ such that
\begin{equation*}
    ||h||^t_{L^{\infty}} \le c ||h||^t_{H^2}.  
\end{equation*}
The estimate in Equation \eqref{Eq: Lem. 4.1 1} thus follows. \\
($ii.$) To prove the estimate in Equation \eqref{Eq: Lem. 4.1 2}, it suffices to note that 
\begin{equation*}
    \varphi_{\ell}(t, \textbf{x}) = [\Delta_R g]_{\ell} (t, \textbf{x}) = \int_{-\infty}^t d\tau \, \frac{\sin(\omega_{\ell}(t-\tau))}{\omega_{\ell}} g_\ell (\tau, \textbf{x}),
\end{equation*}
where we exploit the fact that the matrix-valued bi-distribution $\Delta_R$ is diagonal. Hence, for each $\ell = 1,2$, it holds that 
\begin{equation*}
    ||\varphi_{\ell}||^t_{H^2}  \le  \int_{-\infty}^{t} d\tau \, (t-\tau) ||g_\ell||^{\tau}_{H^2}, 
\end{equation*}
which entails 
\begin{equation*}
    ||\varphi||^t_{H^2}  \le  \int_{-\infty}^{t} d\tau \, (t-\tau) \sum_{\ell=1,2} ||g_\ell||^{\tau}_{H^2} \le \sqrt{2} \int_{-\infty}^{t} d\tau \, (t-\tau) ||g||^{\tau}_{H^2},
\end{equation*}
where in the last line we apply the Cauchy-Schwarz inequality. 
\end{proof}

Observe that, in order to prove that the RG operator as per Equation \eqref{Eq: RG operator} is tame smooth, it is sufficient to show that the maps $G_{k, \ell}$, $\ell = 1, 2$ in Equation \eqref{Eq: Gkl} are tame smooth. This is indeed the case, as granted by the following lemma. 

\begin{lemma}
\label{Lem: 4.4}
The functionals $G_{k, \ell}$, $\ell = 1, 2$ are smooth functions of $\tilde{u}^{(2)}$ and, in particular, they are tame smooth for $\tilde{u} \in u_b + \mathcal{U}$. 
\end{lemma}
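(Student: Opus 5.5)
We adapt the proof of \cite[Lem. 4.2--4.4]{Dangelo_23} to the matrix-valued setting. The plan has three steps: regularize $G_{k,\ell}$ into an integral of smooth kernels, prove smoothness in the finite-dimensional variable $\tilde u^{(2)}$ via a Neumann/Dyson series, and then deduce tame bounds from the chain rule.

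\textbf{Step 1: rewrite $G_{k,\ell}$ in terms of smooth kernels.} Since $\Delta_F-H_F$ is smooth for each component, $w$ is a diagonal matrix of smooth kernels $w_{\ell\ell}\in C^\infty(\mcM\times\mcM)$. Because $\partial_k q_{k,\ell}=f$ is compactly supported, the coincidence-point limit in Equation \eqref{Eq: Gkl} only involves $w$ tested against $f$-localised objects. Writing $(\mathbb{I}-\Delta^u_R U^{(2)}_k)=(\mathbb{I}+\Delta_R U_k^{(2)})^{-1}$, we expand it in a Dyson series,
\begin{equation*}
\sum_{n\ge 0}(-1)^n(\Delta_R f\,\tilde u^{(2)})^n .
\end{equation*}
This series converges on the compact time slab containing $\mathrm{supp} f$ by the Volterra structure of $\Delta_R$: the $n$-th term carries a factor $(C\|\tilde u^{(2)}\|_\infty)^n/n!$, which is exactly the Grönwall mechanism behind Lemma~\ref{Lem: 4.1}. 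Hence $G_{k,\ell}$ becomes an absolutely convergent series in the entries of the constant symmetric $2\times 2$ matrix $\tilde u^{(2)}(\Phi,k)$. Each term is a polynomial in these entries, with coefficients given by integrals of $f$, $w$ and $\Delta_R$ that are finite thanks to the $L^\infty$--$H^2$ bounds of Lemma~\ref{Lem: 4.1}.

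\textbf{Step 2: smoothness.} The series from Step 1 defines a real-analytic, hence smooth, function $\mathcal{G}_\ell:\mathrm{Sym}(2;\mathbb{R})\to\mathbb{C}$. Its derivatives of every order are bounded on bounded sets, with
\begin{equation*}
|D^m\mathcal{G}_\ell(A)|\le C_m e^{C'\|A\|_\infty}.
\end{equation*}
This estimate follows by differentiating the Dyson series term by term, or equivalently by repeatedly applying the resolvent identity $\partial_A\Delta^u_R=-\Delta^u_R f(\partial A)\Delta^u_R$ together with Lemma~\ref{Lem: 4.1}(ii). Then $G_{k,\ell}(\tilde u^{(2)})(\Phi,k)=\mathcal{G}_\ell(\tilde u^{(2)}(\Phi,k))$ is a composition of a smooth function of finitely many variables with a linear differential operator of order two. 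Since $k$ enters only through $\tilde u$ (the regulator is linear in $k$, so $\partial_k q_{k,\ell}=f$), the map is smooth as a map $F\to F$.

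\textbf{Step 3: tameness, the main obstacle.} We need the tame estimate
\begin{equation*}
\|\mathcal{G}_\ell\circ\tilde u^{(2)}\|_n\le C_n\,(1+\|\tilde u\|_{n+2}),
\end{equation*}
and analogous bounds for all derivatives $D^j G_{k,\ell}$. This is the standard fact that nonlinear differential operators of the form $u\mapsto\mathcal{G}(D^2u)$ with smooth $\mathcal{G}$ are tame \cite[Part II, 2.2]{Hamilton_1982}. The proof expands $D^\alpha[\mathcal{G}(\tilde u^{(2)})]$ by the Faà di Bruno formula, bounds the derivatives of $\mathcal{G}$ by $C_m e^{C'\|\tilde u\|_2}$, and handles the products of derivatives of $\tilde u^{(2)}$ with the interpolation inequalities
\begin{equation*}
\|u\|_{i}\|u\|_{j}\lesssim\|u\|_{0}\|u\|_{i+j}
\end{equation*}
on the compact set $X\times[a,b]$.

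The delicate point is the uniformity of the constants. The exponential prefactor $e^{C'\|\tilde u\|_2}$ must be controlled uniformly, and this is where the standing assumption $\|u\|_4\le A$ (together with the fixed $u_b$) is used. It bounds $\|\tilde u\|_2$ and the low norms entering the interpolation, so the resulting constants depend only on $A$, $u_b$, $f$ and the masses. This makes the estimate tame on $u_b+\mathcal{U}$, and applying the same argument to $D^jG_{k,\ell}$ gives tame smoothness.
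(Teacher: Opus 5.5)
Your proposal is correct and follows essentially the same route as the paper. In both, $G_{k,\ell}$ is reduced to a function of the constant Hessian $\tilde u^{(2)}(\Phi,k)$, its derivatives are bounded through the exponential estimates of Lemma~\ref{Lem: 4.1}, and tameness follows from Fa\`a di Bruno plus interpolation, with constants made uniform by the a priori bound on $\|\tilde u\|_2$. Your packaging is slightly more streamlined: you treat $\mathcal{G}_\ell$ as a finite-dimensional (entire, via the Dyson series) function composed with $u\mapsto u^{(2)}$ and invoke Hamilton's theorem on nonlinear differential operators, which gives a loss of $2$ derivatives instead of the paper's $3$.
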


\begin{proof}
For the sake of clarity, we divide the proof in two parts. \\
\textbf{Part 1: $G_{k, \ell}(\tilde{u}^{(2)})$ is smooth}. A direct consequence of the explicit choice of $q_{k, \ell}$, is that $G_{k, \ell}$ depends on $(\Phi, k)$ only through $\tilde{u}^{(2)}$, \textit{i.e.}, $G_{k, \ell}(\tilde{u}^{(2)}) (\Phi, k) = G_{k, \ell}(\tilde{u}^{(2)} (\Phi, k))$. 
The $n-$th order functional derivative of $\tilde{G}_{k, \ell} (\tilde{u}) := G_{k, \ell}(\tilde{u}^{(2)})$ with respect to $\tilde{u}$ thus reads
    \begin{flalign*}
        \tilde{G}^{(n)}_{k, \ell} (\tilde{u}; v_1, ..., v_n) &= \frac{(-1)^{n+1}}{2 ||f||_{L^1}} \int_{\mathcal{M}}  d\mu_x \, \partial_k q_{k,\ell}(x) \\ &\left[ \sum_{\boldsymbol{m} \in \{0, 1\}^n} \prod_{i=1}^n \left((\mathbb{I}- \tilde{u}^{(2)} \Delta^{\tilde{u}}_{R}) (v_i^{(2)} \Delta^{\tilde{u}}_R)^{m_i} f \otimes (\mathbb{I}- \tilde{u}^{(2)} \Delta^{\tilde{u}}_{R}) (v_i^{(2)} \Delta^{\tilde{u}}_R)^{1-m_i} f \right) (w) \right]_{\ell \ell} (x,x) \\ &= \frac{(-1)^{n+1}}{2 ||f||_{L^1}} \int_{\mathcal{M}}  d\mu_x \, \partial_k q_{k,\ell}(x) \sum_{\boldsymbol{m} \in \{0, 1\}^n} \left[(\mathbb{I}- \tilde{u}^{(2)} \Delta^{\tilde{u}}_{R}) \otimes (\mathbb{I}- \tilde{u}^{(2)} \Delta^{\tilde{u}}_{R}) \circ  \right.\\& \left. \prod_{i=1}^n  \left( (v_i^{(2)} \Delta^{\tilde{u}}_R)^{m_i} f \otimes (v_i^{(2)} \Delta^{\tilde{u}}_R)^{1-m_i} f \right) (w) \right]_{\ell \ell} (x,x),
    \end{flalign*}
where the sum is over the $2^n$ choices of $\boldsymbol{m} = (m_1, \ldots, m_n)$ with $m_i \in \{0,1\}$, while we define $A^0 = id$ for any linear operator $A$ such that $A(x) \in \text{Mat}(2; \mathbb{C})$, $x \in \mcM$. Due to the compactness of the supports of $f$ and $q_{k, \ell}$ and thanks to the smoothness of $w$, one has that the integral in the above expression yields a finite result. Thus, we can conclude that $G_{k, \ell}$ is smooth as a function of $\tilde{u} \in F$. \\
\textbf{Part 2. $G_{k, \ell}(\tilde{u}^{(2)})$ is tame smooth}.
To show that $G_{k, \ell}$ is tame, we need to estimate \begin{flalign*}
    ||\tilde{G}_{k, \ell} (\tilde{u})||_n := \sum_{|\alpha| \le n} \sup_{\Phi, k} |D^{\alpha} G_{k, \ell} (\tilde{u}^{(2)}(\Phi, k))| = ||\tilde{G}_{k, \ell}(\tilde{u})||_0 + \underbrace{\sum_{1 \le |\alpha| \le n} \sup_{\Phi, k} |D^{\alpha} G_{k, \ell} (\tilde{u}^{(2)}(\Phi, k))|}_{(1)}, 
\end{flalign*}
in terms of the seminorm $||\tilde{u}||_{n+r}$ for a certain $r \in \mathbb{N}$. Term $(1)$ can be more conveniently rewritten using the functional version of the {\it Faà di Bruno formula}, that is
\begin{flalign*}
   |D^{\alpha} \tilde{G}_{k, \ell}(\tilde{u}(\Phi, k))| &= |\sum_{1 \le p \le |\alpha|} \sum_{\substack{\alpha_1 + \ldots + \alpha_p = \alpha \\ |\alpha_i| \ge 1}} \tilde{G}^{(p)}_{k, \ell} (\tilde{u} (\Phi, k)) (D^{\alpha_i} \tilde{u}^{(2)}(\Phi, k), \ldots, D^{\alpha_p} \tilde{u}^{(2)}(\Phi, k))| \\ & \le C_{\alpha} \sum_{p \le |\alpha|} ||\tilde{G}^{(p)}_{k, \ell} (\tilde{u})||_0  \sum_{\substack{\alpha_1 + \ldots + \alpha_p = \alpha \\ |\alpha_i| \ge 1}} \prod_{i=1}^{p} \sup_{\Phi, k} |D^{\alpha_i} \tilde{u}^{(2)}(\Phi, k)| \\ & \le C_{\alpha} \sum_{p \le |\alpha|} ||\tilde{G}^{(p)}_{k, \ell} (\tilde{u})||_0 \sum_{\substack{\alpha_1 + \ldots + \alpha_p = \alpha \\ |\alpha_i| \ge 1}} \prod_{i=1}^p ||\tilde{u}^{(2)}||_{|\alpha_i|} \\ & \le C_{\alpha} \sum_{p \le |\alpha|} ||\tilde{G}^{(p)}_{k, \ell} (\tilde{u})||_0 ||\tilde{u}^{(2)}||^p_{|\alpha| - p +1},
\end{flalign*}
where we used the estimate $1 \le |\alpha_i| = |\alpha| - \sum_{j \ne i} |\alpha_j| \le |\alpha| -p + 1$. Observe that in the above equation we have {\it formally} defined $$||\tilde{G}^{(p)}_{k, \ell}(\tilde{u})||_0 := \sup_{\substack{v_1, \ldots, v_p \in F \\ ||v_i||_{2} = 1}} ||\tilde{G}^{(p)}_{k, \ell}(\tilde{u}; v_1, \ldots, v_p)||_0, \, \, \, \text{for any $p \in \mathbb{N}$},$$ for $\tilde{u} \in u_b + \mathcal{U}$ with $\mathcal{U}$ a sufficiently small neighbourhood of $0$. 
Hence, calling $m := |\alpha|$, we have
\begin{equation}
    \label{Eq: estimate on Gkl seminorm n}
    ||\tilde{G}_{k, \ell}(\tilde{u})||_n < ||\tilde{G}_{k, \ell}(\tilde{u})||_0 + \sum_{m = 1}^n \sum_{p=1}^m ||\tilde{G}^{(p)}_{k, \ell} (\tilde{u})||_0 ||\tilde{u}^{(2)}||^p_{m-p+1}, 
\end{equation}
where we discarded the irrelevant constant factor $C_{\alpha}$. Let us carefully examine the second contribution, splitting the analysis into two steps. 

\textit{Step 2a}. First and foremost, we need to devise a suitable estimate for the zeroth seminorm of the functional derivatives of $\tilde{G}_{k, \ell}$. This will entail that the multi-linear operators $\tilde{G}^{(p)}_{k, \ell} (\tilde{u}): F_0 \times \ldots \times F_0 \rightarrow F$ are also bounded for all $p \in \mathbb{N}$ with respect to the seminorm $||\cdot||_0$. Lemma \ref{Lem: 4.1} entails that, for any space compact smooth section $g \in \Gamma^{\infty}_{sc} (\mcM; \mathbb{C}^2)$, for each $i \in \{1, \ldots, p\}$, it holds true that
\begin{flalign*}
    ||(v_i^{(2)} \Delta^{\tilde{u}}_R)^{m_i} f g ||^t_{H^2} \le \begin{cases}
        \tilde{C} \sup_{\tau \le t} ||g||^{\tau}_{H^2}, \, \, \text{if} \, \, m_i = 0\\
        \tilde{C} \exp{(C' ||\tilde{u}||_2)} \sup_{\tau \le t} ||g||^{\tau}_{H^2} ||v_i||_4, \, \, \text{if} \, \, m_i = 1
    \end{cases},
\end{flalign*}
where $\tilde{C} > 0$ is a suitable positive constant depending on $f$ which incorporates all irrelevant numerical factors in Equations \eqref{Eq: Lem. 4.1 1} and \eqref{Eq: Lem. 4.1 2}. It descends that, the worst possible scenario corresponds to $m_i = 1$, that is 
\begin{flalign*}
    ||\prod_{i=1}^p (v_i^{(2)} \Delta^{\tilde{u}}_R)^{m_i} f g ||^t_{H^2} \le \tilde{C}^{p} \exp{(p C' ||\tilde{u}||_2)} \sup_{\tau \le t} ||g||^{\tau}_{H^2} \prod_{i=1}^{p} ||v_i||_4 \le c \, \tilde{C}^{p} \exp{(p C' ||\tilde{u}||_2)}, 
\end{flalign*}
where $c > 0$ is a positive constant depending on $p$ and on $\max_{i} ||v_i||_4$. Similarly, we have 
\begin{equation*}
    ||(\mathbb{I} - \tilde{u}^{(2)} \Delta^{\tilde{u}}_R) g ||^t_{H^2} \le (1+ ||\tilde{u}||_2) \tilde{C} \exp{(C' ||\tilde{u}||_2)} \sup_{\tau \le t} ||g||^{\tau}_{H^2} \le c' \exp{(C' ||\tilde{u}||_2)},
\end{equation*}
where we use that, per hypothesis, $||\tilde{u}||_2 \le ||u||_2 + ||u_b||_2 \le A'$ for some $A' > 0$. Now we introduce in Equation \eqref{Eq: Gkl} a cutoff function $\chi \in C^{\infty}_0(\mcM)$ such that $\chi \equiv 1$ in a region containing the support of $f$. We need to use the above estimates on the integral kernel of $\tilde{G}^(p)_{k, \ell}$: 
\begin{equation*}
    a_p (x,y) := \prod_{i=1}^p \left((\mathbb{I}- \tilde{u}^{(2)} \Delta^{\tilde{u}}_{R}) (v_i^{(2)} \Delta^{\tilde{u}}_R)^{m_i} f \otimes (\mathbb{I}- \tilde{u}^{(2)} \Delta^{\tilde{u}}_{R}) (v_i^{(2)} \Delta^{\tilde{u}}_R)^{1-m_i} f \right) (\chi w \chi) (x,y) \in \text{Mat}(2; \mathbb{C}).
\end{equation*}
It descends that
\begin{flalign*}
    \sup_{x \in \text{supp}(f)} |a_p (x,x)| &\le \sup_{x, y \in \text{supp}(f)} |a_p (x,y)| \\ &\le    \sup_{t_x, t_y \in \text{supp}(f)} (\tilde{C}')^p \exp{ ((p+2) ||\tilde{u}||_2)} ||\chi w \chi||^{(t_x, t_y)}_{H^2 \times H^2}, 
\end{flalign*}
where $||\cdot||^{(t_x, t_y)}_{H^2 \times H^2}$ denotes the Sobolev norm on $\Sigma_{t_x} \times \Sigma_{t_y}$ and we exploit the fact that, for linear operators $A_1: H^2 (\Sigma_{t_x}) \rightarrow H^2(\Sigma_{t_x}), A_2:  H^2(\Sigma_{t_y}) \rightarrow  H^2(\Sigma_{t_y})$ and for any smooth function $W \in C^{\infty} (\Sigma_{t_x} \times \Sigma_{t_y})$, it holds that $$||(A_1 \otimes A_2) W||^{(t_x, t_y)}_{H^2 \times H^2} \le ||A_1||^{t_x}_{H^2_x}  ||A_2||^{t_y}_{H^2_y}  ||W||^{(t_x, t_y)}_{H^2 \times H^2},$$ with $||\cdot||^{t_i}_{H^2_i}$ denoting the corresponding operator norms. Since $e^{C' ||\tilde{u}^{(2)}||_2} \le C_1 (1+ ||\tilde{u}^{(2)}||_2)$ for a sufficiently large positive constant $C_1 > 0$, we can conclude that, for fixed $p \in \{1, ..., m\}$ 
\begin{equation}
\label{Eq: estimate on Gpkl}
    ||\tilde{G}^{(p)}_{k, \ell}(\tilde{u})||_0 \le \text{const}. (1+ ||\tilde{u}||_2),  
\end{equation}
where $\text{const.}$ is independent on $\tilde{u}$. 

{\it Step 2b}. Going back to Equation \eqref{Eq: estimate on Gkl seminorm n}, an argument analogous to the one discussed in \cite[Lem. 4.1]{Dangelo_23} implies that we can estimate 
$$ ||\tilde{u}^{(2)}||^p_{m-p+1} < C ||\tilde{u}^{(2)}||_0^{p-1} ||\tilde{u}||_{m+3},$$ hence, 
\begin{flalign*}
    ||\tilde{G}_{k, \ell}(\tilde{u})||_n &< C \left( ||\tilde{G}_{k, \ell}(\tilde{u})||_0 + \sum_{m=1}^n \sum_{p=1}^{m} ||\tilde{G}^{(p)}_{k, \ell} (\tilde{u})||_0 ||\tilde{u}^{(2)}||^{p-1}_0 ||\tilde{u}||_{m+3} \right) \\ & < C (1 + ||\tilde{u}||_2) \left(\sum_{m=1}^n \sum_{p=1}^{m} ||\tilde{u}||^{m-1}_2 ||\tilde{u}||_{m+3} \right) \\ & <  C(1+ ||\tilde{u}||_{n+3}),
\end{flalign*}
where in the second line we use that $||\tilde{u}||_{m+3} \le ||\tilde{u}||_{n+3}$ and with a slight abuse of notation we still denote by $C$ the overall constant which depends on $n$. This concludes the proof. 
\end{proof}

\noindent Hence, since the RG operator is built as a linear combination of tame smooth functionals, we obtain the sought result. 
\begin{proposition}
    Denote by $\mathcal{U} \subset F_0$ a small neighborhood of $0$ and assume that, for any $u \in \mathcal{U}$, there exists some positive constant $A$ such that $||u||_2 \le A$. Then the RG operator $\mathcal{RG}: \mathcal{U} \subset F_0 \rightarrow F$ as per Equation \eqref{Eq: RG operator} is tame smooth -- see Appendix \ref{Sec: Appendix A}.
\end{proposition}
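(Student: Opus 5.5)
The plan is to decompose $\mathcal{RG}$ into three pieces and verify tame smoothness for each, using that sums and compositions of tame smooth maps are tame smooth (Appendix \ref{Sec: Appendix A}). For $u \in \mathcal{U}$ we write
\begin{equation*}
\mathcal{RG}(u) = \partial_k u + \partial_k u_b - \sum_{\ell=1,2} \tilde{G}_{k,\ell}(u_b + u).
\end{equation*}

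\textbf{The two simple pieces.} The term $\partial_k u_b$ is a fixed element of $F$, hence constant in $u$, and trivially tame smooth. The map $u \mapsto \partial_k u$ is a continuous linear operator $F_0 \to F$. Its tame estimate is
\begin{equation*}
\|\partial_k u\|_n \le \|u\|_{n+1},
\end{equation*}
which holds with degree $1$ and base $0$. A continuous linear tame map is tame smooth, because its first derivative is the map itself and all higher derivatives vanish.

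\textbf{The nonlinear piece.} The translation $u \mapsto u_b + u$ is affine. It satisfies
\begin{equation*}
\|u_b + u\|_n \le \|u_b\|_n + \|u\|_n \le C_n (1 + \|u\|_n),
\end{equation*}
with $C_n$ depending only on the fixed $u_b$, so it is tame smooth. Lemma \ref{Lem: 4.4} shows that each $\tilde{G}_{k,\ell}$ is smooth and tame on $u_b + \mathcal{U}$, with
\begin{equation*}
\|\tilde{G}_{k,\ell}(\tilde{u})\|_n \le C (1 + \|\tilde{u}\|_{n+3}).
\end{equation*}
The composition $u \mapsto \tilde{G}_{k,\ell}(u_b + u)$ is therefore tame.

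\textbf{Main obstacle: derivatives of $\tilde{G}_{k,\ell}$.} Tame smoothness requires every derivative $D^p \tilde{G}_{k,\ell}(\tilde{u}; v_1, \dots, v_p)$ to be tame, not only the map itself. I would handle this by rerunning the Faà di Bruno argument of Lemma \ref{Lem: 4.4}, Part 2, on the explicit formula for $\tilde{G}^{(n)}_{k,\ell}$ from Part 1, now with the extra entries $v_i$.

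Step 2a already gives the needed bounds. The $m_i = 1$ factors are controlled via Lemma \ref{Lem: 4.1} by $\exp(C'\|\tilde{u}\|_2)\,\|v_i\|_4$, and the uniform bound $\|\tilde{u}\|_2 \le A'$ on $\mathcal{U}$ (which is exactly why the hypothesis $\|u\|_2 \le A$ is imposed) absorbs the exponential into a constant. Differentiating the kernel in $(\Phi, k)$ then yields a sum of products in which each $v_i$ or $\tilde{u}$ factor carries derivatives. Applying the same interpolation step used in Step 2b of Lemma \ref{Lem: 4.4} should give an estimate of the form
\begin{equation*}
\|D^p \tilde{G}_{k,\ell}(\tilde{u})\{v_1, \dots, v_p\}\|_n \le C \Big( \|\tilde{u}\|_{n+r} \prod_i \|v_i\|_{r} + \sum_j \|v_j\|_{n+r} \prod_{i \ne j} \|v_i\|_{r} \Big),
\end{equation*}
for some fixed $r$ (e.g. $r = 4$). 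This is the tame condition for multilinear derivatives.

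Summing the pieces, and using that finite linear combinations of tame smooth maps are tame smooth, gives the claim.
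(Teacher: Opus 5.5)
Your proposal is correct and follows essentially the paper's argument: the paper likewise reduces tame smoothness of $\mathcal{RG}$ to that of $\tilde{G}_{k,\ell}$ on $u_b+\mathcal{U}$ via Lemma~\ref{Lem: 4.4}. It then uses that $\partial_k$ is linear and tame, and that sums preserve tame smoothness, with the translation by $u_b$ handled implicitly. Your extra step on the tameness of the derivatives $D^p\tilde{G}_{k,\ell}$ fills in something the paper only asserts, since the proof of Lemma~\ref{Lem: 4.4} gives the tame estimate only for $\tilde{G}_{k,\ell}$ itself. The cleanest way to close it is to note that $\tilde{G}_{k,\ell}(\tilde{u})(\Phi,k)$ is a smooth function of the finitely many entries of $\tilde{u}^{(2)}(\Phi,k)$, i.e.\ a second-order nonlinear differential operator on the compact set $X\times[a,b]$, so Hamilton's standard result on smooth tame differential operators applies.
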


\textbf{Step 2: The linearized RG operator is tame smooth}. The goal is to prove the validity of item $1.$ in Theorem \ref{Thm: Nash-Moser}. Adopting the same notations as in Step 1, we define the linearized RG operator as the linear map 
\begin{equation}
    \label{Eq: linearized RG operator}
    L: (u_b + \mathcal{U}) \times F_0 \rightarrow F, \, \, L(u) v:= \partial_k v - \text{Tr}_{\mathcal{E}}[[\sigma(u)] (v)],
\end{equation}
where
\begin{flalign*}
  \text{Tr}_{\mathcal{E}}[[\sigma(u)] (v)] &:= \sum_{\ell =1, 2} [\sigma_{\ell}(u)] (v) = \frac{1}{2||f||_{L^1}} \int_{\mathcal{M}} d\mu_x \, \sum_{\ell = 1, 2} \partial_k q_{k, \ell}(x) \\ & \sum_{m_1 \in \{0, 1\}} \left[ (\mathbb{I}- u^{(2)} \Delta^u_{R}) (v^{(2)} \Delta^u_R)^{m_1} f \otimes (\mathbb{I}- u^{(2)} \Delta^u_{R}) (v^{(2)} \Delta^u_R)^{1- m_1} f  (w) \right]_{\ell \ell} (x,x), 
\end{flalign*}
where $[\cdot]_{\ell \ell} (y,x)$ is as per Notation \ref{Not: matrix}. Observe that, since $\partial_k q_{k, \ell}$ is constant in $k$, the dependence of $\sigma_{\ell}(u)(\cdot)$ on $(\Phi, k)$ is only through $u^{(2)}$.

\begin{remark}
    In contrast to the case of a single scalar field, here $\sigma_{\ell}(u)$ is \emph{not} a multiplicative operator acting on $v^{(2)}$. A direct inspection entails that it is not possible to factor out the dependence on the second field derivatives of $v$ and, hence, $\sigma_{\ell}(u)(\cdot)$ should be thought of as a linear operator acting on $v^{(2)}$. 
\end{remark}

\begin{remark}
   In our setting, the diffusion operator appearing in the linearized RG equation is no longer a scalar but it becomes a $2$ by $2$ matrix acting on the second partial derivatives of $v \in F_0$ with respect to the fields. Thus, we can rewrite the action of $L(u)$ making the matrix components explicit as follows. Define
   \begin{equation*}
       \sum_{\ell = 1,2} [\sigma_{\ell}(u)] (v) := \text{Tr}_{\mathcal{E}} [[\sigma(u)](v)] = \text{tr}_{\mathbb{C}^2} \int_{\mcM} d\mu_x [\sigma(u)] (v) (x, x). 
    \end{equation*}
Hence, we can define an integral operator $\Sigma (u) (\cdot):= \int_{\mcM} d\mu_x [\sigma(u)] (\cdot) (x, x)$ acting on $v^{(2)}$ such that $$\text{Tr}_{\mathcal{E}} [[\sigma(u)](v)] := \sum_{i,j=1,2} \Sigma_{i j}(u)(\partial_i \partial_j v),$$ with components  
 \begin{flalign}
  \label{Eq: B(u)}
   \notag \Sigma_{ij} (u) &:= \int_{\mcM} d\mu_x \, \sum_{\ell = 1,2} \sum_{n, m = 1, 2} \left( (\mathbb{I} - u^{(2)} \Delta^u_R)_{\ell i} (\Delta^u_R)_{j n} (\mathbb{I} - u^{(2)} \Delta^u_R)_{\ell m} \right. \\ & \left. + (\mathbb{I} - u^{(2)} \Delta^u_R)_{\ell n} (\mathbb{I} - u^{(2)} \Delta^u_R)_{\ell i} (\Delta^u_R)_{j m} \right) (w)_{n m} (x,x).
  \end{flalign}
\end{remark}

In order to prove that the linearized RG operator $L$ is a tame smooth map, we shall need the following property of the linear operators $\sigma_{\ell}$. 

\begin{proposition}
\label{Prop: sigma is tame smooth}
    The linear maps $\sigma_{\ell}(u)$ are tame smooth for each $\ell = 1, 2$.
\end{proposition}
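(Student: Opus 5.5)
The plan is to regard $\sigma_\ell$ as a map of two arguments, $(u,v)\mapsto[\sigma_\ell(u)](v)$ on $(u_b+\mathcal{U})\times F_0\to F$. It is linear in $v$, and by the previous discussion it depends on $(\Phi,k)$ only through $u^{(2)}(\Phi,k)$ and $v^{(2)}(\Phi,k)$. Tame smoothness will then follow by the same two-part scheme used in the proof of Lemma \ref{Lem: 4.4}.

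\textbf{Smoothness.} We first observe that $[\sigma_\ell(u)](v)$ is precisely the first functional derivative $\tilde G^{(1)}_{k,\ell}(u;v)$ computed in Part 1 of the proof of Lemma \ref{Lem: 4.4}, up to sign. Equivalently, via the components $\Sigma_{ij}(u)$ in Equation \eqref{Eq: B(u)}, we have $[\sigma(u)](v)=\sum_{i,j}\Sigma_{ij}(u^{(2)})\,\partial_i\partial_j v$. Here each $\Sigma_{ij}$ is a scalar function of the constant matrix $u^{(2)}\in\mathrm{Mat}(2;\mathbb{R})$, obtained as a coinciding-point integral against the smooth kernel $\chi w\chi$. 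The derivatives of $\Sigma_{ij}$ with respect to $u^{(2)}$ are exactly the higher derivatives $\tilde G^{(p+1)}_{k,\ell}$ already written down via the $\{0,1\}^n$ expansion. Their finiteness was established in Lemma \ref{Lem: 4.4}, using compact support of $f$ and smoothness of $w$. Hence $\Sigma_{ij}$ is a smooth function on a neighbourhood of the range of $u_b^{(2)}$, and $(u,v)\mapsto\sigma_\ell(u)(v)$ is smooth as a composition of smooth maps with the linear, continuous map $v\mapsto v^{(2)}$.

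\textbf{Tame estimates.} We will apply the Leibniz rule to $D^\alpha[\Sigma_{ij}(u^{(2)})\partial_i\partial_j v]$ and split it into $D^{\beta}\Sigma_{ij}(u^{(2)})\cdot D^{\alpha-\beta}\partial_i\partial_jv$. For the first factor we use Faà di Bruno exactly as in Equation \eqref{Eq: estimate on Gkl seminorm n}. Together with the bound \eqref{Eq: estimate on Gpkl}, which gives $\|\tilde G^{(p)}_{k,\ell}(u)\|_0\le \mathrm{const}(1+\|u\|_2)$ and thus $\|\Sigma_{ij}^{(p)}(u^{(2)})\|_0\le\mathrm{const}$ using $\|u\|_2\le A'$, and with the interpolation step of Step 2b, this yields $\|\Sigma_{ij}(u^{(2)})\|_m\le C(1+\|u\|_{m+3})$. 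The second factor is bounded trivially by $\|v\|_{n-m+2}$.

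The remaining issue is to combine these bounds. The products $\|u\|_{m+3}\|v\|_{n-m+2}$ must be converted into the tame form $C(\|v\|_{n+r}+\|u\|_{n+r}\|v\|_{r'})$. I expect this to be the main obstacle, because one must use Hamilton's interpolation inequalities for $C^\infty(X\times[a,b])$ on a compact set, namely $\|u\|_{i}\|v\|_{j}\lesssim \|u\|_{i+j}\|v\|_0+\|u\|_0\|v\|_{i+j}$ (up to a fixed loss of derivatives). These inequalities are available because $F$ is tame as the space of smooth functions on a compact set. Applying them with the uniform bounds $\|u\|_2\le A'$ yields
\[
\|\sigma_\ell(u)(v)\|_n\le C\big(\|v\|_{n+2}+\|u\|_{n+3}\|v\|_2\big),
\]
which is the tame estimate with degree $r=3$ and base $b=2$. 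The same procedure, applied to each derivative $D^j_u\sigma_\ell$ (which involves the higher $\tilde G^{(p)}_{k,\ell}$, all bounded by Step 2a), gives tameness of all derivatives. This proves that $\sigma_\ell$ is tame smooth.
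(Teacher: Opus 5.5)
Your proposal is correct and follows essentially the paper's route. The paper likewise identifies $[\sigma_\ell(u)](v)$ with the first functional derivative $\tilde G^{(1)}_{k,\ell}(u;v)$ (more generally $\sigma_\ell^{(n)}=\tilde G^{(n+1)}_{k,\ell}$), reads off smoothness from Part 1 of Lemma \ref{Lem: 4.4}, and then simply cites that lemma for tameness. You instead unpack that last step into an explicit Leibniz, Fa\`a di Bruno and interpolation estimate, which makes the argument somewhat more self-contained.

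The only slip is in the index bookkeeping. If you use merely $\|u\|_2\le A'$, the $m=0$ term $\|u\|_3\|v\|_{n+2}$ is not controlled by $\|v\|_{n+2}$, and interpolation then yields $\|v\|_{n+3}$ instead. To obtain exactly $C(\|v\|_{n+2}+\|u\|_{n+3}\|v\|_2)$ you should invoke the Step 0 bound $\|u\|_4\le A$. Either way the tameness conclusion is unaffected.
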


\begin{proof}
    Note that, by direct inspection, $[\sigma^{(n)}_{\ell} (u)] (v_1, ..., v_n) = \tilde{G}^{(n+1)}_{k, \ell} (u; v_1, ..., v_n)$ and, thus, $[\sigma_{\ell}(u)] (v)$ is smooth as a function of $u^{(2)}$. In addition, being related to the functional derivative of $\tilde{G}_{k, \ell}$, on account of Lemma \ref{Lem: 4.4} it is straightforward to conclude that $\sigma_{\ell}(u): F_0 \rightarrow F$ is also a tame map. We conclude by observing that the linear combination of tame smooth maps $\sum_{\ell = 1, 2} [\sigma_{\ell} (u)] (v)$ is also tame smooth. 
\end{proof}

\noindent Especially in view of the physical applications, it seems natural to wonder whether, with a suitable choice of states, one can guarantee that for each $\ell = 1, 2$, $[\sigma_{\ell}(u)](v)$ is strictly positive as well as $||u||_2 \le A$ in a sufficiently small neighborhood of $0$. The answer is affirmative as illustrated by the following proposition.

\begin{proposition}
\label{Prop: sigma is positive-definite}
    Consider the initial, boundary-value problem in Equation \eqref{Eq: BV problem}. Assume that the functions $\beta, \psi$ and $u_b \in F$ are chosen in such a way that there exists $\epsilon > 0$ for which $||\beta||_2 + ||\psi||_2 < \epsilon$ and $||u_b||_2 \le \epsilon$. Then we can always choose $w \in C^{\infty}(\mcM \times \mcM; \text{Mat}(2; \mathbb{C}))$ for which there exists $\mathcal{U} \subset F_0$, such that, for all $u \in \mathcal{U}$, $v \in F_0$ $[\sigma_{\ell} (\tilde{u})](v) := [\sigma_{\ell}(u_b + u)](v) \ge c > 0$ and $||u||_2 < A = \epsilon$. 
\end{proposition}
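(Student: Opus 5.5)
The plan follows the single-field argument of \cite{Dangelo_23}: first fix the operator at $\tilde u^{(2)}=0$, then propagate positivity to a small neighbourhood by continuity. The key observation is that at $\tilde u = 0$ the linearised diffusion coefficient reduces to an explicit, state-dependent quantity built out of $w$ and the free retarded propagators. Setting $u^{(2)}=0$ in the definition of $\Sigma_{ij}$ in Equation \eqref{Eq: B(u)} gives $(\mathbb{I}-u^{(2)}\Delta^u_R)=\mathbb{I}$ and $\Delta^u_R=\Delta_R$, which is diagonal. Hence
\[
\Sigma_{ij}(0)=\int_{\mcM} d\mu_x \, \partial_k q_{k,\ell}(x)\,\big((\Delta_R f)_{j j}\, w_{\ell i}+ w_{i\ell}(\Delta_R f)_{j\ell}\big)(x,x)
\]
up to the normalisation $\frac{1}{2\|f\|_{L^1}}$. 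Since $w$ is diagonal, only the diagonal entries $\Sigma_{\ell\ell}(0)$ survive.

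\textbf{Step 1: choice of $w$.} The freedom in $w_{\ell\ell}=\Delta_{F,\ell}-H_{F,\ell}$ is exactly the freedom of choosing the quasi-free Hadamard state for each field, and possibly adding smooth symmetric terms allowed by the Hadamard condition. I would choose the states so that each $\Sigma_{\ell\ell}(0)$ equals a prescribed constant $c_0>0$. Because $\partial_k q_{k,\ell}=f$ with $f\ge0$ compactly supported, the integrand is a smooth function. Replacing $w_{\ell\ell}$ by $w_{\ell\ell}+\alpha_\ell\,\chi\otimes\chi$, with $\chi$ as in Step 2a and $\alpha_\ell$ a real constant, changes $\Sigma_{\ell\ell}(0)$ by an affine function of $\alpha_\ell$. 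The slope of this function is $\int f\,\chi\,(\Delta_R f\chi)$, which is nonzero for a suitable $\chi$. One can therefore tune $\alpha_\ell$ so that $\Sigma_{\ell\ell}(0)\ge 2c>0$, and arrange that the resulting matrix $\Sigma(0)$ is positive definite, which is the elliptic condition needed later.

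\textbf{Step 2: continuity in $\tilde u$.} By Lemma~\ref{Lem: 4.4} and Proposition~\ref{Prop: sigma is tame smooth}, $\sigma_\ell(\tilde u)=\tilde G^{(1)}_{k,\ell}(\tilde u;\cdot)$ depends smoothly on $\tilde u^{(2)}$. Step 2a supplies the bound $\|\tilde G^{(2)}_{k,\ell}(\tilde u)\|_0\le \mathrm{const}\,(1+\|\tilde u\|_2)$. The mean value theorem then gives
\[
|\Sigma_{ij}(\tilde u)-\Sigma_{ij}(0)|\le \mathrm{const}\,(1+2\epsilon)\,\|\tilde u\|_2 \le C\,(\|u_b\|_2+\|u\|_2)\le 2C\epsilon .
\]
Here I use the hypothesis $\|u_b\|_2\le\epsilon$, which is itself compatible with $\|\beta\|_2+\|\psi\|_2<\epsilon$ via a standard extension of the data. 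Choosing $\epsilon$ with $2C\epsilon< c$ and setting $\mathcal U:=\{u\in F_0 : \|u\|_2<\epsilon\}$, we get $\Sigma(\tilde u)\ge c$ (as a matrix, hence $[\sigma_\ell(\tilde u)]\ge c$ in the sense of the statement) uniformly for $u\in\mathcal U$, and $\|u\|_2<A=\epsilon$ holds by construction.

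\textbf{Main obstacle.} I expect Step 1 to be the hard part. One must check that the smooth modification of $w$ really corresponds to an admissible Hadamard state, or at least to an allowed renormalisation freedom, and that the slope $\int f\chi(\Delta_R f\chi)$ can be made nonzero. It is also necessary that the constant $C$ in Step 2 does not itself depend on $\alpha_\ell$ in a way that destroys the smallness. Since $C$ grows linearly with $\|\chi w\chi\|_{H^2\times H^2}$, I would fix $\alpha_\ell$ first and then pick $\epsilon$, which avoids this circularity.
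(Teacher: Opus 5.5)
Your proof has the same skeleton as the paper's. You make the coefficient positive at $\tilde u=0$ through the choice of $w$; there it is linear in $w$ and diagonal, because $\Delta_R$ and $w$ are. You then bound $\Sigma(\tilde u)-\Sigma(0)$ by a first-order Taylor/mean-value estimate, using the Step~2a bound $\|\tilde G^{(2)}_{k,\ell}(\tilde u)\|_0\le \mathrm{const}\,(1+\|\tilde u\|_2)$ together with $\|\tilde u\|_2\le 2\epsilon$. Reading the conclusion as uniform positive definiteness of the summed matrix $\Sigma(\tilde u)$ is the meaningful interpretation, for three reasons:
\begin{itemize}
\item $[\sigma_\ell(\tilde u)](v)\ge c$ cannot hold for all $v$, since it is linear in $v$;
\item the matrix of each single $\sigma_\ell(0)$ has rank one: only its $(\ell,\ell)$ entry, $\propto\Delta_{R,\ell}w_{\ell\ell}$, survives;
\item the parabolicity used in Step~3 and in Lemma~\ref{Lem: 4.10} only involves $\sum_\ell\sigma_\ell$.
\end{itemize}

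The genuine difference is the order of choices. The paper keeps $\epsilon$ given and asks for $w$ with $[\sigma_\ell(0)](v)\ge 2\epsilon C+c$, where $C>\sup_\lambda\|\sigma^{(1)}_\ell(\lambda\tilde u)\|_0$. But $\sigma^{(1)}_\ell$ is linear in $w$ exactly as $\sigma_\ell(0)$ is, so rescaling $w$ gains nothing and that step is circular; this is the issue you point out.

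You instead fix $w$ first and then shrink $\epsilon$. This is sound, but it proves a small-data version: the data must satisfy the hypothesis with $\epsilon$ below a threshold fixed by $w$, not with an arbitrary prescribed $\epsilon$ as the statement's quantifiers suggest. That version is what is actually used afterwards, where $A$ is taken small. Note also that the threshold is $4C\epsilon\le c$, not $2C\epsilon<c$, once the off-diagonal entries of the $2\times 2$ perturbation are accounted for.

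Step~1 needs one repair. Do not use the Step~2a cutoff: it is $\equiv 1$ on $\mathrm{supp} f$, so the slope becomes $\langle f,\Delta_{R,\ell}f\rangle$, a number you do not control. Perturb instead by $\alpha_\ell\,\theta\otimes\theta$, with $g=f\theta$ chosen so that $\langle g,\Delta_{R,\ell}g\rangle\neq 0$.
\begin{itemize}
\item Such a $g$ exists because $\langle g,\Delta_{R,\ell}g\rangle=\tfrac12\langle g,(\Delta_{R,\ell}+\Delta_{A,\ell})g\rangle$.
\item If this vanished for all $g\in C^\infty_0(\{f\neq 0\})$, polarization would give $(\Delta_{R,\ell}+\Delta_{A,\ell})g=0$ on $\{f\neq 0\}$.
\item That contradicts $P_{0,\ell}(\Delta_{R,\ell}+\Delta_{A,\ell})g=2g$.
\end{itemize}
Since the statement only requires $w\in C^\infty$, this closes Step~1. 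Whether the modified $w$ comes from a Hadamard state is not settled by the paper's ``we can choose a state'' either.
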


\begin{proof}
We fix $\ell = 1, 2$ and observe that, on account of Proposition \ref{Prop: sigma is tame smooth}, we can write 
\begin{equation}
    [\sigma_{\ell} (u)](v) = [\sigma_{\ell} (0)](v) + \int_{0}^1 d \lambda \, \frac{d}{d\lambda} [\sigma_{\ell}(\lambda(u + u_b))](v) \ge [\sigma_{\ell} (0)](v) - \sup_{\lambda \in [0,1]} ||[\sigma^{(1)}_{\ell} (\lambda \tilde{u})](v) (u+u_b)||_0,  
\end{equation}
where we denote by $\sigma_{\ell}^{(1)}$ the functional derivative of $[\sigma_{\ell}(\tilde{u})] (v)$ with respect to $\tilde{u}$. 
Since $$ [\sigma_{\ell} (0)] (v):= \frac{1}{2||f||_{L^1}} \int_{\mathcal{M}^2} d\mu_x d\mu_y \, \partial_k q_{k, \ell}(x) \sum_{m \in \{0, 1\}} \left[ ((v^{(2)} \Delta_R)^{m} f \otimes (v^{(2)} \Delta_R)^{1- m} f)  (w) \right]_{\ell \ell} (y,x),$$ is linear in $w$ and it cannot be identically zero for any $w$, we can choose a state such that $[\sigma_{\ell} (0)](v) \ge (2 \epsilon C + c) > 0$, where $C > \sup_{\lambda} ||\sigma^{(1)}_{\ell} (\lambda \tilde{u})||_0$. Thanks to the smoothness of $\sigma_{\ell}(\tilde{u})$, we can choose $u_b$ such that $||u_b||_2 \le (||\beta||_2 + ||\psi||_2) < \epsilon$ and consider a sufficiently small neighborhood of $0$, $\mathcal{U} \subset F_0$ such that $||u||_2 < \epsilon, \forall u \in \mathcal{U}$. Furthermore, since $\sigma_{\ell}^{(1)}(\tilde{u}) (v, v_1) = \tilde{G}_{k, \ell}^{(2)} (v, v_1)$, upon taking the supremum norm over $v, v_1$,  we have that 
\begin{flalign*}
    ||[\sigma^{(1)} (\lambda \tilde{u})] (u + u_b)||_0 &\le ||\sigma^{(1)}(\lambda \tilde{u})||_0 ||u_b + u||_0  \\ &\le ||\tilde{G}^{(2)}_{k, \ell}(\lambda \tilde{u})||_0 ||u_b + u||_0 \\ &\le C'(1+ ||u + u_b||_2) ||u_b + u||_0 \\ &\le C'' ||u_b + u||_2 \\ &\le 2 \epsilon C'', 
\end{flalign*}
for suitable constants $C', C''$ which depends on $\epsilon$. 
\end{proof}

In \textbf{Step 4}, in order to prove that the inverse of the linearized RG operator is a tame smooth map, we shall make an additional requirement on the diffusion coefficient, which is justified in view of the following proposition. 

\begin{proposition}
\label{Prop: requirement 2}
    Let $\epsilon' > 0$ and choose $u_b \in F$ compatible with the boundary conditions such that $||u_b||_3 \le A$, with $A > 0$ sufficiently small. If $[a,b] \subset \mathbb{R}_+$ is such that $(b-a)$ is sufficiently small, it holds that, for all $u \in \mathcal{U}$, $||u||_4 < A$, 
    \begin{equation}
        \label{Eq: estimate on K(Du)}
        ||K(Du)||_0 := \sup_{||v||_2 = 1} ||[K(Du)](v)||_0 < \epsilon', 
    \end{equation}
    where $D \in \{\partial_{\varphi_1}, \partial_{\varphi_2}, \partial_k\}$ and $$[K(Du)](v) := \frac{\sum_{\ell = 1, 2} [D(\sigma_{\ell}(u))] (v)}{\sum_{\ell = 1,2} [\sigma_{\ell}(u)](v)},$$ for any $v \in F_0$. 
\end{proposition}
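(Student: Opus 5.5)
The plan is to bound the numerator and denominator of $K(Du)$ separately: the numerator above by a quantity that can be made small, the denominator below by a fixed positive constant. For the denominator I will rely on Proposition \ref{Prop: sigma is positive-definite}. Since $||u_b||_3 \le A$ and $||u||_4 < A$ with $A$ small, the hypotheses there hold with $\epsilon = A$. Hence $\sum_{\ell}[\sigma_\ell(u_b+u)](v) \ge 2c > 0$ uniformly in $(\Phi,k)\in X\times[a,b]$, for $u \in \mathcal{U}$ and all admissible $v$. This reduces the claim to showing
\begin{equation*}
\sup_{||v||_2=1}\Big\|\sum_{\ell=1,2}[D(\sigma_\ell(u))](v)\Big\|_0 < 2c\,\epsilon'.
\end{equation*}

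For the numerator, I recall that $\sigma_\ell(u)$ depends on $(\Phi,k)$ only through $\tilde u^{(2)}$, because $\partial_k q_{k,\ell}$ is $k$-independent. By the chain rule, for $D\in\{\partial_{\varphi_1},\partial_{\varphi_2},\partial_k\}$,
\begin{equation*}
[D(\sigma_\ell(\tilde u))](v) = [\sigma^{(1)}_\ell(\tilde u)](v, D\tilde u) = \tilde G^{(2)}_{k,\ell}(\tilde u; v, D\tilde u),
\end{equation*}
where I use the identity $\sigma^{(n)}_\ell = \tilde G^{(n+1)}_{k,\ell}$ noted in the proof of Proposition \ref{Prop: sigma is tame smooth}. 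Step 2a of Lemma \ref{Lem: 4.4}, specifically \eqref{Eq: estimate on Gpkl} with $p=2$, then gives
\begin{equation*}
\|[D\sigma_\ell(\tilde u)](v)\|_0 \le \mathrm{const.}\,(1+||\tilde u||_2)\,||v||_2\,||D\tilde u||_2 .
\end{equation*}
Finally, $||D\tilde u||_2 \le ||\tilde u||_3 \le ||u_b||_3 + ||u||_4 < 2A$.

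It follows that $||K(Du)||_0 \le C(1+2A)\,2A/(2c)$, with $C$ independent of $u$. Choosing $A$ small enough makes this smaller than $\epsilon'$.

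The main obstacle is that $c$ itself depends on $A$ through Proposition \ref{Prop: sigma is positive-definite}, where the state is chosen so that $[\sigma_\ell(0)](v) \ge 2\epsilon C + c$. I must therefore check that the lower bound does not degenerate as $A \to 0$. To handle this, I fix $c$ first (by the choice of $w$) and only then shrink $A$; the state condition becomes weaker as $A$ decreases, so this order of choices is consistent.

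The smallness of $(b-a)$ is needed to realise $||u_b||_3 \le A$ together with the compatibility conditions. The $k$-derivatives of $u_b$ are controlled by the data $\psi,\beta$. On a short interval one can take $u_b(\Phi,k) = \psi(\Phi)$ corrected by a function built from $\beta - \psi|_{\partial X}$, and it then suffices to have $||\psi||_3$ small together with the compatibility condition $\beta(\cdot,a) = \psi|_{\partial X}$. The same shortness of the interval keeps the constants from Lemma \ref{Lem: 4.1}, which involve the time integration over $\operatorname{supp} f$, uniform.

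A secondary point is that the operator-norm convention $||v||_2 = 1$ in \eqref{Eq: estimate on K(Du)} matches the one used to define $||\tilde G^{(p)}_{k,\ell}||_0$ in Lemma \ref{Lem: 4.4}, so the bound from Step 2a applies directly.
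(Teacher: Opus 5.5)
Your argument is correct under the stated hypotheses, but it bounds the numerator differently from the paper. The denominator is handled identically, via Proposition~\ref{Prop: sigma is positive-definite}. The paper writes
\[
[D\sigma_\ell(u(\Phi,k))](v)=[D\sigma_\ell(u(\Phi,a))](v)+\int_a^k \partial_\chi[D\sigma_\ell(u(\Phi,\chi))](v)\,d\chi .
\]
It bounds the $k=a$ term by $\|u_b\|_3\le A$, using $u(\cdot,a)=0$ for $u\in F_0$. It bounds the integral by $(b-a)$ times the tame bound $C(1+\|u\|_4)$, which is where the smallness of $b-a$ enters, and this yields $\|K(Du)\|_0\lesssim c^{-1}\bigl(A+(b-a)(1+A)\bigr)$.

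You instead identify $[D\sigma_\ell(\tilde u)](v)=\tilde G^{(2)}_{k,\ell}(\tilde u;v,D\tilde u)$ and apply the multilinear bound \eqref{Eq: estimate on Gpkl} directly, with $\|D\tilde u\|_2\le\|\tilde u\|_3<2A$. This is shorter and makes the hypothesis on $b-a$ superfluous. It is also cleaner for $D=\partial_k$. There the paper's $k=a$ term also contains $\partial_k u^{(2)}(\Phi,a)$, which need not vanish for $u\in F_0$. Completing the paper's bound therefore requires exactly the smallness of $\|u\|_4$ that you use openly.

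The paper's structure, modelled on \cite{Dangelo_23}, aims to draw smallness from the data and a short scale interval rather than from $u$. With a single $A$ bounding both $u_b$ and $u$, as here, that gains nothing. Your order of choices (fix $w$, hence $c$, then shrink $A$) is right and is only implicit in the paper. It matters for the numerator as well, since the constant in \eqref{Eq: estimate on Gpkl} depends on $w$ through $\|\chi w\chi\|_{H^2\times H^2}$.

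Two minor points. First, drop your paragraph on the role of $b-a$. The time integration in Lemma~\ref{Lem: 4.1} runs over spacetime time on $\operatorname{supp} f$: the $a$ there is a spacetime instant fixed by $f$, not the initial scale. Moreover, shortening $[a,b]$ cannot make $\|u_b\|_3$ small, since $\partial_k u_b|_{\partial X\times[a,b]}=\partial_k\beta$ is fixed by the data. Since $\|u_b\|_3\le A$ is assumed anyway, nothing is lost. Second, the lower bound $\sum_\ell[\sigma_\ell(u)](v)\ge c$ cannot hold literally for all $v$, because the expression is linear in $v$. It is only meaningful as positivity of the coefficient matrix $\Sigma_{ij}(u)$. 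This defect is inherited from the paper, not a gap of yours.
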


\begin{proof}
    On account of Proposition \ref{Prop: sigma is positive-definite}, we have that $\sum_{\ell = 1,2} [\sigma_{\ell}(u)](v) \ge c > 0$ for some constant $c$. This implies that $$\frac{1}{\sum_{\ell = 1,2} [\sigma_{\ell}(u)](v)} < \frac{1}{c}, \, \, \, v \in F_0.$$ Note that for each $\ell = 1, 2$, we have that  
    $$[D(\sigma_{\ell}(u(\Phi, k))](v) = [D(\sigma_{\ell}(u(\Phi, a))](v) + \int_{a}^k \partial_{\chi} [D(\sigma_{\ell}(u(\Phi, \chi))](v) \, d\chi.$$
    The smoothness of both $\sigma_{\ell}(u)$ and its first functional derivative entail that $$||D(\sigma_{\ell}(u(\Phi, k))||_0 \le ||D(\sigma_{\ell}(u(\Phi, a))||_0 + (b-a) ||\sigma_{\ell}(u)||_2.$$
    Exploiting the fact that $\sigma_{\ell}(u)$ are also tame maps, we have $||\sigma_{\ell}||_2 \le (1 + ||u||_4) \le C(1+A)$. Additionally, the continuity of $\sigma_{\ell}(u)$ with respect to $u$ entails that $$|D(\sigma_{\ell}(u(\Phi, a))(v)| \le C ||D [u_b^{(2)}(\Phi, a)||_{\infty} < ||u_b||_3 \le A, $$ and, thus, 
    $$||K(Du)||_0 < \frac{C}{c} (A + (b-a) (1+A) \le \epsilon',$$ for sufficiently small $A$ and $b-a$. 
\end{proof}

We can now prove the following proposition. 

\begin{proposition}
    The linearization of the RG operator $L(u) v = \partial_k v - \sum_{\ell=1,2} [\sigma_{\ell}(u)](v)$ is tame smooth. 
\end{proposition}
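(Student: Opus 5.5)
The plan is to regard $L$ as a map $(u_b+\mathcal{U})\times F_0\to F$, $(u,v)\mapsto \partial_k v-\sum_{\ell=1,2}[\sigma_\ell(u)](v)$, and to prove smoothness and tameness separately for its two pieces, using that finite sums and compositions of tame smooth maps are tame smooth (Appendix \ref{Sec: Appendix A}). The first piece, $(u,v)\mapsto\partial_k v$, is a continuous linear map independent of $u$, and it satisfies $\|\partial_k v\|_n\le\|v\|_{n+1}$. It is therefore tame of degree $1$ with base $0$, and it is smooth because it is linear and continuous.

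For the second piece, the plan is to use the identity already noted in the proof of Proposition \ref{Prop: sigma is tame smooth}, namely $[\sigma_\ell(u)](v)=\tilde G^{(1)}_{k,\ell}(u;v)$. Thus $(u,v)\mapsto[\sigma_\ell(u)](v)$ is exactly the tangent map of $\tilde G_{k,\ell}$, and every higher derivative in $(u,v)$ is of the form $\tilde G^{(p)}_{k,\ell}(u;v,v_1,\dots,v_{p-1})$ or $\tilde G^{(p)}_{k,\ell}(u;v_1,\dots,v_p)$. Smoothness then follows directly from Part 1 of the proof of Lemma \ref{Lem: 4.4}, since the explicit formula for $\tilde G^{(n)}_{k,\ell}$ is finite for all $n$. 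This finiteness comes from the compact support of $f$ and $\partial_kq_{k,\ell}$ together with the smoothness of $w$.

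Tameness is the main step. I would apply the Fa\`a di Bruno argument of Part 2 of Lemma \ref{Lem: 4.4} to the bilinear-in-structure map $(u,v)\mapsto\tilde G^{(1)}_{k,\ell}(u;v)$. Since the dependence on $(\Phi,k)$ enters only through $u^{(2)}$ and $v^{(2)}$, each derivative $D^\alpha$ distributes over these arguments. This gives
\[
\|[\sigma_\ell(u)](v)\|_n\lesssim \sum_{m\le n}\sum_{p\le m+1}\|\tilde G^{(p)}_{k,\ell}(u)\|_0\,\|u^{(2)}\|_{m-p+2}^{\,p-1}\,\|v^{(2)}\|_{m-p+1}\cdots .
\]
I would then use the bound \eqref{Eq: estimate on Gpkl}, $\|\tilde G^{(p)}_{k,\ell}(u)\|_0\le\mathrm{const.}(1+\|u\|_2)$, together with the standing bound $\|u\|_4\le A$. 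Interpolation inequalities on $C^\infty(X\times[a,b])$ (Hamilton's $\|u\|_i\|v\|_j\lesssim\|u\|_{i+j-r}\|v\|_r+\|u\|_r\|v\|_{i+j-r}$) are needed to push all but one factor down to low norms. The target estimate is
\[
\|L(u)v\|_n\le C_n\big(\|v\|_{n+3}+\|u\|_{n+3}\|v\|_3\big),
\]
which is tame with base $3$ and degree $3$.

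I expect the main obstacle to be the mixed terms in which derivatives fall on both $u^{(2)}$ and $v^{(2)}$. There the interpolation must be done carefully so that only one high norm appears. The reason is that $\sigma_\ell(u)$ is not multiplicative in $v^{(2)}$, as remarked above, so the entries $\Sigma_{ij}(u)$ of \eqref{Eq: B(u)} must be treated as smooth functions of $u^{(2)}$ rather than as pointwise coefficients. I would first try to reduce to the scalar case by writing $[\sigma_\ell(u)](v)=\sum_{i,j}\Sigma^{(\ell)}_{ij}(u^{(2)})\,\partial_i\partial_jv$. Each coefficient $\Sigma^{(\ell)}_{ij}$ is a tame smooth function of $u$ by the same argument as Lemma \ref{Lem: 4.4}, and the product of tame maps is tame, which would avoid the general multilinear bookkeeping.
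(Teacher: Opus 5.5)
Your proposal is correct and follows essentially the paper's route: you split off $\partial_k v$, which is tame of degree $1$, and you reduce the $\sigma_\ell$ part to the tameness of $\tilde G_{k,\ell}$ and its derivatives (Lemma~\ref{Lem: 4.4} and Proposition~\ref{Prop: sigma is tame smooth}) via Leibniz/Fa\`a di Bruno and interpolation. Your target bound $\|L(u)v\|_n\le C_n(\|v\|_{n+3}+\|u\|_{n+3}\|v\|_3)$ is the correct tame form for a family linear in $v$, whereas the paper's displayed bound $\|\sigma_\ell(u)\|_n\|v\|_{n+2}$ pairs two high norms and only becomes a tame estimate after exactly the interpolation step you describe.
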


\begin{proof}
Via a combination of Leibnitz rule and an interpolating argument one obtains that 
\begin{flalign*}
    ||L(u) v||_n &\le ||v||_{n+1} + \sum_{\ell = 1,2} ||[\sigma_{\ell}(u)] (v)||_{n} \\&\le ||v||_{n+1} + \sum_{\ell = 1,2} ||\sigma_{\ell}(u)||_n ||v||_{n+2} 
\end{flalign*}
for some positive constant $C$. Since $\sigma_{\ell}(u)$ is tame smooth for each $\ell =1, 2$, see Proposition \ref{Prop: sigma is tame smooth}, and $L(u)$ is a linear combination of tame maps, $L: \mathcal{U} \times F_0 \rightarrow F$ is tame smooth. 
\end{proof}

\textbf{Step 3: The linearization of the RG operator is invertible}. Under the hypothesis that $\sigma_{\ell} > 0$ on $X \times [a, b]$, the linearized RG operator $L(u): F_0 \rightarrow F$ is a parabolic operator.

We have the following result -- see \cite{Dangelo_23}. 

\begin{proposition}[Existence and uniqueness of an inverse]
Let $u \in \mathcal{U}$ and denote by $L := L(u)$ the linearized RG operator as per Equation \eqref{Eq: linearized RG operator}. Under the assumption that $[\sigma_{\ell}(u)](v)$, $\ell = 1,2$, is strictly positive for every $u \in \mathcal{U}$, $v \in F_0$, there exists a unique inverse $E: F \rightarrow F_0$ compatible with the initial and boundary conditions $E(g)(\Phi, a) = 0$, $E(g) \vert_{\partial X \times [a,b]} = 0$, for all $g \in C^{\infty}_0(X \times [a, b])$, such that $$E(Lg) = L(E(g)) = g, \, \, \,  g \in F_0.$$ Furthermore, the inverse is continuous in the following sense: 
\begin{equation*}
    ||E(g)||_0 < C ||g||_0, \, \, \text{for some positive constant} \, \, C > 0.
\end{equation*}   
\end{proposition}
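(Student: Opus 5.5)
We will view $L(u)v=\partial_k v-\sum_{i,j=1,2}\Sigma_{ij}(u)(\partial_i\partial_j v)$, with $\Sigma_{ij}(u)$ as in Equation \eqref{Eq: B(u)}, as a linear second-order operator in the field variables $\Phi=(\varphi_1,\varphi_2)\in X$, with $k\in[a,b]$ playing the r\^ole of time. Its coefficients are smooth functions of $(\Phi,k)$ through $u^{(2)}(\Phi,k)$, since $u\in F$ and the entries of $\Delta^u_R$ depend smoothly on the constant matrix $u^{(2)}$ (Lemma \ref{Lem: 4.4} and Proposition \ref{Prop: sigma is tame smooth}). The first step is to turn the positivity hypothesis $[\sigma_\ell(u)](v)>0$ into uniform parabolicity of this scalar equation. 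Concretely, we want the symmetrised matrix $\Sigma^{s}(u):=\tfrac12(\Sigma+\Sigma^T)$ to satisfy $\xi^T\Sigma^{s}(u)(\Phi,k)\xi\ge c|\xi|^2$ on $X\times[a,b]$. We would obtain this by testing the hypothesis on $v$ which locally agree with $\tfrac12(\xi\cdot\Phi)^2$ times a cutoff, so that $v^{(2)}=\xi\xi^T$ near a point. Compactness of $X\times[a,b]$ together with the smallness statement of Proposition \ref{Prop: sigma is positive-definite} then makes the constant $c$ uniform. This step, passing from positivity of the functional $\sigma_\ell$ to ellipticity of the coefficient matrix, is the one I expect to be the main obstacle. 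If it does not close directly, we will fall back on the perturbative picture: at $u=0$ the matrix is diagonal and positive by the choice of $w$, and the off-diagonal corrections are $O(\epsilon)$ by the estimate in Proposition \ref{Prop: sigma is positive-definite}.

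Once uniform parabolicity is available, existence and uniqueness follow from standard linear parabolic theory on the cylinder $X\times[a,b]$. We pose $Lv=g$ with $v(\Phi,a)=0$ and $v|_{\partial X\times[a,b]}=0$, and use smooth coefficients (Ladyzhenskaya–Solonnikov–Ural'tseva, or Friedman's Schauder theory). This gives a unique solution $v=E(g)$ in H\"older classes. Two points need checking.
\begin{itemize}
\item Since $g\in C^\infty_0(X\times[a,b])$, it vanishes near the parabolic boundary, so the compatibility conditions of every order hold.
\item Parabolic bootstrapping in Schauder spaces then shows that $E(g)\in F_0$.
\end{itemize}
Uniqueness follows from the weak maximum principle, since the operator has no zeroth-order term. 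The identity $L(E(g))=g$ holds by construction. The identity $E(Lg)=g$ for $g\in F_0$ follows from uniqueness, because $g$ and $E(Lg)$ solve the same problem with the same data.

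For the continuity estimate $\|E(g)\|_0\le C\|g\|_0$, we use the maximum principle with a barrier. Set $w_\pm:=\pm v-(k-a)\|g\|_0$. Then $Lw_\pm=\pm g-\|g\|_0\le 0$, and $w_\pm\le 0$ on the parabolic boundary. The weak maximum principle for $\partial_k-\Sigma^{s}_{ij}\partial_i\partial_j$ then gives $w_\pm\le 0$ throughout. Hence
$$\|E(g)\|_0\le (b-a)\|g\|_0,$$
so the estimate holds with $C=b-a$, independent of $u\in\mathcal{U}$. This uniformity in $u$ is what the subsequent tame estimates of the inverse will need.
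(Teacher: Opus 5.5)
Your proposal is correct and follows the route the paper relies on. The paper gives no proof beyond asserting that positivity of $\sigma_\ell$ makes $L(u)$ parabolic and referring to \cite{Dangelo_23} for standard linear parabolic theory and the maximum principle; you supply those details, and your barrier gives the explicit constant $C=b-a$, independent of $u$.

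Three things are actually used and should be stated:
\begin{itemize}
\item Uniform positive-definiteness of the symmetric part of $(\Sigma_{ij}(u))$. Since $[\sigma_\ell(u)](v)$ is linear in $v$, the stated hypothesis cannot hold for every $v\in F_0$, so your perturbative fallback around the diagonal $\Sigma(0)$ is the right way to obtain it.
\item A smooth boundary $\partial X$, needed for the Schauder theory up to the boundary.
\item Data $g$ vanishing near the parabolic boundary. Every $v\in F_0$ satisfies $L(u)v=0$ on $\partial X\times\{a\}$, so $L(u)$ cannot be inverted on all of $F$.
\end{itemize}
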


\textbf{Step 4: The inverse is tame smooth}. Having assessed existence and uniqueness of the inverse, we need to show that the map $E$ is tame smooth, so to apply the Nash-Moser theorem in Hamilton formulation. To this end, we shall need two auxiliary results.

\begin{lemma}
    \label{Lem: 4.10}
    Assume that for each $\ell = 1,2$ $[\sigma_{\ell}(u)](v)$ is strictly positive for $v \in F_0$ and $u \in \mathcal{U} \subset F_0$ such that $||u||_4 \le A$. In addition, suppose that there exists a sufficiently small $\epsilon > 0$ for which $\sup_{D} \sup_{||v||_2 = 1} ||[K(Du)](v)||_0 < \epsilon$, where $[K(Du)](v) := \frac{\sum_{\ell = 1, 2} [D(\sigma_{\ell}(u))] (v)}{\sum_{\ell = 1,2} [\sigma_{\ell}(u)](v)}$ for $v \in F_0$, while $D \in \{\partial_{\varphi_1}, \partial_{\varphi_2}, \partial_k\}$, see Proposition \ref{Prop: requirement 2}. Then, the following estimate holds true: 
    \begin{equation}
        \label{Eq: estimate 4.10 1}
        ||v||_1 < C ( ||g||_1 + \sum_{\ell = 1, 2} ||\sigma_{\ell}(u)||_1 ||g||_0),
    \end{equation}
    for some positive constant $C > 0$.
\end{lemma}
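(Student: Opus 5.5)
The plan is to follow the strategy of the corresponding lemma in \cite{Dangelo_23}: differentiate the linearized equation and apply the zeroth-order continuity of the inverse $E$ to the derivatives of $v$. Throughout, $v=E(g)\in F_0$ solves
\begin{equation*}
L(u)v=\partial_k v-\sum_{\ell=1,2}[\sigma_\ell(u)](v)=g .
\end{equation*}
First, the existence-and-uniqueness proposition of Step~3 gives $\|v\|_0<C\|g\|_0$. It therefore suffices to control $\|Dv\|_0$ for $D\in\{\partial_{\varphi_1},\partial_{\varphi_2},\partial_k\}$.

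Next I would apply $D$ to the equation. Since the coefficients $\sigma_\ell(u)$ depend on $(\Phi,k)$ only through $u^{(2)}$, and since $D$ commutes with $\partial_k$ and with the field derivatives in $v^{(2)}$, this gives
\begin{equation*}
L(u)(Dv)=Dg+\sum_{\ell=1,2}[D(\sigma_\ell(u))](v).
\end{equation*}
The key step is to eliminate the second field-derivatives of $v$ hidden in the last term, which a plain maximum-principle bound cannot control. Using the quotient $K(Du)$ introduced in Proposition~\ref{Prop: requirement 2}, I would write
\begin{equation*}
\sum_{\ell}[D(\sigma_\ell(u))](v)=[K(Du)](v)\,\sum_\ell[\sigma_\ell(u)](v)=[K(Du)](v)\,(\partial_k v-g).
\end{equation*}
This turns the forcing term into one involving only first derivatives of $v$ and $g$, multiplied by a quantity of size $<\epsilon$ by hypothesis.

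I would then apply the $\|\cdot\|_0$-bound for $E$ to $Dv$. For this, $Dv$ has to be split as $Dv=E(\text{forcing})+\text{(lift of initial/boundary data of }Dv)$.
\begin{itemize}
\item On the initial slice $k=a$, the equation together with $v(\cdot,a)=0$ (hence $v^{(2)}(\cdot,a)=0$) gives $\partial_k v(\cdot,a)=g(\cdot,a)$.
\item Tangential field derivatives vanish on $\partial X$.
\item The genuinely new datum is the normal derivative of $v$ on $\partial X\times[a,b]$.
\end{itemize}
For the normal derivative I would use a standard barrier argument for the parabolic operator $L(u)$: compare $\pm v$ with $C\|g\|_0$ times a function vanishing on $\partial X$ with controlled slope. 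The slope constant depends on the size of the coefficients, which is where $\|\sigma_\ell(u)\|_1\|g\|_0$ enters. This yields
\begin{equation*}
\|Dv\|_0\le C\Big(\|Dg\|_0+\epsilon\|\partial_k v\|_0+\epsilon\|g\|_0+\sum_\ell\|\sigma_\ell(u)\|_1\|g\|_0\Big).
\end{equation*}
Summing over $D$ and using $\epsilon$ small, so that $C\epsilon<1/2$, the term $\epsilon\|\partial_k v\|_0$ is absorbed into the left-hand side. Adding the bound on $\|v\|_0$ then gives \eqref{Eq: estimate 4.10 1}.

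I expect the main obstacle to be the boundary step, together with justifying the application of $E$ to $Dv$, which does not lie in $F_0$. For the barrier I would first try the function $e^{\mu k}\,\mathrm{dist}(\Phi,\partial X)$-type profiles used in \cite{Dangelo_23}. One must also check that the matrix structure of $\sigma(u)$, via the components $\Sigma_{ij}(u)$, still satisfies a maximum principle; this is guaranteed by the positivity from Proposition~\ref{Prop: sigma is positive-definite}.
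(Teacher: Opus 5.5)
Your proposal is correct and follows the paper's own argument. You differentiate $L(u)v=g$, rewrite $\sum_\ell[D(\sigma_\ell(u))](v)$ as $[K(Du)](v)\,(\partial_k v-g)$, apply the zeroth-order bound to $Dv$, sum over $D$, and absorb the $\epsilon\|v\|_1$ term. The paper applies $\|E(\cdot)\|_0\le C\|\cdot\|_0$ to $Dv$ as if $Dv\in F_0$; your treatment of the initial slice ($\partial_k v(\cdot,a)=g(\cdot,a)$) and of the normal derivative on $\partial X$ via a barrier fills that skipped step. This requires two conditions, which the paper also assumes implicitly when it calls $L(u)$ parabolic: the positivity hypothesis must be read as uniform ellipticity of the matrix $(\Sigma_{ij}(u))$, and $\partial X$ must be regular enough to admit an exterior-ball barrier.
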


\begin{proof}
    Recall that the linearized RG operator is given by $L(u) v = \partial_k v - \sum_{\ell = 1, 2} [\sigma_{\ell} (u)] (v)$. The uniform continuity of the inverse $E$ entails that if $L(u) v = g$, then $$ ||E(g)||_0 = ||v||_0 < C ||g||_0.$$ Consider now $D \in \{\partial_{\varphi_1}, \partial_{\varphi_2}, \partial_k\}$ and apply the above estimate to $Dv$: $$||Dv||_0 < C ||L(u) Dv||_0. $$ Note that $$L(u) Dv = \partial_k D v - \sum_{\ell = 1,2} [\sigma_{\ell}(u)] (Dv) = D (L(u) v) - \sum_{\ell = 1, 2} [\sigma_{\ell}^{(1)}(u) (Du)] (v),$$ where we exploit the commutativity of $D$ with the partial derivatives, \textit{i.e.}, $(Dv)^{(2)} = D v^{(2)}$, and we denote by $[\sigma_{\ell}^{(1)}(u)]$ the functional derivative of the linear operator $\sigma_{\ell}(u)$ as a function of $u$. In particular, being $[\sigma_{\ell}(u)](v)$ strictly positive by hypothesis, one can multiply and divide by $\sum_{\ell = 1,2} [\sigma_{\ell}(u)] (v) := - L(u) v + \partial_k v$, thus obtaining 
    \begin{flalign*}
         L(u) Dv & = D(L(u) v) + \frac{\sum_{\ell = 1, 2} [\sigma_{\ell}^{(1)}(u) (Du)] (v)}{\sum_{\ell = 1,2} [\sigma_{\ell}(u)](v)} (L(u) v - \partial_k v). \\
        & = D(L(u) v) + [K(Du)](v) (L(u) v - \partial_k v).
    \end{flalign*}
    Therefore, the uniform continuity of $E$ and the fact that $\sum_{\ell = 1,2}[\sigma_{\ell}(u)](v) \ge c' > 0$ imply that 
    \begin{flalign*}
        ||Dv||_0 &< C ( ||D(L(u) v)||_0 +  ||K(Du)||_0 ( ||L(u) v||_0 + ||v||_1)) \\
        & < C (||Dg||_0 + ||K(Du)||_0 (||g||_0 + ||v||_1)).
    \end{flalign*}
    Thus, using the fact that $||u||_1 \le ||u||_4 \le A$, we have that 
    \begin{flalign*}
        ||v||_1 &\le \left( ||v||_0 + \sum_{D \in \{\partial_{\varphi_1}, \partial_{\varphi_2}, \partial_k\} } ||Dv||_0 \right) \\ & < C (||g||_1 + \sum_{\ell = 1,2} ||\sigma_{\ell}(u)||_1 ||g||_0 + \sup_{D} ||K(Du)||_0 ||v||_1), 
    \end{flalign*}
which entails that
    \begin{flalign*}
        (1 - C \sup_{D} ||K(Du)||_0) ||v||_1 < C (||g||_1 + \sum_{\ell = 1,2} ||\sigma_{\ell}(u)||_1 |g||_0).
    \end{flalign*}
    Since, by hypothesis, $\sup_{D} ||K(Du) ||_0 < \epsilon$, for $\epsilon > 0$ sufficiently small, we can guarantee that $ (1 - C \sup_{D} ||K(Du) ||_0)  > 0$ and, hence, the sought result descends: 
    \begin{equation*}
        ||v||_1 < C' (||g||_1 + \sum_{\ell = 1,2} ||\sigma_{\ell}(u)||_1 ||g||_0),
    \end{equation*}
    for a suitable positive constant $C' > 0$.    
\end{proof}


\begin{lemma}
    \label{Lem: 4.11}
    In the same setting of Lemma \ref{Lem: 4.10}, for every $n \in \mathbb{N}$, it holds that 
    \begin{equation}
        \label{Eq: lemma 4.11 1}
        ||v||_n < C (||g||_{n + 1} + \sum_{\ell = 1,2} ||\sigma_{\ell}(u)||_{n+1} ||g||_0),
    \end{equation}
    for some positive constant $C > 0$. 
\end{lemma}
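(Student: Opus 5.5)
We argue by induction on $n$, taking Lemma \ref{Lem: 4.10} as the base case $n=1$; the case $n=0$ is the continuity estimate $\|v\|_0 < C\|g\|_0$ of the inverse $E$. The inductive step repeats the commutator trick of Lemma \ref{Lem: 4.10} with a multi-index derivative in place of a single one. Fix a multi-index $\alpha$ with $|\alpha| = n$ in the variables $(\varphi_1,\varphi_2,k)$, and write $D^\alpha = D\, D^{\beta}$ with $|\beta| = n-1$ and $D \in \{\partial_{\varphi_1},\partial_{\varphi_2},\partial_k\}$. Applying $D^\beta$ to $L(u)v = g$ and using that $D^\beta$ commutes with $\partial_k$ and with the field derivatives entering $v^{(2)}$, we get
\begin{equation*}
L(u) D^\beta v = D^\beta g + \sum_{\ell=1,2}\sum_{0<\gamma\le\beta}\binom{\beta}{\gamma}\,[D^\gamma \sigma_\ell(u)](D^{\beta-\gamma}v).
\end{equation*}
Here $D^\gamma\sigma_\ell(u)$ is computed via the Fa\`a di Bruno expansion already used in Lemma \ref{Lem: 4.4}, in terms of the functional derivatives $\sigma_\ell^{(p)}(u) = \tilde G^{(p+1)}_{k,\ell}(u)$ evaluated on derivatives of $u^{(2)}$.

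We then apply the already established bound $\|w\|_1 < C(\|Lw\|_1 + \sum_\ell\|\sigma_\ell(u)\|_1\|Lw\|_0)$ of Lemma \ref{Lem: 4.10} to $w = D^\beta v$; this is legitimate since $D^\beta v$ again satisfies the homogeneous initial condition in $k$ and, for tangential derivatives, the boundary condition as well. The term with $\gamma$ of top order is handled exactly as in Lemma \ref{Lem: 4.10}: we write it as $[K(D^\gamma u)]$ times $(g-\partial_k v)$ and absorb it into the left-hand side using the smallness hypothesis on $K$. The lower-order terms are bounded by $\|\sigma_\ell(u)\|_{j}\,\|v\|_{n+1-j}$ with $1\le j\le n$, and the inductive hypothesis controls $\|v\|_{n+1-j}$ by $\|g\|_{n+2-j} + \sum_\ell \|\sigma_\ell(u)\|_{n+2-j}\|g\|_0$.

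To collapse the resulting products into the linear form of \eqref{Eq: lemma 4.11 1}, we use the standard interpolation inequalities on the compact set $X\times[a,b]$:
\begin{equation*}
\|\sigma\|_j\|g\|_{m-j}\lesssim \|\sigma\|_0\|g\|_m + \|\sigma\|_m\|g\|_0 .
\end{equation*}
We combine these with the uniform bound $\|\sigma_\ell(u)\|_0 \le C(1+A)$, which follows from the tameness in Proposition \ref{Prop: sigma is tame smooth} and from $\|u\|_4\le A$. This leaves a single top-order term $\|g\|_{n+1}$ and a single term $\sum_\ell\|\sigma_\ell(u)\|_{n+1}\|g\|_0$. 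The loss of one derivative, index $n+1$ rather than $n$, comes from the $\partial_k$ appearing when $L(u)v-g$ is substituted back, just as in the base case.

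The main obstacle is the normal derivatives at $\partial X$: $D^\beta v$ need not vanish there, so the estimate for $E$ does not directly apply. We would first treat purely tangential and $k$-derivatives. Normal derivatives we would then recover from the equation itself, solving for $\partial_{\varphi_i}^2 v$ through the strict positivity of $\sigma_\ell$ from Proposition \ref{Prop: sigma is positive-definite}, in the spirit of \cite{Dangelo_23}. Failing that, we would localize with cutoffs and accept the extra derivative loss, which is already built into the $n+1$ on the right-hand side.
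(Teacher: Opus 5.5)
Your Leibniz route (peeling off $n-1$ derivatives at once and feeding $D^\beta v$ into Lemma \ref{Lem: 4.10}) breaks down in the bookkeeping of the commutators. The operator $\sigma_\ell(u)$ acts on the field Hessian, $\sum_\ell[\sigma_\ell(u)](v)=\sum_{i,j}\Sigma_{ij}(u)\,\partial_i\partial_j v$; this is why the paper writes $\|[\sigma_\ell(u)](v)\|_n\le\|\sigma_\ell(u)\|_n\|v\|_{n+2}$. Hence $[D^\gamma\sigma_\ell(u)](D^{\beta-\gamma}v)$ contains $n+1-|\gamma|$ derivatives of $v$. The top-order terms in $v$ are therefore those with $|\gamma|=1$, not $\gamma=\beta$. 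They carry $n$ derivatives of $v$, and $n+1$ once you take the $\|\cdot\|_1$-norm that Lemma \ref{Lem: 4.10} requires, while the left-hand side is only $\|D^\beta v\|_1\le\|v\|_n$. As they stand, these terms can be neither absorbed nor handed to the inductive hypothesis, which only controls $\|v\|_{n-1}$. Even with your own count $\|\sigma_\ell(u)\|_j\|v\|_{n+1-j}$, the case $j=1$ gives $\|v\|_n$, so invoking the hypothesis there is circular. The term $\gamma=\beta$, to which you apply the $K$-trick, is the one in which $v$ carries the fewest derivatives, so it is not the obstruction. Moreover, for $|\beta|\ge 2$ the smallness hypothesis says nothing about $K(D^\beta u)$, since it is imposed only for $D\in\{\partial_{\varphi_1},\partial_{\varphi_2},\partial_k\}$. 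The step therefore already fails at $n=3$. To rescue it, you would have to apply the $K$-trick to every $|\gamma|=1$ commutator and re-expand $\sum_\ell[\sigma_\ell(u)](D^{\beta-\gamma}v)$ through the differentiated equation, which generates a further nest of commutators.

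The missing idea is to peel off a single derivative, as the paper does. For $w=Dv$ there is only one commutator, and the $K$-trick rewrites it completely: $L(u)Dv=Dg+[K(Du)](v)\,(g-\partial_k v)$, trading all of $v^{(2)}$ for $\partial_k v$ via $\sum_\ell[\sigma_\ell(u)](v)=\partial_k v-g$. Suppose the level-$(n-1)$ estimate applied to $Dv$ is lossless at top order, i.e.\ it bounds $\|Dv\|_{n-1}$ by $\|L(u)Dv\|_{n-1}$ plus lower-order products (Lemma \ref{Lem: 4.10} is of this form). Then the only top-order term is $\|K(Du)\|_0\|\partial_k v\|_{n-1}\le\|K(Du)\|_0\|v\|_n$, which is absorbed by the first-order smallness hypothesis. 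The remaining products are collapsed by interpolation and $\|\sigma_\ell(u)\|_0\le C$, much as you describe, and the stated bound with $\|g\|_{n+1}$ follows. If instead the lossy bound is fed in literally, as the paper's write-up does, one gets $\|K(Du)\|_0\|v\|_{n+1}$, which cannot be absorbed into $\|v\|_n$; so the induction should be run on the lossless form.

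Finally, your claim that $D^\beta v$ keeps the homogeneous initial condition is false once $\beta$ contains $\partial_k$: evaluating the equation at $k=a$ gives $\partial_k v(\cdot,a)=g(\cdot,a)$. Even the $k$- and tangential-derivative part of your plan therefore needs a $C^0$ bound with inhomogeneous parabolic-boundary data (e.g.\ a maximum principle), which $\|E(g)\|_0<C\|g\|_0$ does not supply. The paper passes over the same point when it applies that bound to $Dv$.
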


\begin{proof}
    The proof goes by induction with respect to $n \in \mathbb{N}$. The case $n = 1$ is a direct consequence of Lemma \ref{Lem: 4.10} and the standard property that $||\sigma_{\ell}(u)||_{1} \le ||\sigma_{\ell}(u)||_{2}$ -- which still holds true at level of operator seminorms. Assume now that Equation \eqref{Eq: lemma 4.11 1} is valid up to $n-1$. We shall prove that it holds true also for $n$. To this avail, apply Equation \eqref{Eq: lemma 4.11 1} to $Dv$ with $D \in \{\partial_{\varphi_1}, \partial_{\varphi_2}, \partial_k\}$: 
    $$ ||Dv||_{n-1} < C (||L(u) Dv||_n + \sum_{\ell = 1,2} ||\sigma_{\ell}(u)||_{n} ||L(u) Dv||_0), $$ where we used that $L(u) v = g$. We need to estimate separately the two contributions. 
    \begin{flalign*}
        ||L(u) Dv||_n &< ||g||_{n+1} + C(||[K(Du)](v) g||_n + ||[K(Du)](v) \partial_k v||_n) \\ &< ||g||_{n+1} + C \left( ||K(Du)||_0 ||g||_n + ||K(Du)||_0 ||\partial_k v||_n \right) \\ & < ||g||_{n+1} + C \left( \sum_{\ell=1,2} ||\sigma_{\ell}(u)||_1 ||g||_n + ||K(Du)||_0 ||\partial_k v||_n \right),
    \end{flalign*}
    where $[K(Du)](v)$ is as above. By an interpolation argument, we can write 
    \begin{flalign*}
        ||L(u) Dv||_n &< ||g||_{n+1} + C \sum_{\ell=1,2} \left( ||\sigma_{\ell}(u)||_0 ||g||_{n+1} + ||\sigma_{\ell}(u)||_{n+1} ||g||_{0} \right. \\ & \left. + ||K(Du)||_0 ||\partial_k v||_{n} + ||K(Du)||_n ||\partial_k v||_{0} \right) \\ & < C \left( (1 + \sum_{\ell = 1,2} ||\sigma_{\ell}(u)||_0) ||g||_{n+1} + \sum_{\ell = 1,2} ||\sigma_{\ell}(u)||_{n+1} ||g||_{0} \right. \\ & \left. + ||K(Du)||_0 ||\partial_k v||_{n} + ||K(Du)||_n ||v||_{1} \right).
    \end{flalign*}
    Using now Lemma \ref{Lem: 4.10}, we can conclude that, modulus constant factors,  
    \begin{flalign*}
        ||L(u) Dv||_n &< (1 + \sum_{\ell = 1,2} ||\sigma_{\ell}(u)||_0) ||g||_{n+1} + \sum_{\ell = 1,2} ||\sigma_{\ell}(u)||_{n+1} ||g||_{0}  \\ &  + ||K(Du)||_0 ||v||_{n+1} + ||K(Du)||_n (||g||_1 + \sum_{\ell = 1,2} ||\sigma_{\ell}(u)||_1 ||g||_0). 
    \end{flalign*}
In addition, Lemma \ref{Lem: 4.10} together with the Leibnitz rule entail that 
\begin{flalign*}
    ||L(u)Dv||_0 &< ||g||_1 + ||K(Du)||_0 (||g||_0 + ||v||_1) \\& < ||g||_1 + \sum_{\ell = 1,2} ||\sigma_{\ell}(u)||_1 ((1 + \sum_{\ell = 1,2} ||\sigma_{\ell}(u)||_1) ||g||_0 + ||g||_1). 
\end{flalign*}
Hence, combining the two inequalities we get
\begin{flalign*}
  (1 - C ||K(Du)||_0) ||v||_{n} &< C \left[(1 + \sum_{\ell = 1,2} ||\sigma_{\ell}(u)||_0) ||g||_{n+1} + \sum_{\ell = 1,2} ||\sigma_{\ell}(u)||_{n+1} ||g||_{0} \right. \\ & \left. + ||K(Du)||_n (||g||_1 + \sum_{\ell = 1,2} ||\sigma_{\ell}(u)||_1 ||g||_0) \right. + \sum_{\ell=1,2} ||\sigma_{\ell}(u)||_{n} \\& \left. \left(||g||_1 + \sum_{\ell = 1,2} ||\sigma_{\ell}(u)||_1 ( ||g||_0  + \sum_{\ell = 1,2} ||\sigma_{\ell}(u)||_1 ||g||_0 + ||g||_1) \right) \right].
\end{flalign*}
Since we have that $(1 - C ||K(Du)||_0) >0$, we can divide both sides of the inequality by this quantity, thus obtaining
\begin{flalign*}
    ||v||_{n} &< C \left[(1 + \sum_{\ell = 1,2} ||\sigma_{\ell}(u)||_0) ||g||_{n+1} + \sum_{\ell = 1,2} ||\sigma_{\ell}(u)||_{n+1} ||g||_{0} \right. \\ & \left. + ||K(Du)||_n (||g||_1 + \sum_{\ell = 1,2} ||\sigma_{\ell}(u)||_1 ||g||_0) \right. + \sum_{\ell=1,2} ||\sigma_{\ell}(u)||_{n} \\& \left. \left(||g||_1 + \sum_{\ell = 1,2} ||\sigma_{\ell}(u)||_1 ( ||g||_0  + \sum_{\ell = 1,2} ||\sigma_{\ell}(u)||_1 ||g||_0 + ||g||_1) \right) \right].
\end{flalign*}
Invoking an interpolating argument, we can write $$||g||_1 ||h||_n \le C (||g||_0 ||h||_{n+1} + ||g||_{n+1} ||h||_0).$$ Furthermore, we have that $\sum_{\ell =1,2} ||\sigma_{\ell}(u)||_1 < A$ and, hence, $||K(Du)||_n \le C \sum_{\ell = 1,2}||\sigma_{\ell}(u)||_{n+1}$. The sought result thus descends. 
\end{proof}

\begin{proposition}
Assume that $\sigma_{\ell}(u)$ is strictly positive for each $\ell =1, 2$ and $u \in \mathcal{U}$ such that $||u||_4 \le A$ for some positive constant $A$. In addition, suppose that $\sup_{D} ||K(Du)||_0 < \epsilon$, where $K(Du)(v)$ is as per Lemma \ref{Lem: 4.10}. Then, the inverse $E$ is tame smooth. 
\end{proposition}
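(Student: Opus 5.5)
The plan follows the single-field argument in \cite{Dangelo_23}. Lemma \ref{Lem: 4.11} already gives the tame estimate for $E(u)$, and the remaining work is to show that $E$ is smooth in $(u,g)$ with derivatives that are also tame. First I will take Lemma \ref{Lem: 4.11} together with the tameness of $\sigma_\ell$ from Proposition \ref{Prop: sigma is tame smooth}. Since $\|\sigma_\ell(u)\|_{n+1}\le C(1+\|u\|_{n+r})$ for some fixed $r$, this yields
\[
\|E(u)g\|_n \le C\bigl(\|g\|_{n+1} + (1+\|u\|_{n+r})\|g\|_0\bigr)
\]
uniformly for $u\in\mathcal{U}$ with $\|u\|_4\le A$. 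This is precisely the tame estimate for the map $(u,g)\mapsto E(u)g$ on $\mathcal{U}\times F$, with degree $r$ and base $b=0$ in the sense of Definition \ref{Def: tame map}.

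Next I will establish continuity and differentiability in $u$, using the standard resolvent-type identity. For $u_1,u_2\in\mathcal{U}$,
\[
E(u_1)g - E(u_2)g = E(u_1)\bigl(L(u_2)-L(u_1)\bigr)E(u_2)g = E(u_1)\textstyle\sum_\ell[\sigma_\ell(u_1)-\sigma_\ell(u_2)](E(u_2)g).
\]
This identity follows from the uniqueness of the inverse with the prescribed initial and boundary data. Combined with the smoothness of $u\mapsto\sigma_\ell(u)$, it gives continuity and the derivative formula
\[
DE(u)\{g,h\} = E(u)\textstyle\sum_\ell[\sigma_\ell^{(1)}(u)h](E(u)g).
\]
By induction, the $n$-th derivative $D^nE(u)\{g,h_1,\dots,h_n\}$ is a finite sum of alternating compositions of $E(u)$ with $\sigma_\ell^{(j)}(u)\{h_{i_1},\dots,h_{i_j}\}$. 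The latter are tame because they coincide with higher derivatives of $\tilde G_{k,\ell}$, by Lemma \ref{Lem: 4.4} and Proposition \ref{Prop: sigma is tame smooth}.

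Finally, since compositions and products of tame maps are tame, each term of $D^nE$ is tame, and so $E$ is tame smooth. The main obstacle I expect is the bookkeeping of the loss of derivatives. $E$ loses one derivative on $g$, and $\sigma_\ell$ loses two through $v^{(2)}$, so the composition loses finitely many, say $3$ per factor. I also need the estimates to stay uniform in $u$, which relies on the bound $\|u\|_4\le A$ and on the smallness of $K(Du)$ from Proposition \ref{Prop: requirement 2} being preserved on $\mathcal{U}$. I would handle this by shrinking $\mathcal{U}$ once and for all, and then applying the interpolation inequality $\|a\|_i\|b\|_j\le C(\|a\|_0\|b\|_{i+j}+\|a\|_{i+j}\|b\|_0)$, already used in Lemma \ref{Lem: 4.11}, to bring every product into linear tame form. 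I will also check that $E(u)$ maps into $F_0$, since the derivative formula requires the vanishing initial and boundary data to be preserved.
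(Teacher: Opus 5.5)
Your proposal is correct and takes the same route as the paper. You obtain the tame estimate for $E$ from Lemmas \ref{Lem: 4.10}--\ref{Lem: 4.11} together with the tameness of $u\mapsto\sigma_\ell(u)$ and closure of tame maps under composition, and you additionally make explicit, via $E(u_1)-E(u_2)=E(u_1)\bigl(L(u_2)-L(u_1)\bigr)E(u_2)$ and the resulting formula $DE(u)\{g,h\}=E(u)\sum_\ell[\sigma^{(1)}_\ell(u)h](E(u)g)$, the smoothness-in-$u$ step (Hamilton's standard argument for inverses of smooth tame families) that the paper's proof leaves implicit.
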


\begin{proof}
    The inverse $E$ being tame smooth means that, there exists a positive constant $C > 0$, such that $$||E(g)||_n \le C (1 + ||g||_{n+r})$$ for suitable $r, n \in \mathbb{N}$. We begin by noticing that the dependence of $L(u)$ on $u$ is only through the linear operators $\sigma_{\ell}(u)$, $\ell = 1,2$, which are tame smooth maps of $u$. Since the composition of tame smooth maps is tame, it suffices to study how $L(u)$ depends on $\sigma_{\ell}(u)$, $\ell = 1,2$. More precisely, Lemmata \ref{Lem: 4.10} and \ref{Lem: 4.11} provide estimates which allows us to control the seminorms of the derivative of $v$ with those of $g$. Hence, the desired result is a direct consequence of these two. 
\end{proof}

Thanks to the estimates in the previous proposition, a direct application of the Nash-Moser theorem \ref{Thm: Nash-Moser} entails the following. 

\begin{theorem}
    The RG operator admits a unique family of tame smooth local inverses and unique local solutions exist. 
\end{theorem}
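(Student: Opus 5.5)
The plan is to apply Theorem~\ref{Thm: Nash-Moser} to the map $R := \mathcal{RG}:\mathcal{U}\subset F_0\to F$ of Equation~\eqref{Eq: RG operator}, with $F_0$ and $F$ the tame Fréchet spaces of Step 0, and then translate local invertibility of $\mathcal{RG}$ into local existence and uniqueness for the initial, boundary-value problem~\eqref{Eq: BV problem}. Concretely, solving~\eqref{Eq: BV problem} with $\tilde u = u_b + u$ is equivalent to $\mathcal{RG}(u) = 0$. I will first fix the data $\psi,\beta$ and $u_b$ small as in Proposition~\ref{Prop: sigma is positive-definite}, so that $\mathcal{RG}(0) = \partial_k u_b - \sum_\ell G_{k,\ell}(u_b^{(2)})$ is small. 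I will also shrink $[a,b]$ as in Proposition~\ref{Prop: requirement 2}. These choices place $0$ in a neighbourhood $\mathcal{U}$ on which all the hypotheses used in Steps 1--4 hold: $\|u\|_4\le A$, $\sigma_\ell(u_b+u)\ge c>0$, and $\sup_D\|K(Du)\|_0<\epsilon$.

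Next I verify the hypotheses of Theorem~\ref{Thm: Nash-Moser} one at a time, citing the results already proved. Tameness and smoothness of $\mathcal{RG}$ on $\mathcal{U}$ follow from Lemma~\ref{Lem: 4.4} and the subsequent Proposition. The derivative is $D\mathcal{RG}(u)v = L(u_b+u)v$ by the definition~\eqref{Eq: linearized RG operator}, since $\sigma_\ell$ is exactly $\tilde G^{(1)}_{k,\ell}$. Tame smoothness of $L$ on $\mathcal{U}\times F_0$ is the Proposition in Step 2, built on Proposition~\ref{Prop: sigma is tame smooth}. Hypothesis 1 (existence and uniqueness of the inverse $E(u)$ of $L(u)$ for every $u\in\mathcal{U}$) is the parabolic existence and uniqueness Proposition of Step 3, which applies because of the positivity from Proposition~\ref{Prop: sigma is positive-definite}. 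Hypothesis 2 (tame smoothness of $E$ jointly in $(u,g)$) is the last Proposition of Step 4, fed by Lemmata~\ref{Lem: 4.10} and~\ref{Lem: 4.11}.

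One point needs care: Nash-Moser requires $E$ to be smooth tame as a map of $(u,g)$, not merely tame in $g$ for fixed $u$. For this I will use the standard identity
\[
DE(u)\{h,g\} = -E(u)\bigl(DL(u)\{h, E(u)g\}\bigr).
\]
Combined with tameness of $L$ and of $E$, this gives tame estimates for all higher derivatives by induction, exactly as in Hamilton's framework.

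Once all hypotheses are verified, Theorem~\ref{Thm: Nash-Moser} yields that $\mathcal{RG}$ restricted to a smaller neighbourhood of $0$ is a bijection onto a neighbourhood $\mathcal{V}$ of $\mathcal{RG}(0)$, with a tame smooth inverse $\mathcal{RG}^{-1}$. The family of local inverses is unique because two such inverses agree on an overlapping neighbourhood, again by local injectivity. It remains to show that $0\in\mathcal{V}$. I expect this to be the main obstacle, since Nash-Moser only gives a neighbourhood of $\mathcal{RG}(0)$, and $\mathcal{RG}(0)$ need not vanish. I will handle it as in \cite{Dangelo_23}, using the freedom in $u_b$ and the short interval $[a,b]$. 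First I take $u_b$ to be the formal Taylor solution near $k=a$, so that $\mathcal{RG}(0)$ vanishes to high order at $k=a$. Then, by shrinking $b-a$ (or rescaling $k$), I make $\|\mathcal{RG}(0)\|_n$ small in finitely many seminorms. The neighbourhood $\mathcal{V}$ delivered by Nash-Moser contains such a seminorm ball, because the constants in the tame estimates are uniform once $A$ and $\epsilon$ are fixed. Hence $0\in\mathcal{V}$, so $u := \mathcal{RG}^{-1}(0)$ exists and is unique in the neighbourhood, and $\tilde u = u_b+u$ is the unique local solution of~\eqref{Eq: BV problem}.
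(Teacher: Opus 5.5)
Up to the final step your proposal is the paper's argument. The paper's proof consists only of the remark that Steps 1--4 verify the hypotheses of Theorem~\ref{Thm: Nash-Moser}: tame smoothness of $\mathcal{RG}$ via Lemma~\ref{Lem: 4.4}, tame smoothness of $L$, the parabolic inverse $E$, and the estimates of Lemmata~\ref{Lem: 4.10} and~\ref{Lem: 4.11}. Your remark that $E$ must be smooth tame jointly in $(u,g)$, via $DE(u)\{h,g\}=-E(u)\,DL(u)\{h,E(u)g\}$, makes explicit what the paper's last Proposition leaves implicit.

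\textbf{The gap: getting $0$ into $\mathcal{V}$.} The step that does not go through as you justify it is the one you flag yourself, and which the paper does not address at all: putting $0$ into the neighbourhood $\mathcal{V}$ of $\mathcal{RG}(0)$.

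First, small data do not make $\mathcal{RG}(0)$ small. Indeed $G_{k,\ell}(0)=-\frac{1}{2\|f\|_{L^1}}\int_{\mathcal{M}}d\mu_x\,\partial_kq_{k,\ell}(x)\,w_{\ell\ell}(x,x)$ is a fixed, state-dependent constant. For the same reason your Taylor-type $u_b$ has $\partial_ku_b(\cdot,a)=\sum_\ell G_{k,\ell}(\psi^{(2)})$. So it cannot in general satisfy $\|u_b\|_2\le\epsilon$ in the seminorms of Step~0, which contain $\partial_k$, and that requirement has to be weakened to smallness of the field derivatives.

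More seriously, shrinking $b-a$ changes $F$, $F_0$ and $\mathcal{U}$, and Theorem~\ref{Thm: Nash-Moser} gives no information on the size of $\mathcal{V}$. Your claim that the tame constants are uniform in $b-a$ is false. The interpolation inequality $\|g\|_1\|h\|_n\le C(\|g\|_0\|h\|_{n+1}+\|g\|_{n+1}\|h\|_0)$ used in Lemma~\ref{Lem: 4.11}, tested on $g=h=k-a$ with $n=1$, forces $C\ge(1+b-a)/(2(b-a))$. Salvaging your route would mean tracking the $(b-a)$-dependence of every constant through a quantitative Nash--Moser iteration, and beating it with the order of vanishing of $\mathcal{RG}(0)$ at $k=a$. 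Rescaling $k$ instead just moves the degeneration into the diffusion coefficient $(b-a)\sigma_\ell$.

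\textbf{The standard fix.} The usual device for evolution equations avoids any uniformity by keeping $[a,b]$ fixed once Proposition~\ref{Prop: requirement 2} has been used.
\begin{itemize}
\item Choose $u_b$ via Borel's lemma, with the prescribed initial and boundary values, so that $\mathcal{RG}(0)$ vanishes to infinite order at $k=a$. This needs $\psi$ and $\beta$ to satisfy the compatibility conditions of all orders at $\partial X\times\{a\}$. That hypothesis is missing from both your proposal and the paper, though it is necessary for any solution in $C^\infty(X\times[a,b])$.
\item Put $g_\delta:=\theta\bigl((k-a)/\delta\bigr)\,\mathcal{RG}(0)$, with $\theta\in C^\infty(\mathbb{R})$ equal to $0$ on $(-\infty,1]$ and to $1$ on $[2,\infty)$.
\item Flatness at $k=a$ gives $g_\delta\to\mathcal{RG}(0)$ in every seminorm $\|\cdot\|_n$. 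Hence $g_\delta\in\mathcal{V}$ for $\delta$ small, and $u:=\mathcal{RG}^{-1}(g_\delta)\in\mathcal{U}$ exists.
\item Since $\mathcal{RG}(u)(\Phi,k)$ depends on $u$ only through $\partial_ku(\Phi,k)$ and $u^{(2)}(\Phi,k)$, the function $u_b+u$ solves~\eqref{Eq: BV problem} on $X\times[a,a+\delta]$. This is the local existence claimed in the theorem.
\end{itemize}
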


\subsubsection{Stochastic fields in the Martin-Siggia-Rose formalism}
\label{Sec: MSR fields}
Via the Martin-Siggia-Rose formalism, this case can be regarded as a specific instance of the two, mutually interacting scalar fields with the caveats discussed in Section \ref{Sec: pAQFT for Martin-Siggia-Rose fields}. Indeed, working in the Lorentzian setting, the MSR fields $\varphi, \tilde{\varphi} \in \mathcal{E}(\mathbb{R}^d)$ and, thus, we can introduce the field doublet 
$$\boldsymbol{\Phi} := \begin{pmatrix}
    \varphi \\
    \tilde{\varphi}
\end{pmatrix} \in \mathcal{E}(\mathbb{R}^d) \otimes \mathbb{R}^2.$$ The analysis of this scenario proceeds along the same lines as in the two-scalar-field case. Yet, a crucial distinction arises in the interpretation of the results, since  $\tilde{\varphi}$ plays the r\^ole of an \emph{auxiliary} field associated with the stochastic contribution to the equation of motion.

\noindent In this case, the fields are smooth sections of the trivial vector bundle $E := \mathbb{R}^d \times \mathbb{R}^2$ and we can set $\mathcal{E}:= L^2(\mathbb{R}^d) \otimes \mathbb{C}^2$. As the mass term of the theory is proportional to the product $\varphi \tilde{\varphi}$, it is convenient to choose an infrared cut off of the form: 
\begin{equation}
\label{Eq: Qk MSR 1}
Q_k(\chi, \tilde{\chi}) = - \int_{\mcM} d\mu_x \,  q_k(x) T(\chi(x) \tilde{\chi}(x)) = - \frac{1}{2} \text{Tr}_{\mathcal{E}} [\mathfrak{q}_k T(\boldsymbol{\chi} \otimes \tilde{\boldsymbol{\chi}})],
\end{equation}
where $\mathfrak{q}_k := q_k \mathbb{I}_{\mathbb{C}^2}$ and the factor $\frac{1}{2}$ takes into account the commutativity of the fields -- see Remark \ref{Rmk: commutativity}. The analysis then closely mimics that of Section \ref{Sec: (A) Two mutually interacting scalar fields}, hence we refrain from repeating the discussion.  

\subsection{(B) Self-interacting Dirac fields}
\label{Sec: (B) Self-interacting Dirac fields}
The last scenario we examine is that of a self-interacting Dirac field on a $d-$dimensional globally hyperbolic spacetime $\mcM$, see Section \ref{Sec: Thirring model}. In this context, the r\^ole of $E$ is played by the vector bundle $\mathcal{X} := D\mcM \oplus D^*\mcM$, where $D\mcM$ and $D^*\mcM$ are respectively the Dirac bundle and its dual. Assuming triviality of the underlying spin bundle, $D\mcM \simeq \mcM \times \Sigma_d$ and $D^*\mcM \simeq \mcM \times \Sigma_d^*$, where $\Sigma_d := \mathbb{C}^{N_d}$ with $N_d := 2^{\lfloor \frac{d}{2} \rfloor}$ is the \emph{spinor space}. Also in this case we can introduce a field doublet 
\begin{equation}
    \label{Eq: Psi}
    \Psi := \begin{pmatrix}
        \psi^{\rho} \\
        \bar{\psi}_{\rho}
    \end{pmatrix} \in \Gamma^{\infty}(\mathcal{X}) \simeq C^{\infty}(\mcM; \Sigma_d) \otimes \mathbb{C}^2 \simeq C^{\infty}(\mcM) \otimes \mathbb{C}^{2N_d}, 
\end{equation}
where $\psi^{\rho} \in C^{\infty}(\mcM; \Sigma_d)$, $\bar{\psi}_{\rho} \in C^{\infty}(\mcM; \Sigma^*_d)$ and we exploited the fact that $\Sigma_d$ and $\Sigma_d^*$ are canonically isomorphic as complex vector spaces. A few remarks are in due order: 
\begin{itemize}
    \item[\ding{104}] Observe that Equation \eqref{Eq: Psi} involves objects with a different Dirac index structure. To avoid this potential hurdle, one may alternatively choose the field doublet in the \emph{Nambu–Gorkov} basis, \textit{i.e.},
\begin{equation*}
    \Psi_{NG} := \begin{pmatrix}
        \psi^{\rho} \\
        (\bar{\psi}^T)^{\rho}
    \end{pmatrix} \in \Gamma^{\infty}(D\mcM \oplus D \mcM) \simeq C^{\infty}(\mcM) \otimes \mathbb{C}^{2N_d}.
\end{equation*} 
The advantage of such a choice of basis lies in the fact that it allows us to write the relevant matrix-valued operators as
\begin{equation*}
    A = \begin{pmatrix}
        A_{11} &   A_{12} \\
        A_{21} &  A_{22}
   \end{pmatrix} \in \text{End}(\mathbb{C}^{2N_d}),
\end{equation*}
where each entry is an endomorphism of $\mathbb{C}^{N_d}$. However, as the Wetterich equation entails a trace over the Dirac indices, {\it cf.}, Equation \eqref{Eq: RG flow Thirring}, this does not present an obstruction, and we therefore retain our original choice.
\item[\ding{104}] Notice, in addition, that the field doublet mixes two types of indices: the $\mathbb{C}^2$-vector indices of the field doublet and the Dirac indices of its spinorial components, here denoted by Greek letters. The distinct nature of the indices does not constitute a problem for the ensuing discussion, yet it is convenient to keep in mind that they carry an intrinsically different meaning. 
\end{itemize}

\begin{notation}
\label{Not: Dirac matrices}
Let us denote by $\mathcal{E} := \Gamma^{\infty}(\mathcal{X})$ and let $A: \mathcal{E} \rightarrow \mathcal{E}$ be a matrix-valued operator which in block-form reads: 
\begin{equation*}
    A := \begin{pmatrix}
        A_{11} & A_{12} \\
        A_{21} & A_{22} 
    \end{pmatrix},
\end{equation*}
where $A_{11}:= (A_{11})^{\rho'}{}_{\rho}: \Gamma(D\mcM) \rightarrow \Gamma(D\mcM)$, $A_{12}:=(A_{12})^{\rho \rho'}: \Gamma(D^*\mcM) \rightarrow \Gamma(D\mcM)$, $A_{21}:= (A_{21})_{\rho \rho'}: \Gamma(D\mcM) \rightarrow \Gamma(D^*\mcM)$ and $A_{22}:= (A_{22})_{\rho'}{}^{\rho}: \Gamma(D^*\mcM) \rightarrow \Gamma(D^*\mcM)$. Hence, we can formally define its Hilbert space trace on $\mathcal{E}$ as:
\begin{equation}
    \label{Eq: trace Dirac}
    \text{Tr}_{\mathcal{E}} (A) := \int_{\mcM} d\mu_x \, \text{tr}_{\text{Dirac}}[\text{tr}_{\mathbb{C}^2} A](x,x) = \int_{\mcM} d\mu_x \, \sum_{\ell = 1,2} \sum_{\rho=1}^{N_d}  (A_{\ell \ell})^{\rho}{}_{\rho} (x,x).
\end{equation}
Notice that the Dirac trace $\text{tr}_{\text{Dirac}}$ can only be defined for endomorphism of $D\mcM$ or $D^*\mcM$. 
This suggests that, since the final result should be a scalar also in the Dirac indices, the off-diagonal elements should not play any r\^ole in the derivation of the Renormalization Group flow equations. 
\end{notation}

Henceforth, let us consider an infrared mass regulator of the form 
\begin{flalign}
    \label{Eq: infrared regulator Dirac}
    \notag
    Q_k (\eta, \bar{\eta}) &:= - \text{Tr}_{\mathcal{E}} \left[ \mathfrak{q}_k T(\boldsymbol{\eta} \otimes_{\mathbb{C}^2} \overline{\boldsymbol{\eta}}) \right]
    \\& = -\int_{\mathcal{M}} d\mu_x \, \left[T(\eta^{\rho} [q_{k,\psi}]_{\rho}{}^{\rho'} \bar{\eta}_{\rho'})(x)- T(\bar{\eta}_{\rho'} [q_{k, \bar{\psi}}]^{\rho'}{}_{\rho} \eta^{\rho}) (x)\right], 
\end{flalign}
where we denote by 
\begin{equation*}
    \mathsf{q}_k = \begin{pmatrix}
     q_{k, \psi}  & 0 \\ 0 & -q_{k, \bar{\psi}}
    \end{pmatrix}, \quad (\mathfrak{q}_k)_{11} := q_{k, \psi} \, \text{and} \, (\mathfrak{q}_k)_{22} := -q_{k, \bar{\psi}}, 
\end{equation*} 
while $\boldsymbol{\eta} \in \Gamma^{\infty}(\mathcal{X})$ is the field doublet as per Equation \eqref{Eq: Psi}. In order to further simplify the ensuing analysis, we shall assume that each non-vanishing entry of $\mathfrak{q}_k$ is proportional to the identity in the Dirac indices and it depends linearly on $k$, that is:  
    \begin{equation}
    \label{Eq: ass on qk}
        (q_{k, \psi})^{\rho}{}_{\rho'}(x) := \epsilon k f(x) \mathbb{I}^{\rho}{}_{\rho'}, \, \, \epsilon > 0
    \end{equation}
 and, analogously for $q_{k, \bar{\psi}}$, where $f$ is the same adiabatic cut off that will be chosen in the LPA. Without loss of generality, we shall set $\epsilon \equiv 1$ and assume that $k \in \mathbb{R}_+$. We highlight that any contribution to Equation \eqref{Eq: ass on qk} constant in $k$ can be incorporated into the free mass of the theory. 
Let 
\begin{flalign}
    \label{Eq: current Dirac}
    \notag
    J(\eta, \bar{\eta}) &:= \int_{\mcM} d\mu_x \, \overline{\boldsymbol{j}}^T (x) \boldsymbol{\eta}(x) \\ &= \int_{\mathcal{M}} d\mu_x \left[ \bar{j}_{\rho}(x) \eta^{\rho}(x) + j^{\rho}(x) \bar{\eta}_{\rho}(x)\right], \, \, \boldsymbol{j} = \begin{pmatrix}
        j \\
        \bar{j}
    \end{pmatrix} \in \Gamma^{\infty}_0(\mathcal{X})
\end{flalign}
be an external current and
\begin{equation}
\label{Eq: P0 Thirring}
    \mathrm{P}_0 := \begin{pmatrix}
        P_{0, \psi} & 0 \\
        0 & P_{0, \bar{\psi}}
    \end{pmatrix},
\end{equation}
be the operator implementing the free equations of motion, where $P_{0,\psi} \equiv (P_{0,\psi})^{\rho}{}_{\rho'} := \slashed{D}^{\rho}{}_{\rho'}$, $P_{0,\bar{\psi}} \equiv (P_{0,\bar{\psi}})_{\rho}{}^{\rho'} := (\slashed{D}^*)_{\rho}{}^{\rho'}$. Furthermore, we define by
\begin{equation}
   Z_k (j, \bar{j}) = \omega(S(V)^{-1} \star S(V+ Q_k + J)), \quad  W_k(j, \bar{j}) = - i \log Z_k(j, \bar{j}),
\end{equation}
the generating functional of the regularised theory and its connected counterpart, respectively.   
By direct computation the Hessian of $W_k(j, \bar{j})$ reads in a convenient choice of basis:
\begin{equation}
\label{Eq: Wk Thirring}
[W_k^{(2)}(j, \bar{j})](x,y) = i
    \begin{pmatrix}
       \langle \eta(x) \cdot_T \bar{\eta}(y) \rangle - \psi(x)  \bar{\psi}(y) &  \underbrace{\langle \eta(x) \eta(y) \rangle}_0 - \psi(x) \psi(y) \\  \underbrace{\langle \bar{\eta}(x) \bar{\eta}(y)}_0 \rangle - \bar{\psi}(x) \bar{\psi}(y) & \langle\bar{\eta}(x) \cdot_T \eta(y)\rangle - \bar{\psi}(x)  \psi(y)
    \end{pmatrix}, 
\end{equation}
where $\psi, \bar{\psi}$ are the classical fields, \textit{i.e.}, $\frac{\delta W_k}{\delta j(x)} = \langle \bar{\eta} (x) \rangle =: \bar{\psi}(x)$ and $\frac{\delta W_k}{\delta \bar{j}(x)} = \langle \eta (x)\rangle =: \psi(x)$, and, for the sake of notational ease, we have left the Dirac indices understood. Note that the off-diagonal correlations $\langle \eta (x) \eta(y) \rangle$ and $\langle \bar{\eta}(x) \bar{\eta}(y) \rangle$ must be identically zero since such terms break the Fermionic number symmetry. As a consequence, the \textbf{Polchinski equation} involves only the diagonal contributions in Equation \eqref{Eq: Wk Thirring} and it reads:
\begin{flalign*}
    \partial_k W_k(j, \bar{j}) &= \langle \partial_k Q_k (\eta, \bar{\eta}) \rangle \\ & = -\int_{\mathcal{M}} d\mu_x \, \left[T(\eta^{\rho} [\partial_k q_{k,\psi}]_{\rho}{}^{\rho'} \bar{\eta}_{\rho'})(x)- T(\bar{\eta}_{\rho'} [\partial_k q_{k, \bar{\psi}}]^{\rho'}{}_{\rho} \eta^{\rho}) (x)\right] 
    \\ &= - \text{Tr}_{\mathcal{E}} \left( (\partial_k \mathfrak{q}_{k})^{\rho'}{}_{\rho}  \left[-i W_k^{(2)} + \Psi \otimes_{\mathbb{C}^2} \bar{\Psi} \right]^{\rho}{}_{\rho'}\right),  
\end{flalign*}
where $\langle \cdot \rangle$ denotes the weighted expectation value. Observe that the above equation is only formal, as a regularization procedure needs to be implemented for the time ordered products to yield a finite result. Thereby, denoting by $G_{k, \psi} := \langle \eta \cdot_T \bar{\eta} \rangle$ and $G_{k, \bar{\psi}} := \langle \bar{\eta} \cdot_T \eta \rangle$, the \textbf{Renormalization Group (RG) flow equation} reads: 
\begin{flalign}
    \label{Eq: RG flow Thirring}
    \notag
    \partial_k \Gamma_k &= \langle \partial_k Q_k \rangle - \partial_k Q_k \\ \notag & = - \text{Tr}_{\mathcal{E}} \left(\partial_k \mathfrak{q}_k : [\Gamma_k^{(2)} - \mathfrak{q}_k]^{-1} :_{\tilde{\mathrm{H}}_F} \right) \\&= - \int_{\mathcal{M}} d\mu_x \,  \left[ \partial_k (q_{k, \psi})^{\rho'}{}_{\rho}(x) : (G_{k, \psi})^{\rho}{}_{\rho'}(x,x) :_{\tilde{H}_{F,\psi}} - \partial_k (q_{k, \bar{\psi}})^{\rho'}{}_{\rho}(x) : (G_{k, \bar{\psi}})^{\rho}{}_{\rho'} (x,x) :_{\tilde{H}_{F, \bar{\psi}}} \right], 
\end{flalign}
where $\Gamma_k (\psi, \bar{\psi}) := W_k(\bar{j}_{\psi}, j_{\bar{\psi}}) - J(\psi, \bar{\psi}) - Q_k(\psi, \bar{\psi})$ is the average effective action, whilst $\tilde{H}_{F, \psi / \bar{\psi}}$ are the counterterms arising from the point splitting procedure. The {\bf effective potential} $U_k$ is then implicitly defined via the following relation: 
\begin{equation*}
    \Gamma_k^{(2)} (\psi, \bar{\psi}) - \mathfrak{q}_k := \mathrm{P}_0 + U_k^{(2)}(\psi, \bar{\psi}) =: \mathrm{P}_k, 
\end{equation*}
where the Dirac indices are left understood for notational brevity. In terms of the $\mathbb{C}^2-$matrix components, we can write: 
\begin{equation*}
\begin{cases}
       [\Gamma_k^{(2)}]_{1 1} - q_{k, \psi}=: P_{0, \psi} + [U_k^{(2)}]_{1 1}, \quad
       [\Gamma_k^{(2)}]_{2 2} + q_{k, \bar{\psi}} =: P_{0, \bar{\psi}} + [U_k^{(2)}]_{2 2} \\
       [\Gamma_k^{(2)}]_{1 2} =: [U_k^{(2)}]_{1 2}  \quad   [\Gamma_k^{(2)}]_{2 1} =: [U_k^{(2)}]_{2 1}. 
\end{cases}
\end{equation*}

\begin{notation}
Not to excessively burden the notation, we will refrain from indicating explicitly the Dirac indices, whenever we feel that confusion may not arise.  
\end{notation} 

In the local potential approximation, see Section \ref{Sec: (A) Two mutually interacting scalar fields}, the quantum wave operator $\mathrm{P}_k$ is still Green hyperbolic and, thus, it admits unique advanced and retarded Green operators: 
\begin{flalign*}
  \Delta^U_{A/R}: & \Gamma^{\infty}_0 (\mathcal{X}) \longrightarrow \Gamma^{\infty} (\mathcal{X}) \\
    \mathrm{P}_k \circ \Delta^U_{A/R} = \Delta^U_{A/R} \circ \mathrm{P}_k &= \text{id} \vert_{\Gamma^{\infty}_0(\mathcal{X})}, \quad \text{supp} \Delta^U_{A/R}(\Psi) \subseteq J^{\mp} (\text{supp}(\Psi)),
\end{flalign*}
for all $\Psi \in \Gamma^{\infty}_0(\mathcal{X})$. With a slight abuse of notation, we will still indicate by $\Delta^U_{A/R}$ the associated matrix-valued bi-distributions in $\mathcal{D}'(\mcM \times \mcM; \mathcal{X} \boxtimes \mathcal{X})$, \textit{i.e.}, 
\begin{equation}
\label{Eq: DeltaU Thirring}
   \Delta^U_{A/R} = \begin{pmatrix}
        [\Delta^U_{A/R}]_{1 1} &  [\Delta^U_{A/R}]_{1 2} \\
          [\Delta^U_{A/R}]_{2 1}  &  [\Delta^U_{A/R}]_{2 2} 
    \end{pmatrix}, 
\end{equation}
where each matrix entry is a bi-distribution lying in $\mathcal{D}'(\mcM \times \mcM; D\mcM \oplus D^* \mcM)$. By means of the advanced and retarded quantum M\o ller operators, we can rewrite Equation \eqref{Eq: RG flow Thirring} in terms of $U_k$ as follows: 
\begin{flalign}
    \label{Eq: Wetterich for Uk Thirring}
    \notag
    \partial_k U_k &= - \text{Tr}_{\mathcal{E}} \left( \partial_k \mathfrak{q}_k [(\mathbb{I} - \Delta^U_R U_k^{(2)}) w (\mathbb{I} - U_k^{(2)}\Delta^U_A) ] \right) \\ & -  \int_{\mcM} d\mu_x \, \partial_k q_{k, \psi}  [(\mathbb{I} - \Delta^U_R U_k^{(2)}) w (\mathbb{I} - U_k^{(2)}\Delta^U_A)]_{1 1} (x,x) - \partial_k q_{k, \bar{\psi}} [(\mathbb{I} - \Delta^U_R U_k^{(2)}) w (\mathbb{I} - U_k^{(2)}\Delta^U_A)]_{2 2}, 
\end{flalign}
where $w \in C^{\infty}(\mcM \times \mcM; \mathcal{X} \boxtimes \mathcal{X})$ reads:
\begin{equation*}
    w = \begin{pmatrix}
        w_{\psi} & 0 \\
        0 & w_{\bar{\psi}},
    \end{pmatrix}
\end{equation*}
where $w_{\psi}(x,y) := \Delta_{F, \psi}(x,y) - H_{F, \psi}(x,y)$ and $w_{\bar{\psi}}(x,y) := \Delta_{F, \bar{\psi}}(x,y) - H_{F, \bar{\psi}}(x,y)$. 

Under the same assumptions on the effective potential detailed in Section \ref{Sec: (A) Two mutually interacting scalar fields}, Equation \eqref{Eq: Wetterich for Uk Thirring} can be equivalently formulated as a mixed Cauchy problem for its integral kernel $u(\Psi, k)$: 
\begin{equation}
\label{Eq: BV problem Dirac}
    \begin{cases}
        \partial_k u = \text{Tr}_{\mathcal{E}} (\mathrm{G}_{k}(u^{(2)})) \\
        u(\Psi, a) = \alpha \\
        u \vert_{\partial X \times [a,b]} = \beta
    \end{cases}
\end{equation}
where $X \subseteq \mathbb{R}^2$ is a compact set encompassing all possible values of the field doublet $\Psi$, $k \in [a, b] \subseteq \mathbb{R}_+$, whilst $\alpha \in C^{\infty}(X)$ and $\beta \in C^{\infty}(\partial X \times [a,b])$ are assigned functions. Here $u^{(2)}$ is the Hessian matrix of $u$, which in a convenient choice of basis reads: 
\begin{equation*}
    u^{(2)} (\Psi, k) := \left(\frac{\partial}{\partial \Psi} \otimes_{\mathbb{C}^2} \frac{\partial}{\partial \bar{\Psi}}\right) u (\Psi, k) = \begin{pmatrix}
        \partial_{\psi} \partial_{\bar{\psi}} u & \partial_{\psi} \partial_{\psi} u \\   \partial_{\bar{\psi}} \partial_{\bar{\psi}} u &  \partial_{\bar{\psi}} \partial_{{\psi}} u
    \end{pmatrix}.
\end{equation*}
In the above equation, the functional $G_{k}(u^{(2)}) := \sum_{\ell = 1,2} G_{k, \ell}(u^{(2)})$, is defined as follows: 
\begin{equation}
\label{Eq: Gkl Dirac}
    G_{k, \ell} (u^{(2)}) := - \frac{1}{ ||f||_{L^1}} \int_{\mathcal{M}} d\mu_x \, \left((\partial_k \mathfrak{q}_{k})_{\ell \ell}(x) \left[ (\mathbb{I}- u^{(2)} \Delta^u_{R} f) \otimes (\mathbb{I}- u^{(2)} \Delta^u_{R} f) (w) \right]_{\ell \ell}\right)^{\rho}{}_{\rho} (x,x), 
\end{equation}
where $\Delta^u_{R} \in \mathcal{D}'(\mcM \times \mcM; \mathcal{X} \boxtimes \mathcal{X})$ is the retarded fundamental solution of the Green hyperbolic operator $\mathrm{P}_{0} + f u^{(2)}$. Barring the occurrence of Dirac indices, the analysis is then identical to the one in Section \ref{Sec: (A) Two mutually interacting scalar fields}, hence we omit it. 

\begin{remark}
All the operational seminorms should be taken both with respect to the Dirac indices and to the fictitious $\mathbb{C}^2$ indices corresponding to the components of the field doublet. If we assume the underlying spin bundle to be trivial, this poses no further complications in the analysis, as both the Dirac indices in $\mathbb{C}^{N_d} \otimes \mathbb{C}^2$ and the $\mathbb{C}^2-$matrix ones can be treated on equal footing. As a consequence, all the estimates derived in Section \ref{Sec: (A) Two mutually interacting scalar fields} for the two scalar fields are also true in the present setting. For instance, taking $u \in C^{\infty}(X \times [a,b])$, the associated Hessian matrix $u^{(2)} \in \text{End}(\mathbb{C}^{N_d} \otimes \mathbb{C}^2)$. Therefore, we can set 
\begin{equation*}
    ||u^{(2)}||_0 := \sup_{\Psi, k} |u^{(2)}(\Psi, k)|,
\end{equation*}
where $| \cdot |$ denotes the matrix norm on $\text{End}(\mathbb{C}^{N_d} \otimes \mathbb{C}^2)$. 
\end{remark}

As a consequence, also in this case, we can establish the following result, whose proof is omitted, being identical to the one discussed in the scalar scenario. 

\begin{theorem}
There exists a unique family of tame smooth local inverses to the RG operator for the Thirring model and, hence, unique local solutions exist. 
\end{theorem}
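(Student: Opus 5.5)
The plan is to reproduce, for the Thirring RG operator, the four-step scheme of Section~\ref{Sec: (A) Two mutually interacting scalar fields} and then invoke Theorem~\ref{Thm: Nash-Moser}. Working in the Fréchet space $F := C^{\infty}(X \times [a,b])$, where now $u$ takes values compatible with $\text{End}(\mathbb{C}^{N_d} \otimes \mathbb{C}^2)$-Hessians, we take $F_0$ as before (vanishing initial and boundary data). We decompose $\tilde u = u_b + u$ with $u_b$ matching $\alpha,\beta$ and $\|u_b\|_3 \le A$. We define
\[
\mathcal{RG}(u) := \partial_k (u+u_b) - \sum_{\ell=1,2} G_{k,\ell}(u^{(2)} + u_b^{(2)}),
\]
with $G_{k,\ell}$ as in Equation~\eqref{Eq: Gkl Dirac}.

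\textbf{Tameness of $\mathcal{RG}$.} The first step is the Dirac analogue of Lemma~\ref{Lem: 4.1}. On an ultra-static background, the Dirac operators are first order. The cleanest route is to square them: $\slashed{D}^*\slashed{D}$ is a Klein--Gordon-type operator by the Lichnerowicz--Weitzenb\"ock formula. Alternatively, one can use the energy estimate for the symmetric hyperbolic system $\mathrm{P}_0 + f u^{(2)}$ directly. Either way, the goal is an $H^2(\Sigma_t)$ bound on $h = \Delta^u_R g$ via the Duhamel formula and Gr\"onwall, of the form
\[
\|h\|^t_{H^2} \lesssim \|\varphi\|^t_{H^2}\, e^{C'\|\tilde u^{(2)}\|_\infty},
\]
with $\|\cdot\|_\infty$ now the operator norm on $\text{End}(\mathbb{C}^{N_d}\otimes\mathbb{C}^2)$. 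Because the spin bundle is trivial, all Dirac and $\mathbb{C}^2$ indices can be treated as finitely many components. The Sobolev embedding and the time-integrated bound (item ii) then go through verbatim. With this in hand, the proof of Lemma~\ref{Lem: 4.4} carries over unchanged: the multilinear expansion of $\tilde G^{(n)}_{k,\ell}$ only involves the additional Dirac trace in Equation~\eqref{Eq: Gkl Dirac}, a bounded finite sum. This yields $\|\tilde G_{k,\ell}(\tilde u)\|_n \le C(1+\|\tilde u\|_{n+3})$.

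\textbf{Linearization and its inverse.} We next define $L(u)v = \partial_k v - \sum_\ell[\sigma_\ell(u)](v)$, with $\sigma_\ell$ the first functional derivative of $G_{k,\ell}$. Tame smoothness follows as in Proposition~\ref{Prop: sigma is tame smooth}. The positivity of Proposition~\ref{Prop: sigma is positive-definite} and the smallness of $K(Du)$ in Proposition~\ref{Prop: requirement 2} are then obtained by the same perturbative argument around $u=0$. For that argument, one must choose the state (hence $w=\mathrm{diag}(w_\psi,w_{\bar\psi})$) so that $\sigma_\ell(0)$ is positive, keep $\|u_b\|_3$ small, and take $b-a$ small. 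Parabolicity of $L(u)$ then gives existence, uniqueness and $C^0$-continuity of the inverse $E$. Lemmata~\ref{Lem: 4.10} and~\ref{Lem: 4.11} give the tame estimates $\|E(g)\|_n \le C(\|g\|_{n+1} + \|\sigma(u)\|_{n+1}\|g\|_0)$, by the same induction on $n$ with $D$ ranging over derivatives in all field components and $k$. Theorem~\ref{Thm: Nash-Moser} then yields local invertibility and hence unique local solutions.

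\textbf{Main obstacle.} I expect the delicate points to be, first, the energy estimate for the first-order Dirac system with a matrix-valued zeroth-order perturbation that mixes $\psi$ and $\bar\psi$ through the off-diagonal blocks of $u^{(2)}$. Second, and more seriously, the positivity of $\sigma_\ell$. Here the relative minus sign of the $\bar\psi$ regulator $(\mathfrak{q}_k)_{22} = -q_{k,\bar\psi}$ could make the two contributions compete, so positivity of $\sum_\ell \sigma_\ell(0)$ is not automatic. I would first try to exploit the freedom in $w_\psi, w_{\bar\psi}$, i.e.\ in the choice of Hadamard state, choosing them so that each signed term is separately positive. Failing that, I would restrict to $\mathcal{U}$ where the $\psi$ and $\bar\psi$ sectors decouple at leading order.
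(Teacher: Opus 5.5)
Your overall route is the one the paper intends. The paper simply declares the Thirring proof identical to the scalar one, with Dirac indices treated as finitely many extra components. Your treatment of the analogues of Lemmata~\ref{Lem: 4.1} and~\ref{Lem: 4.4} (energy estimates for the first-order system, plus an extra finite Dirac trace) is fine. The gap is exactly the step you flag, and neither of your fallbacks closes it.

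First, ``$[\sigma_\ell(u)](v)\ge c>0$'' cannot be the operative hypothesis, because $\sigma_\ell(u)$ is linear in $v$. What Step~3 and the bound $\|E(g)\|_0\le C\|g\|_0$ require (Lemmata~\ref{Lem: 4.10} and~\ref{Lem: 4.11} rest on that bound) is something else. One needs $L(u)v=\partial_k v-\sum_{a,b}\Sigma_{ab}(u)\,\partial_a\partial_b v$ to have a positive-definite coefficient matrix in \emph{all} coordinates of $X$.

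Now compute at $u=0$. Both $\Delta_R$ and $w=\mathrm{diag}(w_\psi,w_{\bar\psi})$ are block diagonal. By Notation~\ref{Not: matrix} this gives $[(v^{(2)}\Delta_R f)\otimes\mathbb{I}](w)_{\ell\ell}=(v^{(2)})_{\ell\ell}\,(\Delta_R f\,w)_{\ell\ell}$, and similarly for the other slot. Hence $\sigma_\ell(0)$ only sees $(v^{(2)})_{\ell\ell}$. For the Thirring Hessian these are the \emph{mixed} derivatives $\partial_\psi\partial_{\bar\psi}v$ and $\partial_{\bar\psi}\partial_\psi v$. Thus $L(0)v=\partial_k v-\sum_{\rho,\rho'}C^{\rho'}{}_{\rho}\,\partial_{\psi^\rho}\partial_{\bar\psi_{\rho'}}v$, for some matrix $C$ depending on $w_\psi$, $w_{\bar\psi}$ and the signs in $\mathfrak{q}_k$.

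With $\xi,\bar\xi$ dual to $\psi,\bar\psi$, the symbol is $\sum C^{\rho'}{}_{\rho}\,\xi_\rho\,\bar\xi^{\rho'}$. It vanishes on $\{\bar\xi=0\}$ and is indefinite whenever $C\neq0$. With a single component, $\partial_\psi\partial_{\bar\psi}=\partial_s^2-\partial_t^2$ for $s=\psi+\bar\psi$, $t=\psi-\bar\psi$. So $L(0)$ is a backward heat operator in one direction, whatever the sign of $C$.

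No choice of Hadamard state repairs this: the relative minus sign in $(\mathfrak{q}_k)_{22}$ only changes $C$, not the type of the operator. Shrinking $\mathcal{U}$ does not help either. The perturbative argument of Proposition~\ref{Prop: sigma is positive-definite} needs strict positivity at $u=0$, and indefiniteness is an open condition. Your ``decoupled sectors'' fallback is the worst case. A positive-definite form needs diagonal blocks that dominate the off-diagonal one. Those diagonal blocks are the coefficients of $\partial_\psi\partial_\psi$ and $\partial_{\bar\psi}\partial_{\bar\psi}$, which vanish (or are small) precisely when the sectors (nearly) decouple.

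Consequently hypothesis~1 of Theorem~\ref{Thm: Nash-Moser} is not established. Step~3, Lemmata~\ref{Lem: 4.10}--\ref{Lem: 4.11} and the tame estimate for $E$ all presuppose a well-posed parabolic problem. That problem does not exist as long as $\psi$ and $\bar\psi$ are independent coordinates on $X$. The missing ingredient is a genuine source of ellipticity. Two options:
\begin{itemize}
\item Impose the Dirac-adjoint relation $\bar\psi=\psi^\dagger\gamma^0$ before the PDE analysis. Then $X\subset\mathbb{C}^{N_d}$ and the mixed operator becomes $\sum_{\rho,\tau}B_{\rho\tau}\,\partial_{\psi^\rho}\partial_{\overline{\psi^\tau}}$, with $B$ built from $C$ and $\gamma^0$. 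You would then have to prove that a Hadamard state can be chosen with $B$ Hermitian positive definite. This is a real spinorial condition and not automatic, since $\gamma^0$ is indefinite.
\item Find an invertibility argument for $L(u)$ that is not based on parabolicity.
\end{itemize}
The paper does not supply this step either: it only asserts that the scalar proof carries over verbatim. So the point you identified is left open there too.
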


\section*{Acknowledgements}
\addcontentsline{toc}{section}{Acknowledgements}

BC is grateful to Claudio Dappiaggi, Nicolò Drago and Nicola Pinamonti for enlightening discussions on an early version of the project. The work of BC has been supported by a fellowship of the University of Pavia and BC also acknowledges the support of the INFN --Sezione di Pavia and of Gruppo Nazionale di Fisica Matematica, part of INdAM. The work of BC is in part supported by "Progetto Giovani GNFM 2025" under the project "Hadamard states for linearized Yang-Mills theories" fostered by Gruppo Nazionale di Fisica Matematica -- INdAM. 





\appendix

\section{Tame smooth maps}\label{Sec: Appendix A}
In this Appendix, we recall some relevant notions, useful to establish local existence of solutions to the RG flow equations via the Nash-Moser theorem -- see Sections \ref{Sec: (A) Two mutually interacting scalar fields}, \ref{Sec: MSR fields} and \ref{Sec: (B) Self-interacting Dirac fields}. We refer the reader to \cite{Moser1966a, Moser1966b, Nash1956} \cite[Part II]{Hamilton_1982} for a more comprehensive account of these topics.

Let $F$ be a vector space. We call \emph{seminorm} on $F$ a function $||\cdot||: F \to \mathbb{R}$ which satisfies the following requirements: for all $f,g \in F$ and for all $c \in \mathbb{R}$,  
\begin{itemize}
    \item[1.] $||f|| \ge 0$,
    \item[2.] $||f+g|| \le ||f|| + ||g||$,
    \item[3.] $||cf|| = |c| ||f||$.
\end{itemize}
Let $I$ be an index set, possibly countable. The family of seminorms $\{||\cdot ||_i\}_{i \in I}$ uniquely a defines a topology: this is coarsest topology on $F$ such that all seminorms are continuous, with a base of neighbourhoods of $0$ given by finite intersections of sets of the form $\{ f \in F \, | \, ||f||_i < \epsilon\}$. More precisely, given a sequence of elements in $F$, $\{f_j\}_{j \in \mathbb{N}} \subset F$, and $f \in F$ we say that $f_j$ converges to $f$ if $||f_j - f||_i \rightarrow 0$, for all $i \in I$. If $F$ is endowed with such a collection of seminorm, we call $F$ a \emph{locally convex topological vector space}. The topology on $F$ is said to be: 
\begin{itemize}
    \item[\ding{104}] \emph{Hausdorff}, if $||f||_n = 0$ for all $n \in \mathbb{N}$ implies that $f=0$, 
    \item[\ding{104}] \emph{metrizable}, if the family of seminorms $\{||\cdot||_n\}_{n \in \mathbb{N}}$ is countable.  
\end{itemize}
In addition, $F$ is \emph{complete} is every Cauchy sequence in $F$ is convergent, in the sense clarified above. 

\begin{definition}
\label{Def: Frechet space}
    A topological vector space $F$ that is locally convex, Hausdorff, metrizable and complete is called \emph{Fréchet space}. If the family of seminorms $\{||\cdot||_n\}_{n \in \mathbb{N}}$ is increasing in strength, \textit{i.e.}, $||f||_n \le ||f||_{n+1}$, for any $n \in \mathbb{N}$, $F$ is said to be a \emph{graded Fréchet space}. 
\end{definition}

\noindent Let $F$ be a graded Fréchet space as per Definition \ref{Def: Frechet space} and let $B$ be some Banach space. Denote by $$\mathcal{S}_{\exp}(\mathbb{N},B) := \{ (x_j)_{j \in \mathbb{N}} \subset B \, | \, \exists \alpha > 0 \, \text{such that} \, \sup_{j} e^{\alpha j} ||x_j||_B < \infty \},$$ the space of exponentially decreasing sequences on $B$. $F$ is said to be \emph{tame} if there exist two linear maps $L: F \to \mathcal{S}_{\exp}(\mathbb{N}, B)$ and $M: \mathcal{S}_{\exp}(\mathbb{N}, B) \to F$ such that $ML: F \to F$ is the identity, namely, 
\begin{equation}
\label{Eq: Tame space}
    F \mathrel{\mathop{\longrightarrow}^{L}} \mathcal{S}_{\exp}(\mathbb{N}, B) \mathrel{\mathop{\longrightarrow}^{M}} F.
\end{equation} 
Therefore, we can formulate the following definition. 

\begin{definition}
\label{Def: tame map}
    Consider two graded Fréchet spaces $F$ and $G$ and let $\mathcal{U} \subset F$ be an open subset of $F$. Suppose, additionally, that we have a map $P: \mathcal{U} \rightarrow G$. Then, $P$ is \emph{tame} of degree $r$ and base $b$ if it is continuous and if, for all $f$ in a neighbourhood of $f_0 \in \mathcal{U}$ and for all $n \ge b$, there exists a constant $C \equiv C(n) > 0$ such that
\begin{equation}
    ||P(f)||_n \le C( 1 + ||f||_{n+r}).
\end{equation}
\end{definition}

\end{document}